\documentclass[pdflatex,sn-mathphys-num]{sn-jnl}
 

\usepackage{lmodern}
\usepackage{graphicx}%
\usepackage{multirow}%
\usepackage{amsmath,amssymb,amsfonts}%
\usepackage{amsthm}%
\usepackage{mathrsfs}%
\usepackage[title]{appendix}%
\usepackage[dvipsnames]{xcolor}%
\usepackage{textcomp}%
\usepackage{manyfoot}%
\usepackage{booktabs}%
\usepackage{algorithm}%
\usepackage{algorithmicx}%
\usepackage{algpseudocode}%
\usepackage{listings}%
\usepackage{bm}

\usepackage{anyfontsize}
\usepackage{color}
 \usepackage{diagbox}
\usepackage{enumitem}
\usepackage{tablefootnote}
 \usepackage{amsmath}
 \usepackage{tikz}

\DeclareRobustCommand{\VAN}[3]{#2}




\newtheorem{theorem}{Theorem}

\newtheorem{example}{Example}%
\newtheorem{remark}{Remark}%
\newtheorem{definition}{Definition}%
 \newtheorem{fact}{Fact}
 \newtheorem{lemma}{Lemma}%
  \newtheorem{corollary}{Corollary}%

\raggedbottom

\newcommand{\lr}[1]{\langle #1 \rangle}

\newcommand{\DEL}{\mathsf{DEL}}
\newcommand{\X}{\mathsf{X}}
\newcommand{\Y}{\mathsf{Y}}
\newcommand{\T}{\mathcal{T}}
\newcommand{\bR}{\mathbf{R}}

\newcommand{\pt}{P{\bm t}}

\newcommand{\bP}{\ensuremath{\mathbf{P}}}

\newcommand{\bI}{\mathbf{I}}
\newcommand{\bD}{\mathbf{D}}

\newcommand{\term}{\mathsf{Term}}
\newcommand{\ELCR}{\mathsf{ELCR}}
\renewcommand{\L}{\mathcal{L}}
\newcommand{\LB}{\L_{\mathsf{B}}}
\newcommand{\weg}[1]{}



\allowdisplaybreaks[4]

\tikzset{global scale/.style={
    scale=#1,
    every node/.append style={scale=#1}
  }
}

\begin{document}

\title{
 Reasoning under uncertainty in the game of Cops and Robbers}

 \author[1,2]{\fnm{Dazhu} \sur{Li}}

\author[3]{\fnm{Sujata} \sur{Ghosh}}

\author[4,5]{\fnm{Fenrong} \sur{Liu}}

\affil[1]{\orgdiv{Institute of Philosophy}, \orgname{Chinese Academy of Sciences}, \orgaddress{\city{Beijing},
\country{China}}}

\affil[2]{\orgdiv{Department of Philosophy}, \orgname{University of Chinese Academy of Sciences}, \orgaddress{\city{Beijing},
\country{China}}}

\affil[3]{\orgdiv{Indian Statistical Institute}, \orgaddress{\city{Chennai},
\country{India}}}

\affil[4]{\orgdiv{The Tsinghua-UvA JRC for Logic, Department of Philosophy}, \orgname{Tsinghua University}, \orgaddress{\city{Beijing},
\country{China}}}

\affil[5]{\orgdiv{Institute for Logic, Language and Computation}, \orgname{University of Amsterdam}, \orgaddress{\city{Amsterdam},
\country{The Netherlands}}}


\abstract{
The game of  Cops and  Robbers is an important model for studying  computational queries in pursuit-evasion environments, among others. As recent logical explorations have shown, its structure exhibits appealing analogies with modal logic. In this paper, we enrich the game with a setting in which players may have imperfect information. We propose a new formal framework,  Epistemic Logic of Cops and Robbers $(\ELCR)$, to make the core notions of the game precise, for instance, players' positions, observational power and inference. Applying $\ELCR$ to analyze the game, we obtain an automated way to track interactions between players and characterize their information updates during the game. The update mechanism is defined by a novel dynamic operator, and we compare it with some relevant paradigms from the game and logic perspectives. We study various properties of  
$\ELCR$ including axiomatization and decidability. To our knowledge, this is the first attempt to explore these games from a formal point of view where (partial) information available to players is taken into account.}

\keywords{Cops and Robbers, Hide and Seek, Imperfect information, Observation power, Knowledge updates, Dynamic-epistemic logic, Axiomatization, Decidability}



\maketitle

\section{Introduction}\label{sec:introduction}
 
 Search missions and pursuit-evasion environments have been investigated in details in the study of robotic systems. Such pursuit-evasion problems, which can also be viewed as adversarial search problems, can be considered as strategizing problems in pursuit-evasion games or their  multi-agent counterpart, {\em Cops and Robbers}. The game of Cops and Robbers refers to a family of pursuit-evasion games played on graphs, where the pursuer  (or the cops) must develop optimal responses or search strategies against some worst-case adversary, the evader (or the robbers).
The game has a deep root in computer science and serves as an important testbed for studying algorithms and computational complexity  in search environments \cite{cop-robber,cop-robber-book}. 

In recent years, logicians have studied the Hide and Seek game, which shares a similar pursuit-evasion structure, with a focus on modeling the dynamic interaction between players  \cite{original-sg,graphgame}. A two-dimensional modal logic for the game is proposed in \cite{graphgame}, and its further properties and extensions are explored in  \cite{lhs, LHS-journal, Chenqian2023, hlhs-axiomatization}. However, these logical models assume that players have perfect information, which limits the resulting frameworks to tools for reasoning solely about properties of graph structures. In this paper, we shift to an imperfect information setting, examining how agents reason and play the game under uncertainty due to their limited sight or observational power.  So, the game of Cops and Robbers studied in the paper can be seen as an imperfect information version of Hide and Seek explored in \cite{lhs, LHS-journal, Chenqian2023, hlhs-axiomatization}. Our focus is on how they use partial information available to them during play. We propose a formal framework to support the study of these interactions. 
 From a practical viewpoint,  graph games have been used extensively to model reachability problems, social networks, and search problems. Our framework provides a formal language to precisely describe and study a wide range of such issues. This formal study can facilitate the autonomous agent-building efforts with a better know-how in terms of information available to these agents. The corresponding information updates  during gameplay help address adversarial search problems. 

To illustrate the game’s features, let us first consider a specific example.
For simplicity, the game considered in this context is played by two players, a Cop and a Robber, on a {\em finite directed} graph in which every vertex has successors, an assumption made for ease of presentation.\footnote{The game and logic introduced in this paper can be generalized to a setting with more players.}

\vspace{1mm}

\begin{example}\label{ex1}
In the graph below, Cop $\X$ (female) is at $0$, and Robber $\Y$ (male) is at $4$. They know the graph structure and their own positions. A player can see the other if they are at the same position or at a vertex reachable by an arrow in either direction.
Thus, $\X$ and $\Y$ do not know each other's exact positions. However, given the graph structure and their observational range, $\X$ knows that $\Y$ must be at $2$, $3$, or $4$, while $\Y$ knows that $\X$ must be at $0$ or $1$.

  \vspace{1mm}
  
\begin{center}
\begin{tikzpicture}
\node(0)[circle,draw,inner sep=0pt,minimum size=5mm] at (0,0)[label=above:$\X$] {$0$};
\node(1)[circle,draw,inner sep=0pt,minimum size=5mm] at (1.5,.7) {$1$};
\node(2)[circle,draw,inner sep=0pt,minimum size=5mm] at (3,.7) {$2$};
\node(3) [circle,draw,inner sep=0pt,minimum size=5mm] at (4.5,0){$3$};
\node(4) [circle,draw,inner sep=0pt,minimum size=5mm] at (3,-.7)[label=right:$\Y$]{$4$};
\node(5) [circle,draw,inner sep=0pt,minimum size=5mm] at (1.5,-.7){$5$};
 
\draw[->] (0) to  (1);
\draw[->] (1) to  (2);
\draw[->](2) to  (3);
\draw[->](3) to  (4);
\draw[->](4) to  (5);
\draw[->](5) to  (0);
\draw[<->](2) to  (4);
\draw[->](3) to  [in=330, out=30,looseness=4] (3);
\end{tikzpicture}
\end{center}

  \vspace{1mm}
  
\noindent A player whose turn it is has to move along an arrow. Let  $\X$ act first. She can only move to $1$, and after this, she knows that $\Y$ is not at $2$, otherwise, she would see him directly. So, $\X$ knows that $\Y$ must be at $3$ or $4$. For $\Y$, he knows after her move that $\X$ is not at $2$ (as  otherwise he would see her directly) and not at $0$ (since none of his previously considered positions for $\X$—$0$ or $1$—can reach $0$ in one step). Therefore, $\Y$ concludes that $\X$ must now be at $1$.\footnote{At this stage $\Y$ also knows that $\X$ was at $0$ before the movement. Although $\Y$ knows the actual action of $\X$, the knowledge of $\Y$ is obtained by his reasoning with the knowledge about graph and his observational power. There is no inconsistency with the imperfect information nature of the game: for instance, if we let the game begin with the movement of $\Y$ from $4$ to $2$, then after the movement $\X$ would not know the actual action of $\Y$, since $\X$ would think it is  also  possible that, e.g., $\Y$ moves from $3$ to $3$.} 


Next, let $\Y$ move to $5$. Since $\X$ observes that $\Y$ is no longer at $2$, she knows that $\Y$ must be at $3$, $4$, or $5$. $\X$ then moves to $2$, from where she can see that $\Y$ is not at $3$ or $4$. Thus, $\X$ concludes that $\Y$ must be at $5$ and wins.
\end{example}

\vspace{1mm}

Although the example is simple, it indicates several subtleties of the game. For instance, the players' knowledge changes continuously throughout the game, and  in the final stage, although the players cannot see each other directly, they can know the positions of the other based on their knowledge about the graph structure and their observational power.
 Knowledge here is modelled based on observational powers and expresses the uncertainty of the players. In essence, it is quite similar to the way it is modelled in imperfect-information games in general (see e.g., \cite{gamesinformation}), where information is expressed in terms of equivalence relations, but ultimately models the uncertainties of the players. 

In what follows, we will introduce the key notions precisely and present an {\em Epistemic Logic of Cops and Robbers} ($\ELCR$), which enables us to reason about the action-information interplay and  to automatically track knowledge changes during play. We are aware of the well-developed methodology of dynamic-epistemic logic ($\DEL$, \cite{BMS,pal,DEL-book,DEL-johan-book}) and its various applications in modelling information update. We will compare the two approaches and  provide examples to show that our framework may offer a more succinct representation. The interface between logic and games has a long history. Over the past two decades, inspired by board games, graph game logic has emerged as an important research area, \cite{sabotagelori,sabotage,dazhu-jolli,original-sg,graphgame,poisonlogic,Chenqian2023,poisonargumentation,Dazhu-thesis,lhs,LHS-journal,hlhs-axiomatization,Declan}, with various logical techniques developed to study player interactions on graphs. To our knowledge, this is the first attempt to explore more complex settings where agents have imperfect information. We believe that the ideas and techniques introduced in this article will help open a new line of research.

Before we begin, it is worth noting that this work extends the previous proceedings version \cite{elcr}. Specifically, compared to the earlier version, this paper offers the following new contributions. 
The logical language is extended to allow dynamic operators within the scope of knowledge operators, and the models are redesigned to better reflect the game’s structure. 
We establish complete calculi for both $\ELCR$ and its  static fragment $\ELCR^-$ (without dynamic operators), further prove that $\ELCR$ is decidable. 
More new examples are included to highlight subtleties of the game and its logic, along with additional properties of the logic to deepen our understanding of the game.
Finally, we provide an extensive discussion of related works, including a comparison of $\ELCR$ with the $\DEL$-approach to the game proposed in a recent paper \cite{johan-del-game}.

\vspace{1mm}

\noindent {\it Structure of the paper}.\; In Section \ref{sec:game}, we lay out the basics of the  game and discuss its alternative designs. Section\,\ref{sec:LCR} presents the formal language and models of $\ELCR$, proposes an alternative for a simultaneous-move variant, and compares the two logics.
Section \ref{sec:application}  applies the new framework to analyze the game of Cops and Robbers.
Section \ref{sec:logical-properties} studies some basic properties of $\ELCR$. Sections \ref{sec:axiomatization-static} and \ref{sec:axiomatization-whole-logic} provide complete Hilbert-style proof systems for the static fragment of the logic without dynamic operators and the whole $\ELCR$ respectively and show that $\ELCR$ has a decidable satisfiability problem.  Section\,\ref{sec:related} formalizes some recent ideas on the $\DEL$-approach to studying the game, compares it to our logic, and discusses relevant works. Section\,\ref{sec:conclusion} concludes with directions for future research.

\section{Game design}\label{sec:game}

Let us first identify basic assumptions regarding players' knowledge. First, given a game, we assume that the graph structure is commonly known by the players. Also, the game is turn-based, and  it is common knowledge that whose turn it is to move: here we simply require that in each round, Cop $\X$ moves first, and then  Robber $\Y$ moves.\footnote{The turn-based assumption is in line with many other works on the game, e.g., \cite{cop-robber,cop-robber-book}. One can also consider any other specific order of their play, and even simultaneous play, but that will not affect our basic idea to analyze the game and the design of the logical tool in Section \ref{sec:LCR}.} However, they may not know where the opponent moves from and/or where the opponent moves to. Finally, we assume that  players can at least remember what they have considered to be possible at the previous stage: when a player moves,  the players infer the new possible situations from what they considered to be possible before that movement. For instance, before the movement of $\X$ in Example\,\ref{ex1}, $\Y$ considers it possible for $\X$ to be at $0$ or $1$, and once $\X$ moves, he gets to know that $\X$ is at $1$, which is the successor of the previous possibility $0$.

\vspace{1mm}

As for the winning condition, one option is to stipulate that Cop wins iff she is at same position as Robber \cite{lhs,LHS-journal}. However, our exploration on the role of knowledge allows more alternatives to define the condition, and we will adopt a new  criterion:

  \vspace{1mm}

\noindent {\it Winning condition:}\; Fix a natural number $n\in\mathbb{N}$. Cop $\X$ wins iff the position of Robber $\Y$ is {\it known} by $\X$ within $n$ rounds.

  \vspace{1mm}
  
\noindent  {\it Strategy:}\; A strategy of a player is a function from the set of positions of the player to the possible moves at that position. A strategy is said to be winning for a player if the player can win the game by playing according to the strategy, whatever be the moves of the other player.\footnote{For instance, when we assume that the initial positions of $\X$ and $\Y$ in Example \ref{ex1} are $3$ and $5$ respectively, a winning strategy of $\X$ is to always stay at $3$: one can check that $\X$ can know the position of $\Y$ after the movement of $\Y$ from $0$ to $1$ in the second round.}

  \vspace{1mm}
  
\noindent {\it Ability of players: $k$-sight.}\;  To analyze the imperfection information game, it is crucial to identify what players can know, which is dependent on the observational ability of the player.
There are two extremes about their ability:  players are assumed to know the positions of each other at any stage of a game,
and  they do not have any ability to ensure they can know something, even their own positions. 
Inspired by \cite{diffusion,Davide-short-sight,LLS}, 
we introduce the notion of {\em $k$-sight of players} to describe their observational power: players always know  their own positions, and if their positions are reachable within $k$ steps via the arrows or  their converse directions (in which case, we say that {\em they can see each other}), then they know the positions of each other. For instance, $\X$ and $\Y$ in Example \ref{ex1} have sight $1$. Below is another example.

\vspace{1mm}

\begin{example}
Assume that $\X$ and $\Y$ have sight 2. Based on our definition of the $k$-sight ability, to identify what the players can see, the direction of the arrows in the graph below does not matter. 
\begin{center}
\begin{tabular}{ll}
\begin{minipage}[l]{0.6\textwidth}
All of the vertices $s_0$-$s_4$ and $s_6$ are in the sight of $\Y$ from $s_6$,  so $\Y$ knows that $\X$ is not at any of those vertices (which together with the knowledge about the graph structure would make $\Y$ know that $\X$ is at $s_5$). From the vertex $s_5$, $\X$ can see the vertices $s_1$-$s_5$, and based on the knowledge about the graph structure, $\X$ knows $\Y$ is at either $s_0$ or $s_6$, although she does not know which one is exactly the case.  But after $\X$ moves from $s_5$ to $s_2$, the position of $\Y$ would come in the sight of  $\X$, meaning that after the move $\X$ would know  where $\Y$ is.\addtocounter{footnote}{1}\footnotemark[\value{footnote}]
\end{minipage}  & \qquad  \begin{minipage}[r]{0.3\textwidth}
\begin{tikzpicture}
\node(a)[circle,draw,inner sep=0pt,minimum size=5mm] at (0,0)  {$s_0$};
\node(b) [circle,draw,inner sep=0pt,minimum size=5mm] at (1,0){$s_1$};
\node(c) [circle,draw,inner sep=0pt,minimum size=5mm] at (3,0){$s_2$};
\node(d) [circle,draw,inner sep=0pt,minimum size=5mm] at (0,1){$s_3$};
\node(e) [circle,draw,inner sep=0pt,minimum size=5mm] at (0,2){$s_4$};
\node(f) [circle,draw,inner sep=0pt,minimum size=5mm] at (0,3) [label=right:$\X$]{$s_5$};
\node(g) [circle,draw,inner sep=0pt,minimum size=5mm] at (1,1){$s_6$};
\node(h)   at (.6,.7)  {$\Y$};
\draw[->](a) to  (b);
\draw[->](b) to  (c);
\draw[->](a) to  (d);
\draw[->](e) to  (d);
\draw[->](f) to  (e);
\draw[->](g) to  (b);
\draw[->](d) to  (g);
\draw[<->](c) to  (f);
\draw[->](g) to  [in=15, out=75,looseness=3] (g);
\end{tikzpicture}
 \end{minipage}  
\end{tabular}
\end{center}
\end{example}

 \footnotetext[\value{footnote}]{If we temporally let $\Y$ move first in this example, then $\X$ would know where  $\Y$ is after the first movement of $\Y$. The reason is as follows. Player  $\Y$ has two options: moving to $s_1$ or staying at $s_6$. If he moves to $s_1$, then $\X$ can see him directly. When the latter is the case,  $\X$ still cannot see $\Y$ directly, but the movement makes $\X$ know that it is impossible that $\Y$ is at another unobservable vertex $s_0$, since it has no predecessor from the previous possibilities $s_0$ and $s_6$ considered by $\X$, i.e., it would not happen that $\Y$ moves to $s_0$ from $s_0$ or $s_6$. In the latter case,  {\em physically nothing changes, but epistemically the player knows more}.}

 \vspace{2mm}

It is important to emphasize that the $k$-sight ability is  used to characterize what {\em at least} players can know, but depending on the concrete situations players may know more. For instance, at the final stage of the game in  Example \ref{ex1},  although $\X$ cannot observe $\Y$ directly based on the $1$-sight ability,  she can still know where $\Y$ is.  We emphasize that our goal in this work is to develop logical tools to track changes in the players’ knowledge of each other’s positions. In terms of knowledge, as a first step we will only consider the  knowledge of atomic facts, their Boolean combinations, and the effects of movements on them, but not higher-order knowledge (e.g., $\X$ knows that $\Y$ knows that $\X$ is at the vertex $a$). Although this may look restricted, the proposal developed in this way fits many existing analyses for the imperfect information variants of Cops and Robbers (see e.g., \cite{cop-robber-book}). We leave the work for the more intricate setting involving higher-order knowledge to another occasion.

We end this section by pointing out possible options regarding our assumptions of the game. For instance,  as stated earlier, graphs in this context are {\em serial}, but it is equally  reasonable to consider graphs without any restrictions; the players can act simultaneously;  different players may have different sights; there can be more players with other kinds of ability, say, they can send each other messages; and there may be other ways to define the winning condition, for instance, when there are $i > 1$ Cops, Cops may win when the position of Robber is their {\em distributed knowledge}.  Finally, it is also meaningful to extend our current game with explicit probabilities of actions in different situations, a usual manner adopted in imperfect information games, which would affect how players update their knowledge and how they act.\footnote{Although we do not use probabilities explicitly, how players update their knowledge in our setting in effect is involved with probabilities in an implicit way, which are determined by, e.g., how many positions of a player are considered to be possible by the other, how many successors those possible positions have and what a player can see directly. In line with the implicitness, we will propose a  qualitative approach to reason about the current game in Section \ref{sec:LCR}.} Some of these alternatives will be discussed in the later sections. For now, let us move to the details of the logical framework.

\section{Logical language and models}\label{sec:LCR}
This section will present a logical framework that characterizes our assumptions about the game and enables us to reason about  how players update their knowledge.  

Inspired by  \cite{knowing-value,lfd}, we will use different {\em values} to encode different vertices in the graph of a game. Such values and the binary relation in a graph give us a semantic structure of first-order logic $(\mathsf{FOL})$. Also, we will use {\em variables} to denote players, and then the current position of a player gives us the value of the corresponding variable. So, positions of all players can give us an {\em assignment} function $\sigma$ that assigns values to variables, say, $\sigma(x) = a$ when player $x$ is at $a$.

 \vspace{1mm}

\begin{remark}
The idea above to define the logic seems  similar to $\mathsf{FOL}$, but there is a crucial difference concerning the usages of variables. In $\mathsf{FOL}$, variables are just placeholders without any intrinsic meaning, while variables in our proposal denote {\em positions of players} and  can take different values in different situations of a game.  This use of variables aligns with that in some recent dependence logics (e.g., \cite{lfd,dependence-logic-book}) and in, e.g., physics (for instance,  usually we use ``$v$" for velocity).
\end{remark}

 \vspace{1mm}

 We fix a {\em vocabulary} $Voc = (Pred, Cons, Var)$, where $Pred$ is a set of predicate symbols, containing a specific binary relation symbol $R$ describing the arrows of a game graph, $Cons$ is a non-empty, finite set of constants, and $Var=\{x,y\}$, meaning the players.  As usual, elements of $\term=Cons\,\cup\, Var$ are called {\em terms}. The language $\mathcal{L}$ for {\em the Epistemic Logic of Cops and Robbers $(\ELCR)$} is defined in the following:

 \vspace{1mm}

\begin{definition}\label{def:language}
 Formulas in {\em the language $\mathcal{L}$ for $\ELCR$} are defined as follows:
 
    \vspace{1mm}
 \begin{center}
$\mathcal{L}_{\mathsf{B}}\ni\alpha::=\pt\mid t_1\equiv t_2\mid \neg \alpha\mid  (\alpha\land\alpha)$ \vspace{1mm} \\ 
$\mathcal{L}_{\mathsf{BD}}\ni\psi::=\alpha\mid \neg \psi\mid  (\psi\land\psi)\mid [z]\psi$ \vspace{1mm} \\ 
$\mathcal{L}\ni\varphi::= \psi \mid  K_zt \mid  \neg \varphi\mid  (\varphi\land\varphi)\mid K_z\psi\mid [z]\varphi$
 \end{center}

   \vspace{1mm}
  
\noindent where $t,t_1,t_2\in Cons\cup Var$ are terms,  ${\bm t}$ is a tuple of terms, $P\in Pred$ is a predicate symbol, and $z\in Var=\{x,y\}$ is a variable.  Other Boolean connectives $\top, \bot, \lor, \to, \leftrightarrow$ are defined as usual. We use $\lr{K_z}\varphi$ and $\lr{z}\varphi$ for $\neg K_z\neg\varphi$ and $\neg[z]\neg\varphi$, respectively. Also, for a set $\T\subseteq Cons\cup Var$, we use  $K_z\T$ for $\bigwedge_{t\in \T}K_zt$.
\end{definition}

\vspace{1mm}

 In the definition, one can merge $\mathcal{L}_{\mathsf{BD}}$ and $\mathcal{L}_{\mathsf{B}}$ by adding the dynamic operators to $\mathcal{L}_{\mathsf{B}}$ directly, but to make it easier to reference the static Boolean formulas, we stick to the current form of the definition. We write $\L^{-}$ for {\em the fragment of $\L$ without dynamic operators} and $\ELCR^{-}$ for the corresponding logic. As usual, $\pt$ and $t_1\equiv t_2$ are {\em atomic formulas} (in particular, formula $Rt_1t_2$ means  {\em the value of $t_2$ is a successor of the value of $t_1$}),  $K_zt$ reads {\em player $z$ knows the value of $t$}, $K_z\varphi$ means {\em $z$ knows that $\varphi$}, and $[z]\varphi$ expresses {\em after any movement of $z$, $\varphi$ is the case.}  The language  does not contain formulas for higher-order knowledge (e.g., $K_xK_yc$). To define both the changes of knowledge for positions and the winning positions  in the game, we even do not need the formulas of the form $K_z\psi$,  but we add them to the language for convenience.

 \vspace{1mm}

\begin{definition}\label{def:model-predicate}
 A {\em  model for $\ELCR$} is a tuple $M=(\bD,\bI, \Sigma,\sim)$, where

    \vspace{1mm}
   
\begin{itemize}
\item[$\bullet$] $\bD$ is a  non-empty, finite set of values  (also called vertices or positions).

  \vspace{1mm}
\item[$\bullet$]  $\bI$ is the interpretation function such that 
  \vspace{1mm}
\begin{itemize}
    \item[$\bullet$] For each $m$-ary predicate symbol $P\in Pred$,  $\bI(P)\subseteq \bD^m$ is an $m$-ary relation on $\bD$. In particular, $\bI(R)$ is a binary relation on $\bD$ such that for any $s\in\bD$, there is some $t\in\bD$ such that $(s,t)\in\bI(R)$.
      \vspace{1mm}
    \item[$\bullet$] For each $c\in Cons$, $\bI(c)\in\bD$. Moreover, for each $s\in\bD$, there is $c\in Cons$ such that $\bI(c)=s$.
\end{itemize}  
  \vspace{1mm}
\item[$\bullet$] $\Sigma\subseteq \bD^{Var}$ is a  non-empty set of {\em situations} of the players' positions, which are also called {\em assignments}.  
  \vspace{1mm}
\item[$\bullet$] For each $z\in\{x,y\}$, $\sim_z\subseteq \Sigma\times \Sigma$ is an equivalence relation. 
\end{itemize}
\end{definition}

 \vspace{1mm}

When the interpretation function $\bI$ is clear, we often write $\bP$ for $\bI(P)$. Intuitively, $(\bD,\bR)$  represents a graph where a game is played, and by definition, $\bR$ is serial; $\Sigma$ is a collection of possible situations of a game; and the relations $\sim_{z\in Var}$ are the indistinguishability relations of players  
(e.g., $\sigma_1\sim_x\sigma_2$ means that player $x$ cannot distinguish between situations $\sigma_1$ and $\sigma_2$).

The class of all  models captures the case where players do not have any ability, not even the $0$-sight: for instance, it might be the case that $\sigma_1\sim_x\sigma_2$ and $\sigma_1(x)\not=\sigma_2(x)$, which means that $x$ does not know where herself is. To capture the $k$-sight ability, we first define an auxiliary notion as follows: 

 \vspace{1mm}

\begin{definition}\label{def:distance}
Let $(\bD,\bR)$ be a finite graph. For any $s\in \bD$ and $m\in \mathbb{N}$, we inductively define the following: 
 \begin{center}
    $\mathbb{D}^{0}(s):=\{s\}$\vspace{1mm} \\ $\mathbb{D}^{m+1}(s):=\mathbb{D}^{m}(s)\cup\{t\in \bD: \exists u\in \mathbb{D}^{m}(s) \;\textit{s.t.}\; (u,t)\in \bR \;  \textit{or}\;  (t,u)\in \bR\}.$
\end{center}
\end{definition}

 \vspace{1mm}

So, $t\in \mathbb{D}^k(s)$  states that the `distance' between $s$ and $t$ is not more than $k$, and more precisely,  $t$ can be reached from $s$ within $k$ steps via the symmetric closure of $\bR$. 

  \vspace{1mm}

\begin{definition}\label{def:k-sight-predicate-model} A
{\em $k$-sight model} is a  model $M=(\bD,\bI, \Sigma,\sim)$  such that for each $z\in Var$ and $\sigma,\sigma'\in \Sigma$, if $\sigma\sim_z\sigma'$, then for any $z'\in Var$ with $\sigma(z')\in \mathbb{D}^k(\sigma(z))$,   $\sigma(z')=\sigma'(z')$.
\end{definition}

 \vspace{1mm}

In a $k$-sight model, if the distance between the position of a player $z$ and the position of player $z'$ is not more than $k$, then in all situations that cannot be distinguished by $z$, player $z'$ always has the same position, which means that $z$ knows the position of $z'$. This  characterizes the $k$-sight ability. In what follows, we work with $k$-sight models, and $\ELCR$ is defined based on them.

  \vspace{1mm}

\begin{remark}\label{remark:sight} 
The notion of $k$-sight models can be adapted to capture cases that players have different sights. For each $z\in\{x,y\}$, we use $k_z$ for the sight of the player. Then, to characterize this more complicated setting, we just need to replace the restriction imposed on $k$-sight models with the following: for any  $z\in Var$ and  $\sigma,\sigma'\in \Sigma$,

 \vspace{1mm}

\begin{center}
  if $\sigma\sim_z\sigma'$, then for any $z'\in Var$ with $\sigma(z')\in \mathbb{D}^{k_z}(\sigma(z))$, $\sigma(z')=\sigma'(z')$.  
\end{center}
\end{remark}

 \vspace{1mm}

Given a model $M=(\bD,\bI, \Sigma,\sim)$ and $\sigma\in\Sigma$, for any  $t\in\term$, we use $t^{(\bI,\sigma)}$ for the value of $t$: if $t\in Cons$, then $t^{(\bI,\sigma)}:=\bI(t)$; and if $t\in Var$, then $t^{(\bI,\sigma)}:=\sigma(t)$.

Now we move to presenting the semantics for the logic.  Truth conditions for the fragment  $\mathcal{L}^-$ are straightforward, and key clauses are as follows:

   \vspace{1mm}
 \begin{center}
\begin{tabular}{rcl}
$M,\sigma\models P(t_1,\dots,t_n)$     & \quad  iff  & \quad $(t_1^{(\bI,\sigma)},\dots, t_n^{(\bI,\sigma)})\in \bI(P)$\\
$M,\sigma\models t_1\equiv t_2$  & \quad  iff  & \quad $t_1^{(\bI,\sigma)}=t_2^{(\bI,\sigma)}$\\
 $M,\sigma\models K_zt$     & \quad  iff  & \quad   for all $\sigma'\in \Sigma$, if $\sigma'\sim_z \sigma$, then $t^{(\bI,\sigma)}= t^{(\bI,\sigma')}$\\
 $M,\sigma\models K_z\varphi$    & \quad  iff  & \quad   for all $\sigma'\in \Sigma$, if $\sigma'\sim_z \sigma$, then  $M,\sigma'\models\varphi$\\
\end{tabular}
\end{center}

   \vspace{1mm}

Recall that in our models $M$, for each vertex, there is a constant true at the vertex. So, given a situation $\sigma$, for any term $t\in\term$, there is a constant $c$ such that $t\equiv c$ at $\sigma$ (if $t$ is a constant, then $t\equiv c$ is globally true in $\Sigma$). Based on this, $K_zt$ amounts to $\bigwedge_{c\in Cons}(\lr{K_z}t\equiv c\to K_zt\equiv c)$, but we stick to using $K_zt$ for syntactic succinctness.

It remains to show the truth condition for $[z]\varphi$. In terms of the game, a desired clause would give us an automatic mechanism to capture the effects of movements, which are involved with both the changes of positions and the updates of the knowledge. Let us first define binary relations $\mathsf{R}^{z\in Var}$ on {\em all} assignments $\bD^{Var}$:

 \vspace{1mm}

\begin{center}
For any $\sigma,\sigma'\in \bD^{Var}$, we write $\mathsf{R}^z\sigma\sigma'$ if $(\sigma(z),\sigma'(z))\in \bR$ and for $z'\in Var\setminus\{z\}$, $\sigma(z')=\sigma'(z')$.    
\end{center}

   \vspace{1mm}
  
\noindent Therefore, when $\mathsf{R}^z\sigma\sigma'$, the only difference  of $\sigma$ and $\sigma'$ concerns the values of $z$. It intuitively describes the fact that after $z$ moves from $\sigma(z)$ to $\sigma'(z)$, the situation $\sigma$ becomes $\sigma'$. Given a class   $\Sigma\subseteq  \bD^{Var}$ of situations and a variable $z$, we define 
 \vspace{1mm}
\begin{center}
$\mathsf{R}^z(\Sigma):=\{\sigma'\mid \textit{there is~}\sigma\in\Sigma \textit{~s.t.~}\mathsf{R}^z\sigma\sigma'\}$.
\end{center}
 \vspace{1mm}
\noindent When $\Sigma$ is a singleton $\{\sigma\}$, we write $\mathsf{R}^z(\sigma)$ for $\mathsf{R}^z(\{\sigma\})$. Also, we use $\Sigma|\sigma$ to mean the subset $\Sigma'=\{\sigma'\in\Sigma\mid \sigma \sim_z \sigma' \; \textit{for some}\; z\in Var\}$. Now we can show the truth condition for the movement operators $[z]\varphi$:\footnote{The authors would like to thank Alexandru Baltag for a  useful discussion on the notion of the update.}

\vspace{1mm}

\begin{definition}\label{def:updates}
For the case that both the players have the same sight $k$, {\em the truth condition for $[z]\varphi$} is as follows:

   \vspace{1mm}
  
\begin{center}
$(\bD, \bI, \Sigma, \sim), \sigma_1\models [z]\varphi$\quad iff \quad  for all $\sigma_2\in \mathsf{R}^z(\sigma_1)$, $(\bD, \bI, \Sigma', \sim'), \sigma_2\models \varphi$,
\end{center}

   \vspace{1mm}
  
\noindent where the new $\Sigma'$ is given by the following:

   \vspace{1mm}
  
\begin{itemize}
    \item[$(a)$] if $\sigma_2(x)\in \mathbb{D}^k(\sigma_2(y))$, then $\Sigma'=\{\sigma_2\}$,

      \vspace{1mm}
      
    \item[$(b)$] if $\sigma_2(x)\not\in \mathbb{D}^k(\sigma_2(y))$, then $\Sigma'=\{\sigma'\in  \mathsf{R}^z(\Sigma|\sigma_1)\mid \sigma'(x)\not\in \mathbb{D}^k(\sigma'(y))\}$,
\end{itemize}

   \vspace{1mm}
  
\noindent and the new relations $\sim'_{z\in\{x,y\}}$ on $\Sigma'$ are obtained by the following:

   \vspace{1mm}
  
\begin{itemize}
    \item[$(c)$]  $\sigma'_1\sim'_z\sigma'_2$ iff $\sigma'_1(z)=\sigma'_2(z)$. 
\end{itemize}
\end{definition}

 \vspace{1mm}

The clause $(a)$ aims to deal with the case that after the movement, the two players are in the sight of each other. In this case, both of them know the actual situation. Moreover, the clause $(b)$ tackles the case that after the movement, the two players are not in the sight of each other. Different from that of $(a)$, players only consider the situations in which they are not in the sight of each other to be possible. 

With the definition above, one can see that 
different $\sigma_2\in \mathsf{R}^z(\sigma_1)$ can give us different $\Sigma'$. It  always holds that $\Sigma'\subseteq \mathsf{R}^z(\Sigma)$ and $\sigma_2\in\Sigma'$. 
When $M$ is a $k$-sight model, the resulting model obtained by the updating $M$   is again a $k$-sight model. 
We will provide some examples for the updates further in Section \ref{sec:application}.  In what follows, we will use some ordinary notions directly, including {\em satisfiability}, {\em validity} and {\em logical consequence}, which can be defined in the usual manner.

In the remainder of the section, we provide some comments on the update mechanism and offer the method of updates for {\em the simultaneous movements}. For convenience, we will often use tuples of values to specify the positions of $x,y$ (in this order), to denote situations. For instance,   we write  $\sigma$ as $(s_1,s_2)$ if $\sigma(x)=s_1$ and $\sigma(y)=s_2$. Also, we will highlight the actual situation with an underline, e.g., $\underline{(s_1,s_2)}$.

 \vspace{1mm}

\begin{remark}\label{remark:update}
In the clause $(b)$ of Definition \ref{def:updates}, we require $\Sigma'$ is a subset of $\mathsf{R}^{z}(\Sigma|\sigma)$, although it looks natural to just require  it to be a subset of $\mathsf{R}^{z}(\Sigma)$. 
However,  {\em restricting our attention to $\mathsf{R}^{z}(\Sigma|\sigma)$   is vital to ensure that the update would not cause  disorders.}  To see this, let us consider an example where we just require $\Sigma'$ to be a subset of $\mathsf{R}^{z}(\Sigma)$. 

 \vspace{1mm}

\begin{center}
\begin{tikzpicture}
\node(a)[circle,draw,inner sep=0pt,minimum size=5mm] at (0,0) {$s_1$};
\node(b) [circle,draw,inner sep=0pt,minimum size=5mm] at (1,0){$s_2$};
\node(c)[circle,draw,inner sep=0pt,minimum size=5mm] at (2,0) {$s_3$};
\node(d)[circle,draw,inner sep=0pt,minimum size=5mm] at (3,0) {$s_4$};
\node(e)[circle,draw,inner sep=0pt,minimum size=5mm] at (4,0) {$s_5$};
\draw[->] (a) to  (b);
\draw[->](b) to  (c);
\draw[->](c) to  (d);
\draw[->](b) to  [in=240, out=300,looseness=5] (b);
\draw[->](d) to  (e);
\draw[->](e) to  [in=330, out=30,looseness=5] (e);
\end{tikzpicture}
\end{center}

We consider  $\Sigma=\{\underline{(s_1,s_4)}, (s_2,s_5)\}$, where $(s_1,s_4)$ 
is the actual situation. Also, we assume that both $x$ and $y$ have sight $1$, but due to some reason they happen to be able to distinguish between the two situations (so both $K_xy$ and $K_yx$ hold at $(s_1,s_4)$). 

Now, let $x$ move, after which the only possible situation is $(s_2,s_4)$. However,
 \vspace{1mm}

\begin{center}
$\Sigma'=\{\sigma'\in \mathsf{R}^x(\Sigma)\mid \sigma'(x)\not\in \mathbb{D}^k(\sigma'(y))\}=\{\underline{(s_2,s_4)}, (s_2,s_5), (s_3,s_5)\}$
\end{center}

 \vspace{1mm}

\noindent At the new situation $(s_2,s_4)$, by our clause for the epistemic relations, we still have $K_yx$, but $x$ now cannot distinguish between $(s_2,s_4)$ and $(s_2,s_5)$, i.e., $x$ forgot the position of $y$ immediately after the movement of $x$ herself! However, this should not be the case in our setting: see Fact \ref{fact:movement-theother}.\footnote{It is important to study the case that players have only {\em restricted memory} \cite{diversity,reaoning-about-knowledge}, 
and we leave this to future inquiry.}

Finally, it is worth noting that the difference between the two requirements just make sense for the initial updates of a given model, in that for the resulting $\Sigma'$ after any update associated to the new situation $\sigma'$, it is always the case that  $\Sigma'|\sigma'=\Sigma'$.
\end{remark}

  \vspace{1mm}

\noindent {\bf Digression: Simultaneous movements.}\; Let us end this part by a brief discussion on the logic for the setting that players $x,y$ move simultaneously.
 To capture the effects of the new pattern, it is crucial to define suitable `group movement operators' $[Var]\varphi$, and as what we are going to show, such a framework can be easily obtained by generalizing the idea behind the updates induced by $[z]$. 
Let us define $\mathsf{R}^{Var}$ such that
 
  \vspace{1mm}
 
 \begin{center}
     for any $\sigma,\sigma'\in\bD^{Var}$, $\mathsf{R}^{Var}\sigma\sigma'$ iff for any $z\in Var$, $(\sigma(z),\sigma'(z))\in \bR$,
 \end{center} 
 
  \vspace{1mm}
 
\noindent which captures the changes of situations caused by the simultaneous movements of both $x$ and $y$. For a set  $\Sigma\subseteq \bD^{Var}$, we can define 

 \vspace{1mm}

\begin{center}
  $\mathsf{R}^{Var}(\Sigma):=\{\sigma'\mid \textit{there is~}\sigma\in\Sigma \textit{~s.t.~}\mathsf{R}^{Var}\sigma\sigma'\}$,  
\end{center}

 \vspace{1mm}

\noindent and again, when $\Sigma$ is a singleton $\{\sigma\}$, we write  $\mathsf{R}^{Var}(\sigma)$ for $\mathsf{R}^{Var}(\{\sigma\})$. Now, the truth condition for $[Var]\varphi$ is given by the following:

 \vspace{1mm}

\begin{center}
    \begin{tabular}{rcl}
    $(\bD, \bI, \Sigma, \sim), \sigma_1\models [Var]\varphi$  & \;\; iff \;\; & for all $\sigma_2\in \mathsf{R}^{Var}(\sigma_1)$, $(\bD, \bI, \Sigma', \sim'), \sigma_2\models \varphi$,  
    \end{tabular}
\end{center}

 \vspace{1mm}

\noindent where  $\Sigma'$  and  $\sim'$ are given by the same clauses as that in Definition \ref{def:updates}, except we now need to use $\mathsf{R}^{Var}$ instead of $\mathsf{R}^{z}$. 
The logic for the simultaneous movements is interesting in its own right, as suggested by the following:

 \vspace{1mm}

\begin{fact}\label{fact:validity}
  Let $\L^{+}$ be the language extending $\L$ with $[Var]\varphi$. The following  is not valid:
  
   \vspace{1mm}
  
  \begin{center}
      $[x][y]\varphi\leftrightarrow[Var]\varphi$.\;\footnote{The order of $[x]$ and $[y]$ in the equivalence does not matter, i.e., $[y][x]\varphi\leftrightarrow[Var]\varphi$ is not valid as well.}
  \end{center}
\end{fact}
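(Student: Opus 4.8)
Since the statement is a non-validity, the plan is to produce one $1$-sight model, one situation in it, and one formula of $\L^{+}$ at which the biconditional is false. Before searching I would isolate where such a witness must live. For any $\sigma_1$, the situations reachable by first $x$ and then $y$, i.e.\ $\mathsf{R}^y(\mathsf{R}^x(\sigma_1))$, coincide with the simultaneously reachable set $\mathsf{R}^{Var}(\sigma_1)$, because in both cases each coordinate advances exactly one $\bR$-step. Hence for a Boolean $\alpha\in\L_{\mathsf{B}}$, whose truth at a final situation depends only on $(\bD,\bI)$ and that situation and never consults the induced $(\Sigma',\sim')$, the two sides of the equivalence agree. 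So any counterexample formula must genuinely probe the updated epistemic structure, and I would take $\varphi:=K_x y$. The phenomenon to exploit is exactly that $[x][y]$ applies the update of Definition~\ref{def:updates} twice, inserting an intermediate sight-check after $x$'s solo move, whereas $[Var]$ applies it once; whenever that intermediate step fires clause~$(a)$ it hands $x$ knowledge that the genuinely simultaneous transition never provides.

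I would then realize this with $k=1$ on the serial graph $\bR$ over $\{0,1,2,3,4,5\}$ (each named by a constant) with arrows $0\to1$, $1\to2$, $2\to3$, $4\to5$ and self-loops at $3$ and $5$. As initial data take $\Sigma=\{(0,2),(0,4)\}$, with $\sim_x$ the total relation on $\Sigma$ and $\sim_y$ the identity; since $2,4\notin\mathbb{D}^1(0)=\{0,1\}$, this is a legitimate $1$-sight model in which $x$ is initially uncertain whether $y$ sits at $2$ or at $4$. Let the actual situation be $\underline{(0,2)}$.

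The verification then forks. On the sequential reading the only $x$-move carries $\underline{(0,2)}$ to $(1,2)$; as $1\in\mathbb{D}^1(2)$, clause~$(a)$ collapses the state set to $\{(1,2)\}$, so $x$ has just learnt $y$'s position. The only ensuing $y$-move reaches $(1,3)$, which is out of sight, and clause~$(b)$ applied to the singleton returns $\{(1,3)\}$, so $K_x y$ holds there; since these were the only available moves, $[x][y]K_x y$ is true at $\underline{(0,2)}$. On the simultaneous reading the lone successor of $\underline{(0,2)}$ is again $(1,3)$, but its update draws on $\mathsf{R}^{Var}(\Sigma\mid(0,2))$, which also contains $(1,5)$ arising from $x$'s alternative $(0,4)$; both $(1,3)$ and $(1,5)$ are out of sight, so clause~$(b)$ yields $\Sigma'=\{(1,3),(1,5)\}$, and as these are $\sim'_x$-linked yet disagree on $y$, $K_x y$ fails. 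Thus $[x][y]K_x y\leftrightarrow[Var]K_x y$ is false at $\underline{(0,2)}$, establishing the non-validity.

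The main obstacle is not any individual computation but the simultaneous juggling of competing demands in one graph: the adjacency $1\in\mathbb{D}^1(2)$ must be present so the sequential branch collapses, while $1\notin\mathbb{D}^1(3)$ and $1\notin\mathbb{D}^1(5)$ must hold so the final states stay out of sight and the uncertainty is real on the simultaneous branch; meanwhile $\bR$ must remain serial, every vertex must carry a naming constant, and $(\Sigma,\sim)$ must satisfy the $1$-sight constraint. I would therefore recompute each $\mathbb{D}^1(\cdot)$ and confirm the out-degrees of $0$, $2$, $4$ so that the box quantifications range over exactly the single moves claimed. Finally, the footnote's companion claim that $[y][x]\varphi\leftrightarrow[Var]\varphi$ is likewise non-valid would be settled by the mirror construction interchanging the roles of $x$ and $y$, which requires its own but entirely analogous check.
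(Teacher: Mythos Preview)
Your proposal is correct and follows essentially the same approach as the paper: both exhibit a $1$-sight model and use $\varphi=K_xy$, exploiting that the intermediate $[x]$-step brings $x$ into sight of $y$ (triggering clause~$(a)$ and collapsing $\Sigma'$ to a singleton), whereas the one-shot $[Var]$-move lands outside each other's sight with genuine residual uncertainty. The paper's counterexample is a single path $s_1\to s_2\to s_3\to s_4\to s_5$ (self-loop at $s_5$) with actual situation $(s_1,s_3)$ and $\Sigma$ taken as the maximal set compatible with $1$-sight, while you build a two-component graph with an explicitly specified two-element $\Sigma$; your version is arguably cleaner because the explicit $\Sigma$ avoids any ambiguity about what ``given by the $1$-sight ability'' means, and your preliminary observation that Boolean $\alpha\in\L_{\mathsf B}$ cannot separate the two modalities nicely motivates the choice of an epistemic witness formula.
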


\begin{proof}
It suffices to find an instance.  Consider the following model: 
\begin{center}
\begin{tikzpicture}
\node(a)[circle,draw,inner sep=0pt,minimum size=5mm] at (0,0) {$s_1$};
\node(b) [circle,draw,inner sep=0pt,minimum size=5mm] at (1,0){$s_2$};
\node(c)[circle,draw,inner sep=0pt,minimum size=5mm] at (2,0) {$s_3$};
\node(d)[circle,draw,inner sep=0pt,minimum size=5mm] at (3,0) {$s_4$};
\node(e)[circle,draw,inner sep=0pt,minimum size=5mm] at (4,0) {$s_5$};
\draw[->] (a) to  (b);
\draw[->](b) to  (c);
\draw[->](c) to  (d);
\draw[->](d) to  (e);
\draw[->](e) to  [in=330, out=30,looseness=5] (e);
\end{tikzpicture}
\end{center}
Assume that both $x,y$ have the sight $1$, and the actual situation is $(s_1,s_3)$. Also, the original $\Sigma$ and the indistinguishability relations are given by the $1$-sight ability and the actual situation of their positions. Now, $[x][y]K_xy$ holds, but $[Var]K_xy$ fails.
\end{proof}

As shown by the example in the proof above, in $[x][y]\varphi$, a first movement $[x]$ may make $x$  know more about the position of $y$, but $[Var]$ leaves no room for $x$ to know that.  In what follows, we continue to study   $\ELCR$ and leave the exploration on the simultaneous variant to another occasion. 

\section{Applications of \texorpdfstring{$\ELCR$}{} to Cops and Robbers}\label{sec:application}

Now we will use the formal framework to track players' knowledge, to describe the winning conditions for players, among others. In this part, we assume that the round-restriction $n\in\mathbb{N}$ imposed on the winning condition is big enough. 

\vspace{1mm}

\begin{center}
\begin{tikzpicture}
\node(0)[circle,draw,inner sep=0pt,minimum size=5mm] at (0,0)[label=above:$\X$] {$0$};
\node(1)[circle,draw,inner sep=0pt,minimum size=5mm] at (1.5,.7) {$1$};
\node(2)[circle,draw,inner sep=0pt,minimum size=5mm] at (3,.7) {$2$};
\node(3) [circle,draw,inner sep=0pt,minimum size=5mm] at (4.5,0){$3$};
\node(4) [circle,draw,inner sep=0pt,minimum size=5mm] at (3,-.7)[label=right:$\Y$]{$4$};
\node(5) [circle,draw,inner sep=0pt,minimum size=5mm] at (1.5,-.7){$5$};
 
\draw[->] (0) to  (1);
\draw[->] (1) to  (2);
\draw[->](2) to  (3);
\draw[->](3) to  (4);
\draw[->](4) to  (5);
\draw[->](5) to  (0);
\draw[<->](2) to  (4);
\draw[->](3) to  [in=330, out=30,looseness=4] (3);
\end{tikzpicture}
\end{center}

 Let us revisit the game in Example \ref{ex1}. We will write $x$ and $y$ for Cop and Robber respectively. Recall that they have the sight $1$ and that in each round, $x$ acts first. As before, we write tuples for situations and usually highlight actual situations with underlines. For the example,  we denote by $\sigma_0=(0,4)$ the initial situation.

 \vspace{1mm}

 Based on the 1-sight ability,  $x$ knows her own position. Also, $x$ knows that $y$ is not at $0$, $1$ or $5$, but at $2$, $3$ or $4$. So, for player $x$, all of the following situations are possible: 
$$\boxed{x:\quad \underline{(0, 4)}, \; (0, 2), \; (0, 3)}$$
For $y$, he knows where himself is, but he cannot distinguish between the following:
$$\boxed{y:\quad \underline{(0, 4)}, \; (1, 4)}$$

All these aforementioned situations together form the set $\Sigma_0$ of possible situations of any models matching the game. 
 The indistinguishability relations  $\sim_{x}$ and $\sim_{y}$ among these situations are as described above. 

\vspace{1mm}

\begin{remark}\label{remark:high-order-knowledge}
    Here one can see that it is crucial to restrict ourselves to the setting without higher-order knowledge: if   higher-order knowledge is permitted, then  $\lr{K_x}K_yx$ holds at the actual situation, but w.r.t. the game, the formula is {\em not} a  correct description the knowledge of $x$.  Generalized proposals for settings with higher-order knowledge exist, but we leave these for future work. 
\end{remark}

\vspace{1mm}

Now, the game begins and player $x$  has only one option:

\vspace{1mm}

\noindent{\it $x$ moves to $1$.}\, The actual situation becomes $(1,4)$, written $\sigma_1$. 
Since they are not in the sight of each other, by the definition of the update, we can obtain the following:

 \vspace{1mm}

\begin{center}
 $\Sigma_1=\{\sigma\in \mathsf{R}^{x}(\Sigma_0|\sigma_0)\mid \sigma(x)\not\in \mathbb{D}^{1}(\sigma(y))\}=\{(1,4),\; (1,3)\}$.    
\end{center}

 \vspace{1mm}

\noindent Here $\Sigma_0|\sigma_0$ is exactly $\Sigma_0$. Again,  by the definition of updates, the indistinguishability relations among the situations  are as follows:

 \vspace{1mm}

\begin{center}
  $\boxed{x:\quad \underline{(1, 4)}, \; (1, 3)}\qquad\qquad  \boxed{y:\quad \underline{(1, 4)}}$  
\end{center}

 \vspace{1mm}

Recall the situations that cannot be distinguished by $x$ in the round 0. She now does not consider the case that $y$ is at $2$ to be possible: in the actual situation $(1,4)$ she cannot see directly $y$, but $(1, 2)$ is such a case.  Besides, for $y$, since $0$ is not a successor of any previous possibilities $0,1$ of the position of $x$ considered by him, $y$ does not consider the case  $(0,4)$  to be possible now, which is formally captured by the requirement that $\Sigma_1$ should be a subset of $\mathsf{R}^{x}(\Sigma_0|\sigma_0)$.

 \vspace{1mm}



Let us now move to the next stage:

\vspace{1mm}

\noindent{\it $y$ moves to vertex $5$.} \; We write $\sigma_2$ for the new situation $(1,5)$. Now  the players are not in the sight of each other as well, and the new class $\Sigma_2$ of situations and the indistinguishability relations given by our update policy are as  follows:

 \vspace{1mm}
 
    \begin{center}
 $\Sigma_2=\{\sigma\in \mathsf{R}^{y}(\Sigma_1|\sigma_1)\mid \sigma(x)\not\in \mathbb{D}^{1}(\sigma(y))\}=\{\underline{(1,5)},\;(1,3),\;(1,4)\}$. 

 \vspace{1mm}
 
  $\boxed{x:\quad \underline{(1,5)},\;(1,3),\;(1,4)}\qquad \qquad \boxed{y:\quad \underline{(1, 5)}}$  
\end{center}

\vspace{1mm}

Player $y$ still knows the actual situation, and player $x$ considers the unobservable successors of previous possibilities $3,4$ to be possible at this new stage.

 \vspace{1mm}

Finally, the game ends by the following step:

 \vspace{1mm}

\noindent{\it $x$ moves to $2$.}\; Now the actual situation, i.e., $(2,5)$, is the only possibility considered by each of the players, which can be seen from the following:

\vspace{1mm}

    \begin{center}
 $\{\sigma\in \mathsf{R}^{x}(\Sigma_2|\sigma_2)\mid \sigma(x)\not\in \mathbb{D}^{1}(\sigma(y))\}=\{\underline{(2,5)}\}$. 
 \end{center}

 \vspace{1mm}

Although the final $\{(2,5)\}$ is a singleton set, it is  obtained not by the clause $(a)$ in Definition \ref{def:updates}, but  by the clause $(b)$: this indicates that the realization of the actual situation depends on their indirect reasoning (with the knowledge about the graph structure and the $k$-sight ability), but not on the direct observation.  

 \vspace{1mm}
 
Generally speaking, the language can also be used to verify whether a player has a winning strategy. For instance, when the round-restriction $n$ is $2$, Cop has a winning strategy iff the following is true at the initial situation:

 \vspace{1mm}

\begin{center}
 $K_xy\lor\lr{x}K_xy\lor \lr{x}[y]K_xy \lor \lr{x}[y]\lr{x}K_xy\lor \lr{x}[y]\lr{x}[y]K_xy$. \footnote{For the setting without any round restrictions, the syntactic length would become infinite, which makes the formula not well-defined. To analyze this, one may need some enhanced tools such as  {\em the modal logic for substitution} \cite{substitution} that is motivated by the perfect information version of the game \cite{LHS-journal}.}   
\end{center}

 \vspace{1mm}

\noindent This suggests another potential application of the logic to the game: the model-checking problem for the formula of $\ELCR$ corresponds to the verification of existence of winning strategies of the players, and a complexity study would provide us an upper bound of the complexity of deciding this imperfect information game. As proved in \cite{elcr}, the static fragment $\ELCR^-$ has a $\mathsf{P}$-complete model-checking problem, and it remains to determine the exact complexity of  the model-checking problem for $\ELCR$.

\section{Some properties of \texorpdfstring{$\ELCR$}{}}\label{sec:logical-properties}
 
Having seen the basics  of the logic, we now turn to exploring some of its properties, involving validities and the effects of dynamic operators. Let us start by pointing out that many valid schemata of $\mathsf{FOL}$ fail in this new setting. For instance, formula 
$t_1\equiv t_2\to (\varphi\leftrightarrow \varphi[t_1/t_2])$
 is {\em not} a validity of $\ELCR$, where $\varphi[t_1/t_2]$ is obtained by replacing $t_2$ in $\varphi$ with $t_1$.\footnote{It is not hard to find a counterexample to, e.g., $c\equiv y\to (K_xc\leftrightarrow K_xy)$: $y$ is at $c$,  $x$ knows the value of $c$, but $x$   may not know where $y$ is. }  But when we restrict the formula $\varphi$ to those not involving knowledge, the  schema still holds.

 For any natural number $n\in\mathbb{N}$ and terms $t_1,t_2$, we define the following, which expresses the distance between $t_1$ and $t_2$:

 \vspace{1mm}

\begin{center}
  $\mathsf{D}^0t_1t_2:=t_1\equiv t_2$,

  \vspace{1mm}
  
  $\mathsf{D}^{n+1}t_1t_2:= 
\mathsf{D}^{n}t_1t_2 \lor \bigvee_{t\in Var\cup Cons}(\mathsf{D}^{n}t_1t\land (Rtt_2\lor Rt_2t))$.
\end{center}

 \vspace{1mm}

\noindent Since $Var\,\cup\, Cons$ is finite, the formulas above are well-defined. Now  $\mathsf{D}^{k}zt$ 
 means that the position of $t$ is in the sight of $z$.

  \vspace{1mm}

 \begin{fact}\label{fact:validities}
The following formulas  are  valid w.r.t.   $k$-sight models: 

\vspace{1mm}

\begin{itemize}

\item[$(1)$] $\mathsf{D}^kzt\to K_zt$, given that $z\in Var$. 

\vspace{1mm}

\item[$(2)$]  $(K_z\T\land P(t_1,t_2,\ldots,t_m))\to K_z P(t_1,t_2,\ldots,t_m)$, given that $\{t_1,\ldots,t_m\}\subseteq \T$. 
 \end{itemize}
 \end{fact}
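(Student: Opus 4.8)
The plan is to treat the two items separately, since $(2)$ follows directly from the truth conditions, whereas $(1)$ requires first pinning down the semantic content of the distance formula $\mathsf{D}^k z t$.

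For $(1)$ I would first establish, for \emph{any} model $M$, the auxiliary claim that for every $k\in\mathbb{N}$, every term $t$, every variable $z$, and every situation $\sigma$,
$$M,\sigma\models \mathsf{D}^k z t \quad\Longleftrightarrow\quad t^{(\bI,\sigma)}\in\mathbb{D}^k(\sigma(z)),$$
proved by induction on $k$. The base case $k=0$ is immediate, since $\mathsf{D}^0 z t$ is $z\equiv t$, whose truth at $\sigma$ says $\sigma(z)=t^{(\bI,\sigma)}$, i.e. $t^{(\bI,\sigma)}\in\mathbb{D}^0(\sigma(z))=\{\sigma(z)\}$. For the inductive step, unfolding the syntactic definition of $\mathsf{D}^{k+1}z t$ together with the truth conditions shows that $M,\sigma\models\mathsf{D}^{k+1}z t$ holds iff either $t^{(\bI,\sigma)}\in\mathbb{D}^k(\sigma(z))$ (by the induction hypothesis) or there is a term $u$ with $u^{(\bI,\sigma)}\in\mathbb{D}^k(\sigma(z))$ and $(u^{(\bI,\sigma)},t^{(\bI,\sigma)})\in\bR$ or $(t^{(\bI,\sigma)},u^{(\bI,\sigma)})\in\bR$; comparing with the inductive clause defining $\mathbb{D}^{k+1}(\sigma(z))$ in Definition \ref{def:distance} then yields the claim. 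Once the claim holds, $(1)$ follows by the $k$-sight restriction: assuming $M,\sigma\models\mathsf{D}^k z t$, the claim gives $t^{(\bI,\sigma)}\in\mathbb{D}^k(\sigma(z))$; if $t$ is a constant then $t^{(\bI,\sigma)}$ is independent of the situation and $K_z t$ holds trivially, while if $t$ is a variable $z'$ the defining condition of $k$-sight models (Definition \ref{def:k-sight-predicate-model}), applied with this $z'$, gives $\sigma(z')=\sigma'(z')$ for every $\sigma'\sim_z\sigma$, which is exactly $M,\sigma\models K_z t$.

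For $(2)$ I would argue directly from the truth conditions, and the $k$-sight restriction is not even needed. Assume $M,\sigma\models K_z\T\land P(t_1,\ldots,t_m)$ with $\{t_1,\ldots,t_m\}\subseteq\T$, and fix an arbitrary $\sigma'\sim_z\sigma$. Since each $t_i\in\T$ we have $M,\sigma\models K_z t_i$, so the truth condition for $K_z t_i$ gives $t_i^{(\bI,\sigma)}=t_i^{(\bI,\sigma')}$. From $M,\sigma\models P(t_1,\ldots,t_m)$ we get $(t_1^{(\bI,\sigma)},\ldots,t_m^{(\bI,\sigma)})\in\bI(P)$, and because $\bI$ is fixed across situations, substituting these equalities yields $(t_1^{(\bI,\sigma')},\ldots,t_m^{(\bI,\sigma')})\in\bI(P)$, i.e. $M,\sigma'\models P(t_1,\ldots,t_m)$. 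As $\sigma'$ was arbitrary, $M,\sigma\models K_z P(t_1,\ldots,t_m)$, as required.

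The main obstacle, and the only genuinely delicate point, lies in the inductive step of the auxiliary claim: the syntactic disjunction $\bigvee_{u\in Var\cup Cons}$ ranges over \emph{terms}, whereas the semantic clause for $\mathbb{D}^{k+1}$ ranges over \emph{vertices} in $\bD$. To match the two I would invoke the standing model assumption that every vertex is named---for each $s\in\bD$ there is $c\in Cons$ with $\bI(c)=s$---so that the set of values $\{u^{(\bI,\sigma)}:u\in Var\cup Cons\}$ exhausts $\bD$. Without this naming property the equivalence could break, since an intermediate vertex witnessing membership in $\mathbb{D}^{k+1}(\sigma(z))$ need not be the value of any term, and I would flag this use explicitly.
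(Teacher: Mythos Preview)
Your argument is correct. The paper states Fact~\ref{fact:validities} without proof and, in the soundness appendix (Appendix~\ref{sec:appendix-proof-for-soundness-static}), explicitly leaves the corresponding axioms $(k$-$\mathtt{sight})$ and $(\mathtt{Structure}$-$\mathtt{Knowledge})$ ``as exercises to the reader,'' so there is no paper proof to compare against.

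One small observation on item $(1)$: for the validity you actually need only the \emph{forward} direction of your auxiliary claim (from $M,\sigma\models\mathsf{D}^k z t$ to $t^{(\bI,\sigma)}\in\mathbb{D}^k(\sigma(z))$), and that direction does not rely on the naming property at all---a term witness in the syntactic disjunction immediately yields a vertex witness in $\mathbb{D}^{k+1}(\sigma(z))$. The delicate matching of terms to vertices that you flag as the main obstacle is only required for the converse direction of the equivalence, which is not used in deriving $K_z t$. So the point you single out is in fact inessential for the result, though proving the full biconditional is of course harmless and arguably cleaner.
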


  \vspace{1mm}

 
Formula $(1)$  means if $t$ is in the sight of $z$, then $z$ knows where $t$ is, and $(2)$ indicates that knowing the values of $\T$ means knowing the atomic facts involving $\T$: when $P$ is $R$, it characterizes the assumption that {\em players know the structures of game graphs}.  

In terms of the validities of knowledge operators $K_z\alpha\in\L$, although  $K_z\alpha\to\alpha$ is always the case, we do not have the ordinary $K_z\alpha\to K_zK_z\varphi$ or $\neg K_z\alpha\to K_z\neg K_z\varphi$  for {\em the positive reflection} and {\em the negative reflection}, respectively, since syntactically we restrict ourselves to the setting without high-order  knowledge. However, we can {\em mimic} them at the semantic level, in the sense of the following:

 \vspace{1mm}

\begin{fact}\label{fact:trans-sym}
 Let $M=(\bD,\bI,\Sigma,\sim)$, $\sigma\in \Sigma$ and $K_z\alpha\in\L$. 

 \vspace{1mm}
 
 \begin{itemize}
     \item[$(1)$] $M,\sigma\models K_z\alpha$ iff  for any $\sigma'\in\Sigma$ such that $\sigma\sim_z\sigma'$,  $M,\sigma'\models K_z\alpha$.

      \vspace{1mm}
     
     \item[$(2)$]  $M,\sigma\models \neg K_z\alpha$ iff for any $\sigma'\in\Sigma$ such that $\sigma\sim_z\sigma'$,  $M,\sigma'\models \neg K_z\alpha$.

     \vspace{1mm}

     \item[$(3)$] As a consequence, for any $\varphi$ of the form $K_z\alpha_1\land\dots K_z\alpha_n\land \lr{K_z}\beta_1\land\dots\land\lr{K_z}\beta_m$ with $1\le m+n$,  $M,\sigma\models \varphi$ iff for any $\sigma'\in\Sigma$ s.t. $\sigma\sim_z\sigma'$,  $M,\sigma'\models \varphi$.
 \end{itemize}
\end{fact}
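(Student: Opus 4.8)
The plan is to reduce all three parts to a single observation: since each $\sim_z$ is an equivalence relation (Definition \ref{def:model-predicate}), the truth value of any formula $K_z\alpha$ at a situation $\sigma$ depends only on the $\sim_z$-equivalence class of $\sigma$. I would isolate this as a core claim: \emph{if $\sigma\sim_z\sigma'$, then $M,\sigma\models K_z\alpha$ iff $M,\sigma'\models K_z\alpha$.} Everything else is bookkeeping on top of this.

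To establish the core claim I would argue that $\sigma\sim_z\sigma'$ forces the two sets $\{\sigma''\in\Sigma : \sigma''\sim_z\sigma\}$ and $\{\sigma''\in\Sigma : \sigma''\sim_z\sigma'\}$ to coincide. This is exactly the standard fact that $\sim_z$-related situations determine the same equivalence class, and it uses symmetry and transitivity of $\sim_z$. Since, by the truth clause for $K_z\alpha$, satisfaction of $K_z\alpha$ at a situation quantifies precisely over that situation's $\sim_z$-class, equal classes yield equal truth values, which is the claim.

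With the core claim in hand, part $(1)$ follows immediately: the left-to-right direction applies the core claim to every $\sigma'\sim_z\sigma$, and the right-to-left direction simply instantiates the universal statement at $\sigma'=\sigma$, which is legitimate because $\sim_z$ is reflexive. Part $(2)$ is the negated form of the same invariance: since $K_z\alpha$ is class-invariant, so is its negation $\neg K_z\alpha$, and the same reflexivity argument supplies the right-to-left direction. For part $(3)$ I would first rewrite each diamond conjunct as $\lr{K_z}\beta_j=\neg K_z\neg\beta_j$, so that every conjunct of $\varphi$ has the shape $K_z(\cdot)$ or $\neg K_z(\cdot)$; by parts $(1)$ and $(2)$ each such conjunct is invariant across the $\sim_z$-class of $\sigma$, and a finite conjunction of class-invariant formulas is again class-invariant, whence the stated equivalence (once more using reflexivity for the easy direction).

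I do not expect a genuine technical obstacle: everything rests on $\sim_z$ being an equivalence relation. The only point worth flagging is conceptual rather than computational, namely that although the language deliberately omits iterated knowledge operators (so positive and negative introspection cannot be stated as object-language axioms such as $K_z\alpha\to K_zK_z\alpha$), these introspective principles are nonetheless recovered at the meta-level, which is precisely the content of the Fact. The only care needed is to keep the orientation of $\sim_z$ straight when chaining symmetry and transitivity, and to note that the rewriting $\lr{K_z}\beta_j=\neg K_z\neg\beta_j$ is well-formed in $\L$ so that part $(2)$ genuinely applies to the diamond conjuncts.
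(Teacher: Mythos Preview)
Your proposal is correct and matches the paper's approach exactly: the paper's proof simply states that ``the key reason for these is that $\sim_z$ is an equivalence'' and skips the details, which is precisely the core observation you isolate and carefully unpack.
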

\begin{proof}
    The key reason for these is that $\sim_z$ is an equivalence. We skip the details. 
\end{proof}

Also, by induction on $\LB$-formulas, we can show the following:

  \vspace{1mm}

\begin{fact}\label{fact:Boolean-locality}
    Let $M=(\bD,\bI,\Sigma,\sim)$ and $M'=(\bD,\bI,\Sigma',\sim')$ be models, and $\alpha\in\LB$. Let $\sigma\in\Sigma, \sigma'\in\Sigma'$ s.t. for any variable $z$ occurring in $\alpha$, $\sigma(z)=\sigma'(z)$. Then, it holds that: 
    
     \vspace{1mm}
    
    \begin{center}
        $M,\sigma\models\alpha$ iff $M',\sigma'\models\alpha$.
    \end{center}
\end{fact}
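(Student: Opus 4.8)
The plan is to proceed by a straightforward structural induction on the $\LB$-formula $\alpha$. The crucial observation, which I would state at the outset, is that formulas of $\LB$ are built solely from the atomic formulas $\pt$ and $t_1\equiv t_2$ using the Boolean connectives $\neg$ and $\land$; in particular they contain no knowledge operators $K_z$ and no dynamic operators $[z]$. Consequently their truth value at a situation never consults the epistemic relations $\sim$ or the ambient set $\Sigma$ of situations, but depends only on the interpretation $\bI$ and on the values that the situation assigns to the terms occurring in $\alpha$. Since $M$ and $M'$ share the same domain $\bD$ and the same interpretation $\bI$, it will suffice to show that these two pieces of data agree across $\sigma$ and $\sigma'$ for all relevant terms.

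First I would record the key term-level fact: for every term $t\in\term$ occurring in $\alpha$, $t^{(\bI,\sigma)}=t^{(\bI,\sigma')}$. If $t\in Cons$ this is immediate, since $t^{(\bI,\sigma)}=\bI(t)=t^{(\bI,\sigma')}$ by the definition of the value of a term together with the fact that $\bI$ is common to both models. If $t\in Var$, then $t$ is a variable occurring in $\alpha$, so by hypothesis $\sigma(t)=\sigma'(t)$, whence $t^{(\bI,\sigma)}=\sigma(t)=\sigma'(t)=t^{(\bI,\sigma')}$. With this in hand the two base cases are immediate: for $\pt=P(t_1,\dots,t_m)$ we have $(t_1^{(\bI,\sigma)},\dots,t_m^{(\bI,\sigma)})=(t_1^{(\bI,\sigma')},\dots,t_m^{(\bI,\sigma')})$, and each $t_i$ occurs in $\alpha$, so $M,\sigma\models\pt$ iff this tuple lies in $\bI(P)$ iff $M',\sigma'\models\pt$; the case $t_1\equiv t_2$ is handled identically using the equality of $t_1^{(\bI,\sigma)}$ and $t_2^{(\bI,\sigma)}$.

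For the inductive step the connectives are routine. For $\alpha=\neg\beta$ and $\alpha=\beta_1\land\beta_2$, every variable occurring in the subformula $\beta$ (respectively $\beta_1$ or $\beta_2$) also occurs in $\alpha$, so the hypothesis $\sigma(z)=\sigma'(z)$ continues to hold for all variables of the subformula and the induction hypothesis applies directly; the equivalence then transfers through $\neg$ and $\land$ by the usual truth conditions. I do not expect any genuine obstacle here: the content of the statement is precisely that the $\Sigma$- and $\sim$-components of a model are invisible to $\LB$, so the only point requiring care is the bookkeeping that variable occurrences are preserved under passage to subformulas, which guarantees the induction hypothesis is applicable at each step.
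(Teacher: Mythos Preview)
Your proposal is correct and matches the paper's approach exactly: the paper simply states that the fact is shown ``by induction on $\LB$-formulas'' without spelling out any details, and your argument is precisely that induction carried out in full. The term-level observation and the bookkeeping about variable occurrences in subformulas are exactly what is needed; there is nothing to add.
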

 
\vspace{1mm}

In what follows, for any $\T\subseteq Cons$ and $u\in\term$, we define the following:

 \vspace{1mm}

\begin{center}
    $Ru=\T:=\bigwedge_{t\in \T}Rut\land \bigwedge_{t\in Cons\setminus \T}\neg Rut$ 
\end{center}

 \vspace{1mm}

\noindent expressing that $\T$ is exactly the set of (the names of) $\bR$-successors of $z$. For instance,  $Rc_1=\{c_2\}$ means that via the relation $\bR$, $c_1$ can only reach $c_2$ but not all other constants. Now we have the following about the effects of updates on  $\LB$-formulas:

 \vspace{1mm}

\begin{fact}\label{fact:Boolean-connection}
   Let $M=(\bD, \bI, \Sigma, \sim)$ be a model, $\sigma_1\in\Sigma$. For any $\alpha\in\LB$ and  $z\in\{x,y\}$,  

    \vspace{1mm}
   
   \begin{center}
       $M,\sigma_1\models \bigwedge_{\T\subseteq Cons}(Rz=\T\to \bigwedge_{c\in \T}\alpha[c/z])$\; iff\; for all $\sigma_2\in\mathsf{R}^z(\sigma_1)$, $M',\sigma_2\models \alpha$.
   \end{center}

    \vspace{1mm}
   
\noindent   where $M'=(\bD, \bI, \Sigma', \sim')$  is the update of $M$ (associated to $\sigma_2$) induced by $[z]$.   
\end{fact}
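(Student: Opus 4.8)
The plan is to prove both directions of the biconditional at once, by reducing each side to one and the same purely first-order condition on assignments. The key insight that makes the statement tractable is that the truth of an $\LB$-formula never depends on the epistemic components $\Sigma$ and $\sim$ of a model: its truth clauses mention only atoms and Boolean connectives. Consequently, although the update induced by $[z]$ attaches to each $\sigma_2 \in \mathsf{R}^z(\sigma_1)$ its own, possibly different, $\Sigma'$ and $\sim'$, none of this data matters for evaluating the Boolean $\alpha$. This is precisely the content of Fact \ref{fact:Boolean-locality}, which I will use to discard the updated epistemic structure entirely.

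First I would simplify the left-hand side. Because $\bR$ is serial and, by the model definition, every vertex carries a name, there is exactly one $\T \subseteq Cons$ for which $Rz = \T$ holds at $\sigma_1$, namely $\T_0 = \{c \in Cons : (\sigma_1(z), \bI(c)) \in \bR\}$, the set of all names of $\bR$-successors of $\sigma_1(z)$. For every other $\T$ the antecedent $Rz = \T$ is false at $\sigma_1$, so that conjunct holds vacuously. Hence the left-hand side is equivalent to $M, \sigma_1 \models \bigwedge_{c \in \T_0} \alpha[c/z]$.

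Next I would invoke two routine facts about $\LB$. The first is $\Sigma$- and $\sim$-independence of Boolean satisfaction: for any assignment $\tau$, whether $\alpha$ holds depends only on $\bI$ and on the values $\tau(w)$ for variables $w$ occurring in $\alpha$; this is Fact \ref{fact:Boolean-locality}, and it licenses an auxiliary notation $\bI, \tau \models \alpha$ for arbitrary $\tau \in \bD^{Var}$ that agrees with $M, \tau \models \alpha$ whenever $\tau$ lies in the situation set of the model in question. The second is a standard substitution lemma, $\bI, \tau \models \alpha[c/z]$ iff $\bI, \tau[z \mapsto \bI(c)] \models \alpha$, proved by an easy induction on $\alpha$ whose only interesting case is atomic formulas, where one checks the term valuations. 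Applying these, the simplified left-hand side becomes: for every $c \in \T_0$, $\bI, \sigma_1[z \mapsto \bI(c)] \models \alpha$. For the right-hand side, observe that $\mathsf{R}^z(\sigma_1)$ consists exactly of the assignments $\sigma_1[z \mapsto s]$ with $(\sigma_1(z), s) \in \bR$, all other coordinates unchanged; since every vertex is named, $\{\bI(c) : c \in \T_0\}$ is exactly the set of $\bR$-successors of $\sigma_1(z)$, so these assignments are precisely $\{\sigma_1[z \mapsto \bI(c)] : c \in \T_0\}$ (repeated names being harmless, as $\alpha[c/z]$ depends on $c$ only through $\bI(c)$). Each such $\sigma_2$ lies in its updated $\Sigma'$ (as noted right after Definition \ref{def:updates}), so by $\Sigma$-independence we may replace $M', \sigma_2 \models \alpha$ with $\bI, \sigma_2 \models \alpha$. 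Thus the right-hand side also reads: for every $s$ with $(\sigma_1(z), s) \in \bR$, $\bI, \sigma_1[z \mapsto s] \models \alpha$, which is identical to the reduced left-hand side.

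I expect the main obstacle to be conceptual rather than computational: one must argue cleanly that the epistemic data of the updated models $M'$ are genuinely irrelevant to $\LB$-satisfaction, so that the distinct $\Sigma'$ and $\sim'$ produced for each $\sigma_2$ can uniformly be set aside via Fact \ref{fact:Boolean-locality}. Once that reduction is justified, the remaining ingredients — the vacuity of all but one conjunct on the left, the substitution lemma, and the correspondence up to naming between $\bR$-successors of $\sigma_1(z)$ and the constants in $\T_0$ — are straightforward inductions and bookkeeping.
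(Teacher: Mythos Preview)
Your proposal is correct and follows essentially the same route as the paper: both identify the unique $\T_0$ making $Rz=\T_0$ true, both observe that $\mathsf{R}^z(\sigma_1)=\{\sigma_1[z:=\bI(c)]\mid c\in\T_0\}$, and both exploit that $\LB$-satisfaction depends only on $\bI$ and the variable values. The only cosmetic difference is packaging: the paper runs a direct induction on $\alpha$ (treating the atomic cases explicitly), whereas you factor that same induction through Fact~\ref{fact:Boolean-locality} plus a separate substitution lemma.
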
 
\begin{proof}
Notice that the $\T$ making $Rz=\T$ true is exactly $\{c\in Cons\mid M,\sigma_1\models Rzc\}$. Given this $\T$, we can fix $\mathsf{R}^z(\sigma_1)$ as $\{\sigma_1[z:=\bI(c)]\mid c\in \T\}$, where $\sigma_1[z:=c]$ might be different from $\sigma_1$ on the value of $z$: the former assigns the value of $c$ to $z$. Now,  the fact can be proved  by induction on formulas. The cases for Boolean connectives are trivial. The proofs for  $\pt$ and $t_1\equiv t_2$ are similar, and we just consider the former. 
\begin{center}
\begin{tabular}{r@{\qquad}c@{\qquad}l}
  $M,\sigma_1\models \bigwedge_{c\in \T} \pt[c/z]$   &  iff & for all $c\in \T$,   $M,\sigma_1[z:=\bI(c)]\models  \pt$      \\
  &  iff  &   for all $\sigma_2\in \mathsf{R}^z(\sigma_1)$,   $M',\sigma_2\models  \pt$    
\end{tabular}
\end{center}
The second holds by  the connection between $\T$ and $\mathsf{R}^z(\sigma_1)$ and the fact that all of $\bD$, $\bI$ and the value of the other variable are the same in $M,\sigma_1$ and $M',\sigma_2$.
\end{proof}

While the above preliminaries concern the static part $\L^{-}$, the result below is involved with the dynamic operators, which means that after a movement of a player, the player  still remembers what was known in the previous situation (Remark \ref{remark:update}):

 \vspace{1mm}

\begin{fact}\label{fact:movement-theother}
    Let $M=(\bD,\bI,\Sigma,\sim)$ be a model, $\sigma\in\Sigma$, and $\{z,z'\}=\{x,y\}$. If $M,\sigma\models K_zz'$, then $M,\sigma\models [z]K_zz'$.  
\end{fact}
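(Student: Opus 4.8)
The plan is to unfold both semantic clauses and reduce the claim to a single coordinate-matching statement about the updated assignments. By the truth condition for $[z]$, proving $M,\sigma\models[z]K_zz'$ amounts to showing that for every $\sigma_2\in\mathsf{R}^z(\sigma)$ the updated model $(\bD,\bI,\Sigma',\sim')$ satisfies $K_zz'$ at $\sigma_2$ (note $\sigma_2\in\Sigma'$ always, so this is well-posed). Unfolding $K_zz'$ in the updated model and using clause $(c)$ of Definition~\ref{def:updates} (so that $\tau\sim'_z\sigma_2$ means exactly $\tau(z)=\sigma_2(z)$), the goal reduces to: for every $\tau\in\Sigma'$ with $\tau(z)=\sigma_2(z)$ we have $\tau(z')=\sigma_2(z')$. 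Since $\mathsf{R}^z$ changes only the value of $z$, I would first record the basic identity $\sigma_2(z')=\sigma(z')$, so that the target becomes $\tau(z')=\sigma(z')$.

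Next I would split along the two cases defining $\Sigma'$. In case $(a)$ we have $\Sigma'=\{\sigma_2\}$, so the only candidate $\tau$ is $\sigma_2$ itself and there is nothing to prove. The substance is case $(b)$, where $\Sigma'\subseteq\mathsf{R}^z(\Sigma|\sigma)$. Fix $\tau\in\Sigma'$ with $\tau(z)=\sigma_2(z)$; by definition of $\mathsf{R}^z(\Sigma|\sigma)$ there is a witness $\rho\in\Sigma|\sigma$ with $\mathsf{R}^z\rho\tau$. Again because $\mathsf{R}^z$ fixes the non-moving coordinate, $\tau(z')=\rho(z')$, so it suffices to show $\rho(z')=\sigma(z')$.

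The main obstacle is precisely this last equality, and it is where the definition of $\Sigma|\sigma$ as the union of \emph{both} players' indistinguishability neighbourhoods bites: the witness $\rho$ satisfies $\rho\sim_z\sigma$ or $\rho\sim_{z'}\sigma$, yet the hypothesis $K_zz'$ only controls the first alternative. If $\rho\sim_z\sigma$, then the truth condition for $K_zz'$ together with $M,\sigma\models K_zz'$ yields $\rho(z')=\sigma(z')$ directly. For the remaining alternative $\rho\sim_{z'}\sigma$, I would invoke the standing assumption that $M$ is a $k$-sight model: taking both the player and the target variable to be $z'$ in Definition~\ref{def:k-sight-predicate-model}, and noting $\sigma(z')\in\mathbb{D}^0(\sigma(z'))\subseteq\mathbb{D}^k(\sigma(z'))$, every player always knows its own position, so $\rho\sim_{z'}\sigma$ forces $\rho(z')=\sigma(z')$ as well. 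Either way $\rho(z')=\sigma(z')$, hence $\tau(z')=\rho(z')=\sigma(z')=\sigma_2(z')$, which closes the argument. I expect the only delicate point to be making this appeal to the $k$-sight property explicit, i.e. observing that the ``own-position'' instance of the sight condition is exactly what neutralises the $\sim_{z'}$-branch of $\Sigma|\sigma$ that $K_zz'$ cannot by itself see.
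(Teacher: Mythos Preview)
Your proposal is correct and follows essentially the same route as the paper: split on clauses $(a)$/$(b)$ of Definition~\ref{def:updates}, in case $(b)$ pull back along $\mathsf{R}^z$ to a witness in $\Sigma|\sigma$, and handle the two alternatives $\rho\sim_z\sigma$ and $\rho\sim_{z'}\sigma$ via the hypothesis $K_zz'$ and the $k$-sight property respectively. The only cosmetic difference is that you unfold $\sim'_z$ via clause $(c)$ explicitly and spell out the own-position instance of $k$-sight with $\sigma(z')\in\mathbb{D}^0(\sigma(z'))$, whereas the paper simply writes ``by the $0\le k$-sight ability''.
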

\begin{proof}
W.l.o.g., let $z:=x$ and $z':=y$. Suppose $M,\sigma\models K_xy$. Now, let $\sigma'\in \mathsf{R}^{x}(\sigma)$ and we consider the updated model $M'$ associated to $\sigma'$. There are different cases.

\vspace{1mm}

First, if $\sigma'(x)\in\mathbb{D}^{k}(\sigma'(y))$, then the $\Sigma'$ of $M'$ is  $\{\sigma'\}$, which gives us $M',\sigma'\models K_xy$.

\vspace{1mm}

Next, assume that $\sigma'(x)\not\in\mathbb{D}^{k}(\sigma'(y))$. Now, let $\sigma'_1\in \Sigma'$ s.t. $\sigma'\sim'_x\sigma'_1$. Given $\sigma'_1\in \Sigma'$,  there is some $\sigma_1\in \Sigma|\sigma$ s.t. $\sigma'_1\in\mathsf{R}^{x}(\sigma_1)$. Notice that $\sigma_1(y)=\sigma'_1(y)$ and $\sigma(y)=\sigma'(y)$. So, it suffices to show that $\sigma_1(y)=\sigma(y)$, which then  can give us $\sigma'_1(y)=\sigma'(y)$.
Since $\sigma_1\in \Sigma|\sigma$, we have $\sigma_1\sim_x\sigma$ or $\sigma_1\sim_y\sigma$: if $\sigma_1\sim_x\sigma$, then by $M,\sigma\models K_xy$, it holds that $\sigma_1(y)=\sigma(y)$, otherwise by the $0\le k$-sight ability, $\sigma_1(y)=\sigma(y)$ holds as well. 
\end{proof}

 \section{Axiomatization of the static  fragment: \texorpdfstring{$\ELCR^-$}{}}\label{sec:axiomatization-static}

 Having seen the basic properties of the logic, we will proceed to  provide a Hilbert-style calculus for $\ELCR$ and show that it has a decidable satisfiability problem. In this section we will first consider the static part without action modalities $[x]$ or $[y]$. In what follows, we use $z,z'$ for variables in $\{x,y\}$. Also, for any $\T\subseteq Cons$, we define

 \vspace{1mm}

\begin{center}
$K_zz'=\T:=\bigwedge_{c\in \T} \lr{K_z}z'\equiv c \land \bigwedge_{c\in Cons\setminus \T} K_z\neg z'\equiv c$    
\end{center}

 \vspace{1mm}

\noindent expressing that $\T$ consists of the possible positions of $z'$ considered by $z$. 

\begin{table}[ht]
    \centering
    \caption{Proof system ${\bf ELCR}^-$ for the static part $\ELCR^-$.}
 \label{table:proof-system-1}
    \begin{tabular}{|ll|}
    \hline
    \multicolumn{2}{|l|}{~{\bf I:}\;General axioms and rules for Boolean connectives and $\equiv$}\\
     \hline
         $(\mathtt{Tau})$ &    Propositional tautologies \\
$(\mathtt{A1})$ & $t_1\equiv t_1$  \\
$(\mathtt{A2})$ & $t_1\equiv t_2\to t_2\equiv t_1$\\
$(\mathtt{A3})$& $t_1\equiv t_2\land t_2\equiv t_3\to t_1\equiv t_3$\\
$(\mathtt{A4})$& $t_1\equiv t_2\to (\alpha\leftrightarrow \alpha[t_1/t_2])$ given that $\alpha\in\LB$.\\
$(\mathtt{MP})$& From $\varphi\to\psi$ and $\varphi$, infer $\psi$.  \\ 
     \hline
    \multicolumn{2}{|l|}{~{\bf II:}\; Axioms for basics of the games}\\
    \hline
    $(\mathtt{Seriality})$  & $\bigvee_{t\in Cons}Rct$\\
    $(\mathtt{At}$-$\mathtt{Some}$-$\mathtt{Where})$  &    $\bigvee_{c\in Cons} z\equiv c$, given $z\in\{x,y\}$ \\
    $(k$-$\mathtt{sight})$ & 
    $\mathsf{D}^kzt\to K_zt$, where $z\in\{x,y\}$ and $t\in\term$.\\
    \hline
 \multicolumn{2}{|l|}{~{\bf III:}\; Axioms and rules for $K_z\alpha$}\\
     \hline
 $(\mathtt{K})$ & $K_z(\alpha\to \beta)\to(K_z\alpha\to K_z\beta)$, where $\alpha,\beta\in\LB$.\\ 
    $(\mathtt{T})$  & $K_z\alpha\to \alpha$, where $\alpha\in\LB$.\\
   $(\mathtt{Knowledge}$-$\mathtt{Ground})$    
      & $K_z z'=\T\to (K_z\alpha\leftrightarrow \bigwedge_{c\in \T}\alpha[c/z'])$,\\
      &\quad  given $\T\subseteq Cons$, $\{z,z'\}=\{x,y\}$ and $\alpha\in\LB$.\\
$(\mathtt{K}$-$\mathtt{Additivity})$    &  From $\varphi\to\alpha$, infer $\varphi\to K_z\alpha$, where $\varphi$ is of the form \\
&\quad  $K_z\alpha_1\land\dots\land K_z\alpha_n\land \lr{K_z}\beta_1\land\dots\land\lr{K_z}\beta_m$ \\
&\quad s.t. $1\le m+n$ and $\alpha,\alpha_{1\le i\le n},\beta_{1\le i\le m}\in\LB$.\\
$(\mathtt{K}$-$\mathtt{Elimination})$   &  From $\varphi\to (K_z\alpha\to\beta)$, infer $\varphi \to (\alpha\to  \beta)$, where $\varphi$ has\\
& \quad   the form $K_{z'}\alpha_1\land\dots\land K_{z'}\alpha_n\land \lr{K_{z'}}\beta_1\land\dots\land\lr{K_{z'}}\beta_m$ \\
&\quad with  $1\le m+n$,  $\alpha,\beta,\alpha_{1\le i\le n},\beta_{1\le i\le m}\in\LB$ and $\{z,z'\}=\{x,y\}$.  \\
              \hline
 \multicolumn{2}{|l|}{~{\bf IV:}\; Interaction axioms for  $K_z\alpha$ and $K_zt$}\\
     \hline
  $(\mathtt{De}$-$\mathtt{Re}$-$\mathtt{Knowledge})$   &  $t\equiv c\to (K_zt\leftrightarrow K_zt\equiv c)$, where $z\in\{x,y\}$ and $t\in\term$.\\
  $(\mathtt{Structure}$-$\mathtt{Knowledge})$   & $(K_z\T\land \alpha(t_1,t_2,\ldots,t_m))\to K_z \alpha(t_1,t_2,\ldots,t_m)$,\\
  &\quad  given that $\{t_1,\ldots,t_m\}\subseteq \T$ and $\alpha\in\LB$.\\
     \hline
    \end{tabular}
\end{table}

The details of the proof system are given in Table \ref{table:proof-system-1}. Notice that some axioms are redundant, but we keep them for convenience. Let us briefly comment on the axioms.

      \vspace{1mm}
     
\begin{itemize}
    \item[$\bullet$] In the part ${\bf II}$,   $(\mathtt{Seriality})$ means a state in a graph always has  ${\bf R}$-successors, and $(\mathtt{At}$-$\mathtt{Some}$-$\mathtt{Where})$ means that a player is always at some vertex in game graphs.

         \vspace{1mm}

    \item[$\bullet$] The part ${\bf III}$ is about knowledge operators. 
Formula  $(\mathtt{Knowledge}$-$\mathtt{Ground})$ lays out the foundation of knowledge: when $\T$ is the `uncertainty scope' of $z$ regarding the position of the other player $z'$, $z$ knows $\alpha$ iff every possible position of $z'$ considered by $z$ together with other relevant parameters can satisfy $\alpha$. 

         \vspace{1mm}

        \item[$\bullet$] The last part ${\bf IV}$ is about the interactions between $K_zt$ and $K_z\alpha$. The axiom $(\mathtt{De}$-$\mathtt{Re}$-$\mathtt{Knowledge})$, motivated by \cite{knowing-value}, states that when $t$ and $c$ have the same value,  $z$ knows the value of $t$ iff $z$ knows the fact that they have the same value. Finally, the axiom   $(\mathtt{Structure}$-$\mathtt{Knowledge})$ means that when $z$ knows the values of terms in a true statement $\alpha$, $z$ knows the fact that $\alpha$. 
\end{itemize}

 \vspace{1mm}

We write $\vdash\varphi$ if {\em $\varphi$ is provable in the calculus ${\bf ELCR}^-$}. Also, we write $\Phi\vdash\varphi$ if there exist finitely many $\varphi_1,\dots,\varphi_n\in\Phi$ such that $\vdash\varphi_1\land\dots\varphi_n\to \varphi$, and  when $\Phi$ is a singleton, e.g., $\{\psi\}$, we employ $\psi\vdash\varphi$ for simplicity.

 \vspace{2mm}

\begin{fact}\label{fact:soundness-static}
${\bf ELCR}^{-}$ is sound for $\ELCR^-$.
\end{fact}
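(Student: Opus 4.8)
The plan is to verify soundness in the two standard components: each axiom scheme in Table~\ref{table:proof-system-1} is valid over the class of $k$-sight models, and each inference rule preserves validity over that class. Most of the semantic work is already packaged in the facts established above, so the bulk of the argument reduces to matching each axiom with the appropriate observation. The single genuinely new rule is $(\mathtt{K}$-$\mathtt{Elimination})$, and I expect it to require a small model-surgery argument.

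For Part~{\bf I}, $(\mathtt{Tau})$ and $(\mathtt{MP})$ are standard, while $(\mathtt{A1})$--$(\mathtt{A3})$ are immediate from $\equiv$ being interpreted as genuine equality of values. The substitution axiom $(\mathtt{A4})$ and the interaction axiom $(\mathtt{Structure}$-$\mathtt{Knowledge})$ both rest on value-determinacy of Boolean formulas: since the truth of an $\LB$-formula at $\sigma$ depends only on the values its terms receive (Fact~\ref{fact:Boolean-locality}), replacing a term by a co-valued one cannot change truth, and $(\mathtt{Structure}$-$\mathtt{Knowledge})$ follows from Fact~\ref{fact:validities}$(2)$ extended from atoms to arbitrary $\LB$-formulas by the same locality reasoning. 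For Part~{\bf II}, $(\mathtt{Seriality})$ and $(\mathtt{At}$-$\mathtt{Some}$-$\mathtt{Where})$ are read off directly from the model definition (seriality of $\bI(R)$ together with surjectivity of the naming of vertices by constants, and totality of assignments), and $(k$-$\mathtt{sight})$ is exactly Fact~\ref{fact:validities}$(1)$.

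For Part~{\bf III}, $(\mathtt{K})$ and $(\mathtt{T})$ follow from $\sim_z$ being an equivalence (reflexivity for $(\mathtt{T})$). The axiom $(\mathtt{Knowledge}$-$\mathtt{Ground})$ is the static counterpart of Fact~\ref{fact:Boolean-connection}: when $K_zz'=\T$ holds at $\sigma$, the set of values that $z'$ takes across the $\sim_z$-class of $\sigma$ is exactly $\{\bI(c)\mid c\in\T\}$, while $z$'s own position is fixed throughout that class (as $k\ge 0$ forces $K_zz$); hence, by locality, $K_z\alpha$ holds iff $\alpha$ is satisfied under each admissible value $\bI(c)$ of $z'$, which is precisely $\bigwedge_{c\in\T}\alpha[c/z']$. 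In Part~{\bf IV}, $(\mathtt{De}$-$\mathtt{Re}$-$\mathtt{Knowledge})$ is the validity already noted in the text, namely that $K_zt$ is equivalent to $z$ knowing the equality $t\equiv c$ for the constant $c$ naming the value of $t$. Finally, the rule $(\mathtt{K}$-$\mathtt{Additivity})$ preserves validity by Fact~\ref{fact:trans-sym}$(3)$: its premise-shape $\varphi$ is stable along $\sim_z$, so if $\varphi\to\alpha$ is valid and $M,\sigma\models\varphi$, then $\varphi$, and hence $\alpha$, holds at every $\sigma'\sim_z\sigma$, giving $M,\sigma\models K_z\alpha$.

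The main obstacle is the soundness of $(\mathtt{K}$-$\mathtt{Elimination})$, whose conclusion genuinely strengthens its premise (replacing $K_z\alpha$ by $\alpha$ in the antecedent) and whose side-formula $\varphi$ is stable only along the \emph{other} relation $\sim_{z'}$. I would argue contrapositively. Assume $\varphi\to(\alpha\to\beta)$ fails, witnessed by a $k$-sight model $M=(\bD,\bI,\Sigma,\sim)$ and $\sigma$ with $M,\sigma\models\varphi\land\alpha\land\neg\beta$. For each diamond conjunct $\lr{K_{z'}}\beta_i$ of $\varphi$ fix a witness $\tau_i\sim_{z'}\sigma$ with $M,\tau_i\models\beta_i$, and set $\Sigma^*=\{\sigma,\tau_1,\dots,\tau_m\}$. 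Define $\sim^*_{z'}$ as $\sim_{z'}$ restricted to $\Sigma^*$ (all these points are mutually $\sim_{z'}$-related by transitivity) and $\sim^*_z$ as the identity on $\Sigma^*$. The resulting $M^*$ is still a $k$-sight model, since restricting an indistinguishability relation can only make the $k$-sight constraint easier to meet and the identity trivially satisfies it. By Fact~\ref{fact:Boolean-locality} every $\LB$-conjunct keeps its truth value at the shared points, so the box conjuncts $K_{z'}\alpha_j$ and the diamond conjuncts $\lr{K_{z'}}\beta_i$ (witnessed by the $\tau_i$) still hold at $\sigma$, as does $\neg\beta$; and because $\sim^*_z$ is the identity, $K_z\alpha$ now reduces to $\alpha$ at $\sigma$, which holds. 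Thus $M^*,\sigma\models\varphi\land K_z\alpha\land\neg\beta$, refuting the premise $\varphi\to(K_z\alpha\to\beta)$. The delicate points are keeping $M^*$ within the $k$-sight class and preserving the existential conjuncts of $\varphi$, both of which this construction is designed to handle.
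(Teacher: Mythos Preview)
Your proposal is correct and tracks the paper's proof closely for $(\mathtt{Knowledge}$-$\mathtt{Ground})$ and $(\mathtt{K}$-$\mathtt{Additivity})$. The one place where you diverge is the model-surgery for $(\mathtt{K}$-$\mathtt{Elimination})$. The paper's construction is simpler: starting from the same countermodel $M=(\bD,\bI,\Sigma,\sim)$ and $\sigma$ with $M,\sigma\models\varphi\land\alpha\land\neg\beta$, it keeps $\Sigma$ and $\sim_{z'}$ \emph{untouched} and only replaces $\sim_z$ by the identity. Since $\varphi$ is built from $K_{z'}$ and $\lr{K_{z'}}$ over $\LB$-formulas, and neither $\Sigma$ nor $\sim_{z'}$ changes, $\varphi$ is preserved at $\sigma$ for free; $\alpha,\neg\beta\in\LB$ are preserved by Fact~\ref{fact:Boolean-locality}; and collapsing $\sim_z$ to the identity turns $\alpha$ into $K_z\alpha$. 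Your version instead shrinks $\Sigma$ to $\{\sigma,\tau_1,\dots,\tau_m\}$ and restricts $\sim_{z'}$, which forces you to explicitly carry the diamond witnesses $\tau_i$ into the new model and to re-verify the box conjuncts on the restricted class. Both constructions yield $k$-sight models and both refute the premise, so your argument goes through; the paper's route simply avoids the bookkeeping around $\Sigma^*$ by observing that nothing about the $z'$-side needs to change at all.
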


\begin{proof}
    See Appendix \ref{sec:appendix-proof-for-soundness-static}.
\end{proof}

\begin{fact}\label{fact:provable-theorem}
The following formulas are provable and rules are derivable: 

 \vspace{1mm}

  \begin{itemize}

      \item[$(1)$] $\vdash K_zz$ and $\vdash K_zz\equiv z$. 

    \vspace{1mm}

      \item[$(2)$] From $\alpha$, infer $K_z\alpha$, where $\alpha\in\LB$.\quad $(\mathtt{Nec})$ 

    \vspace{1mm}
  
    \item[$(3)$] $\vdash K_zc$ for any $c\in Cons$.

    \vspace{1mm}

        \item[$(4)$] $\vdash K_z Cons\cup\{z\}$

    \vspace{1mm}

    \item[$(5)$]   $\vdash \alpha\to K_z \alpha$, given that $\alpha\in\LB$ does not contain the other variable.
    \vspace{1mm}
    \item[$(6)$]   $\vdash K_zz' \land \alpha \to K_z \alpha$, given that $\{z,z'\}=\{x,y\}$ and $\alpha\in\LB$.
        \vspace{1mm}
\item[$(7)$] From $K_z\alpha\to\beta$, infer $\alpha\to \beta$, where $\alpha,\beta\in\LB$. 
\end{itemize}  
\end{fact}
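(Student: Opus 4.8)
The plan is to derive the seven items in the stated order, since they form a short bootstrapping chain: everything is built up from the $k$-sight axiom and a derived necessitation rule, after which items $(3)$--$(7)$ are essentially routine applications of the interaction axioms.

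I would begin with $(1)$. For $\vdash K_zz$, observe that the disjunction $\mathsf{D}^kzz$ contains $\mathsf{D}^0zz = z\equiv z$ as a disjunct, so $(\mathtt{A1})$ and propositional reasoning give $\vdash\mathsf{D}^kzz$; then $(k$-$\mathtt{sight})$ and $(\mathtt{MP})$ yield $\vdash K_zz$. For $\vdash K_zz\equiv z$, I instantiate $(\mathtt{Structure}$-$\mathtt{Knowledge})$ at $\T=\{z\}$ and $\alpha := (z\equiv z)$, which is legitimate because $\{z\}\subseteq\T$; the axiom then reads $(K_zz\land z\equiv z)\to K_z(z\equiv z)$, and both conjuncts of the antecedent are already theorems.

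The step I expect to be the main obstacle is $(2)$, the derived necessitation rule $(\mathtt{Nec})$, because $(\mathtt{K}$-$\mathtt{Additivity})$ can only be applied to an implication whose antecedent $\varphi$ already has the special epistemic shape $K_z\alpha_1\land\dots\land\lr{K_z}\beta_m$ with $m+n\ge 1$; it cannot be used on a bare theorem. This is precisely what $(1)$ supplies. Given $\vdash\alpha$ with $\alpha\in\LB$, I take $\varphi := K_z(z\equiv z)$ (the admissible case $n=1$, $m=0$); propositionally $\vdash\varphi\to\alpha$, so $(\mathtt{K}$-$\mathtt{Additivity})$ gives $\vdash\varphi\to K_z\alpha$, and discharging $\varphi$ by $(1)$ via $(\mathtt{MP})$ produces $\vdash K_z\alpha$.

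With $(\mathtt{Nec})$ available, the remaining items follow quickly. For $(3)$, necessitation on $\vdash c\equiv c$ gives $\vdash K_z(c\equiv c)$, and $(\mathtt{De}$-$\mathtt{Re}$-$\mathtt{Knowledge})$ at $t:=c$ with the same constant $c$ (together with $\vdash c\equiv c$) yields $\vdash K_zc\leftrightarrow K_z(c\equiv c)$, hence $\vdash K_zc$. Item $(4)$ is then immediate, since $K_z(Cons\cup\{z\})$ abbreviates $\bigwedge_{c\in Cons}K_zc\land K_zz$, a finite conjunction of instances of $(3)$ and $(1)$. For $(5)$ and $(6)$ I apply $(\mathtt{Structure}$-$\mathtt{Knowledge})$ with $\T$ the set of terms occurring in $\alpha$: in $(5)$ these terms all lie in $Cons\cup\{z\}$, so $K_z\T$ is a theorem by $(4)$ and the axiom collapses to $\alpha\to K_z\alpha$; in $(6)$ the terms may also include the other variable $z'$, so I feed in the extra hypothesis $K_zz'$ to cover it and obtain $K_zz'\land\alpha\to K_z\alpha$. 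Finally, $(7)$ uses $(\mathtt{K}$-$\mathtt{Elimination})$, whose antecedent must be formed from the other variable $z'$: from $\vdash K_z\alpha\to\beta$ I get $\vdash\varphi\to(K_z\alpha\to\beta)$ with $\varphi := K_{z'}(z'\equiv z')$, the rule gives $\vdash\varphi\to(\alpha\to\beta)$, and discharging $\varphi$ by $(1)$ (applied to $z'$) leaves $\vdash\alpha\to\beta$.
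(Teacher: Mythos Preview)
Your proof is correct and follows essentially the same approach as the paper for items $(1)$--$(3)$, which are the only ones the paper spells out (the rest are ``omitted to save space''). In particular, the paper proves $(\mathtt{Nec})$ exactly as you do, bootstrapping from the theorem $K_z(z\equiv z)$ via $(\mathtt{K}$-$\mathtt{Additivity})$, and proves $(3)$ exactly as you do via $(\mathtt{Nec})$ and $(\mathtt{De}$-$\mathtt{Re}$-$\mathtt{Knowledge})$.

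The one small divergence is in $(1)$: to obtain $\vdash K_z(z\equiv z)$, the paper instantiates $(\mathtt{De}$-$\mathtt{Re}$-$\mathtt{Knowledge})$ at $t:=z$ with the second term also equal to $z$, giving $z\equiv z\to(K_zz\leftrightarrow K_z(z\equiv z))$, and then combines this with $\vdash K_zz$. You instead use $(\mathtt{Structure}$-$\mathtt{Knowledge})$ with $\T=\{z\}$ and $\alpha:=(z\equiv z)$. Both routes are fine; yours is arguably cleaner since the letter $c$ in $(\mathtt{De}$-$\mathtt{Re}$-$\mathtt{Knowledge})$ conventionally suggests a constant, whereas $(\mathtt{Structure}$-$\mathtt{Knowledge})$ is unambiguously stated for arbitrary terms. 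Your derivations of $(4)$--$(7)$ from $(\mathtt{Structure}$-$\mathtt{Knowledge})$ and $(\mathtt{K}$-$\mathtt{Elimination})$ are straightforward and correct.
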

\begin{proof}
We prove the first three items, and  the others are  omitted to save space.

    \vspace{1mm}

(1) For the first item, it goes as follows:
    \begin{center}
\begin{tabular}{l@{\qquad}l@{\qquad}l}
  (1)   & $z\equiv z\to (K_zz\leftrightarrow K_zz\equiv z)$  &    $(\mathtt{De}$-$\mathtt{Re}$-$\mathtt{Knowledge})$    \\
  (2)   & $z\equiv z$  & $(\mathtt{A1})$\\
  (3)   &  $z\equiv z\to \mathsf{D}^{k}zz$ & (by  the definition of $\mathsf{D}^{k}t_1t_2$)\\
  (4)   & $\mathsf{D}^{k}zz$  & ($\mathtt{MP}$, 2, 3)\\
    (5)   & $\mathsf{D}^{k}zz\to K_zz$  & ($k$-$\mathtt{sight}$)\\
  (6)   & $K_zz$ & ($\mathtt{MP}$, 4, 5)\\
    (7)   & $K_zz\equiv z$ & (propositional logic, 1, 2, 6)
\end{tabular}
    \end{center}

     \vspace{1mm}

    (2). Assume that $\alpha$ is the case. Then, $K_zz\equiv z\to \alpha$. So, by  $(\mathtt{K}$-$\mathtt{Additivity})$, we have $K_zz\equiv z\to K_z\alpha$.  By the first item,  $K_zz\equiv z$. Using $(\mathtt{MP})$, we can obtain $K_z\alpha$.

 \vspace{1mm}

(3) For the third one, the details are as follows:
    \begin{center}
\begin{tabular}{l@{\qquad}l@{\qquad}l}
  (1)   & $c\equiv c\to (K_zc\leftrightarrow K_zc\equiv c)$  &    $(\mathtt{De}$-$\mathtt{Re}$-$\mathtt{Knowledge})$    \\
  (2)   & $c\equiv c$  & $(\mathtt{A1})$\\
  (3)   &  $K_zc\leftrightarrow K_zc\equiv c$ & ($\mathtt{MP}$, 1, 2)\\
  (4)   & $K_z c\equiv c$  & ($\mathtt{Nec}$, 2)\\
  (5)   & $K_zc$ & ($\mathtt{MP}$, 3, 4)
\end{tabular}
    \end{center}








This completes the proof.
\end{proof}

We say that a set $\Delta$ of formulas is {\em inconsistent} if 
there exists a finite set $\Gamma \subseteq \Delta$ such that $\vdash\bigwedge \Gamma  \to \bot$, and that $\Delta$ is {\em consistent} if it is not inconsistent.  
We also say that $\Delta$ is a {\em maximally consistent} set ($MCS$ for short) if $\Delta$ is  consistent and it holds that $\varphi \in \Delta$ or $\lnot \varphi \in \Delta$ for every formula $\varphi$ without the dynamic operators. 

Let $\Delta_1,\Delta_2\in MCS$. We write $\Delta_1=^{Cons}\Delta_2$ if they contain the same $\LB$-formulas $\alpha$ whose terms are constants. Also, let  $\T\subseteq \term$, and we write $\Delta_1=_{\T}\Delta_2$ if

 \vspace{1mm}

\begin{center}
    for any $t\in \T$ and $c\in Cons$, $t\equiv c\in \Delta_1$ iff $t\equiv c\in \Delta_2$. 
\end{center}

 \vspace{1mm}

 \noindent It is easy to see that when $\T\subseteq Cons$, $\Delta_1=^{Cons}\Delta_2$ implies $\Delta_1=_{\T}\Delta_2$.

  \vspace{1mm}

\begin{fact}\label{fact:term-equivalence}
Let   $\T\subseteq \term$ and  $\Delta_1,\Delta_2\in MCS$ s.t.  $\Delta_1=^{Cons}\Delta_2$ and $\Delta_1=_{\T}\Delta_2$. Also, let   $\alpha\in\LB$ s.t. the terms occurring in it are among $\T$. Then,  
    $\alpha\in\Delta_1$ iff $\alpha\in\Delta_2$. 
\end{fact}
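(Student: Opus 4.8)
The plan is to reduce the statement to the case of constant-only formulas, where the hypothesis $\Delta_1 =^{Cons} \Delta_2$ applies directly. Since the terms of $\alpha$ lie in $\T\subseteq\term$ and $Var=\{x,y\}$, the formula $\alpha$ contains at most the two variables $x,y$, each of which then belongs to $\T$. First I would, for each variable $z\in\{x,y\}$ occurring in $\alpha$, locate a constant $c_z$ with $z\equiv c_z\in\Delta_1$. Such a constant exists because $(\mathtt{At}$-$\mathtt{Some}$-$\mathtt{Where})$ puts $\bigvee_{c\in Cons}z\equiv c$ into $\Delta_1$, and maximality of $\Delta_1$ then forces some disjunct $z\equiv c_z$ into $\Delta_1$. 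Crucially, since $z\in\T$ and $c_z\in Cons$, the hypothesis $\Delta_1=_{\T}\Delta_2$ yields $z\equiv c_z\in\Delta_2$ as well, so the very same constant $c_z$ is attached to $z$ in both sets.

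Next I would carry out the substitution inside each $MCS$. Let $\alpha^{*}$ be the constant-only $\LB$-formula obtained from $\alpha$ by replacing $x$ with $c_x$ and $y$ with $c_y$ (for whichever of the two variables actually occur). For a single variable, axiom $(\mathtt{A4})$ gives $\vdash c_z\equiv z\to(\alpha\leftrightarrow\alpha[c_z/z])$, while $(\mathtt{A2})$ turns $z\equiv c_z$ into $c_z\equiv z$. Since each $\Delta_i$ ($i=1,2$) is an $MCS$, it is closed under provable consequence and contains $z\equiv c_z$; hence $\alpha\leftrightarrow\alpha[c_z/z]\in\Delta_i$, which gives $\alpha\in\Delta_i$ iff $\alpha[c_z/z]\in\Delta_i$. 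Iterating over the at most two variables shows $\alpha\in\Delta_i$ iff $\alpha^{*}\in\Delta_i$ for both $i=1$ and $i=2$, with one and the same $\alpha^{*}$, precisely because the chosen constants coincide across the two sets.

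Finally, since $\alpha^{*}$ is an $\LB$-formula all of whose terms are constants, $\Delta_1=^{Cons}\Delta_2$ gives $\alpha^{*}\in\Delta_1$ iff $\alpha^{*}\in\Delta_2$. Chaining the three equivalences yields $\alpha\in\Delta_1$ iff $\alpha\in\Delta_2$, as required. The only point requiring genuine care—and the part I regard as the real content of the argument—is this coordination step: ensuring that the witnessing constants $c_z$ can be chosen uniformly for $\Delta_1$ and $\Delta_2$, so that the reduced formula $\alpha^{*}$ is literally identical in both reductions. This is exactly what $\Delta_1=_{\T}\Delta_2$ secures; without it the two reductions could terminate at different constant-only formulas and the concluding appeal to $=^{Cons}$ would fail. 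The substitution bookkeeping (simultaneous versus sequential replacement of $x$ and $y$) is harmless, since $x$ and $y$ are distinct and we are replacing them by constants, so no variable capture or interference can arise.
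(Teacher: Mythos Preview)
Your argument is correct. The paper's own proof is a one-line remark that the fact ``can be proved by induction on $\alpha\in\LB$,'' presumably handling the atomic cases $P\bm t$ and $t_1\equiv t_2$ by substituting constants for the variables via $(\mathtt{A4})$ and then invoking $=^{Cons}$, with the Boolean cases trivial. Your route is organizationally different: rather than descending through the structure of $\alpha$, you perform a single global substitution at the outset, turning $\alpha$ into a constant-only $\alpha^{*}$ and then applying $=^{Cons}$ once. This buys you a cleaner argument with no induction hypothesis to track, and it isolates the genuine content of the fact---the coordination of witnessing constants across $\Delta_1$ and $\Delta_2$ via $=_{\T}$---in a single explicit step. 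The inductive version spreads that same coordination across the atomic base cases. Both approaches rest on the same axioms $(\mathtt{A2})$, $(\mathtt{A4})$, and $(\mathtt{At}$-$\mathtt{Some}$-$\mathtt{Where})$, so neither is more general; yours is simply more direct for this particular statement.
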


\vspace{1mm}

It can be proved by  induction on $\alpha\in\LB$. Now we move to introducing the definition of the canonical models: 

 \vspace{1mm}

\begin{definition}\label{def:canonical-models}
Let $\Delta_0$ be a $MCS$ containing the following:

\vspace{1mm}

\begin{center}
$K_xy=\T_y,\; K_yx=\T_x,\; x\equiv c_x,\; y\equiv c_y$,    
\end{center}

\vspace{1mm}

\noindent where  $\T_y\cup \T_x\cup\{c_x,c_y\}\subseteq Cons$. We defined {\em the canonical model $M^{\Delta_0}=(\bD,\bI,\Sigma,\sim)$ induced by $\Delta_0$} in the following manner:

      \vspace{1mm}
     
\begin{itemize}
    \item[$\bullet$] $\bD=\{[c]\mid c\in Cons\}$, where $[c]:=\{c'\in Cons\mid c'\equiv c\in\Delta_0\}$

         \vspace{1mm}
         
    \item[$\bullet$] $\bI$ is defined as follows:  

         \vspace{1mm}
    
    $\bI(P):=\{([c_1],\dots,[c_n])\mid P(c_1,\dots,c_n)\in \Delta_0 \}$\quad and \quad
        $\bI(c):=[c]$.

       \vspace{1mm}

 \item[$\bullet$] $\Sigma$ is given by the following:\vspace{-2mm}
 \begin{center}
 {\small
 \begin{tabular}{c}
 $\{\Delta_0\}\cup$\\
 $\{\Delta_0=^{Cons}\Delta\in MCS \mid  K_xy=\T_y \in\Delta,   K_yx\equiv c_x\in\Delta,  y\equiv c, \neg c\equiv c_y\in\Delta \; \& \; c\in\T_y\}\cup$\\
      $\{\Delta_0=^{Cons}\Delta\in MCS \mid  K_yx=\T_x\in \Delta, K_xy\equiv c_y\in\Delta,  x\equiv c, \neg c\equiv c_x\in\Delta \; \& \; c\in\T_x  \}$
 \end{tabular}
 }
  \end{center}

\vspace{1mm}
  
\noindent such that for any $z\in\{x,y\}$, $\Delta(z)=[c]$ if $z\equiv c\in\Delta$.

     \vspace{1mm}

    \item[$\bullet$] For any $\Delta_1,\Delta_2\in\Sigma$,  
   $\Delta_1\sim_z\Delta_2$ iff $K_z\Delta_1=K_z\Delta_2$, where for any $\Delta\in \Sigma$, $K_z\Delta:=\{\alpha\mid K_z\alpha\in\Delta\}$.
\end{itemize}
\end{definition}

 \vspace{1mm}

The set $\Sigma$ defined above is desired, in the sense of the following:

 \vspace{1mm}

\begin{fact}\label{fact:canonical-assignment-set}
    Let $\Delta_0\in MCS$ s.t.
$K_xy=\T_y,\; K_yx=\T_x,\; x\equiv c_x,\; y\equiv c_y\in\Delta_0$,
 where  $\T_y\cup \T_x\cup\{c_x,c_y\}\subseteq Cons$. Let $M^{\Delta_0}=(\bD,\bI,\Sigma,\sim)$ be the induced canonical model.

 \vspace{1mm}
 
 \begin{itemize}
     \item[$(1)$] Let $\Gamma:=\{\alpha\in\LB\mid \alpha\in\Delta_0 \, \textit{ s.t. terms in }\,\alpha\;\textit{are constants}\}$. For any $c\in \T_y$ with $\neg c\equiv c_y\in\Delta_0$,   
       $\{K_xy=\T_y, \; y\equiv c, \; \neg c\equiv c_y \;\& \; K_yx\equiv c_x\}$  
 is consistent with $\Gamma$. For any  $c\in \T_x$ with $\neg c\equiv c_x\in\Delta_0$, 
         $\{K_yx=\T_x, \; x\equiv c, \;\neg c\equiv c_x\;\& \; K_xy\equiv c_y\}$  is consistent with $\Gamma$. So, when those constants $c$ exist, there are $MCS$s of $\Sigma$ containing the sets.

\vspace{1mm}

     \item[$(2)$] For any $\Delta\in\Sigma$, each of $x,y$ has exactly one value.

     \vspace{1mm}

         \item[$(3)$]   For any $\Delta_1,\Delta_2\in\Sigma$, if $\Delta_1(x)=\Delta_2(x)$ and $\Delta_1(y)=\Delta_2(y)$, then $\Delta_1=\Delta_2$.  
 \end{itemize}
\end{fact}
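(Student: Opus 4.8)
The plan is to verify the three clauses by producing the maximally consistent sets (\emph{MCS}s) they require and then reading off the structure of $\Sigma$, with essentially all of the genuine work sitting in the consistency claims of item $(1)$; items $(2)$ and $(3)$ are bookkeeping on top of the axioms and Fact \ref{fact:term-equivalence}.

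\textbf{Item $(1)$ (the crux).} By Lindenbaum's lemma it suffices to show each displayed finite set is consistent with $\Gamma$; consistency is exactly what guarantees that the corresponding members of $\Sigma$ actually exist, so establishing it is what makes Definition \ref{def:canonical-models} well founded. Take the first set, $\Gamma\cup\{K_xy=\T_y,\;y\equiv c,\;\neg c\equiv c_y,\;K_yx\equiv c_x\}$ for $c\in\T_y$ with $\neg c\equiv c_y\in\Delta_0$. The decisive hypothesis is $c\in\T_y$: together with $K_xy=\T_y\in\Delta_0$ it yields $\lr{K_x}y\equiv c\in\Delta_0$, i.e.\ $x$ genuinely regards $y\equiv c$ as possible. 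To exploit this I would define a candidate valuation on $\LB$ by declaring $\alpha$ true iff $\alpha[c_x/x][c/y]\in\Delta_0$; this is a constant formula, hence decided by $\Gamma\subseteq\Delta_0$, and it is a maximal consistent $\LB$-theory because substitution carries axioms to (provable) axioms and $\Delta_0$ is consistent. Using $(\mathtt{A4})$ and $x\equiv c_x\in\Delta_0$, replacing $x$ by $c_x$ preserves $\Delta_0$-membership, so the valuation is compatible with $x\equiv c_x$ and with $y\equiv c$. One then checks the knowledge formulas are forced consistently: by $(\mathtt{Knowledge}$-$\mathtt{Ground})$ applied to $K_xy=\T_y$, $K_x\alpha$ reduces to $\bigwedge_{c'\in\T_y}\alpha[c'/y]$, which is precisely the family of $K_x$-facts of $\Delta_0$; and stipulating $K_yx\equiv c_x$ fixes $y$'s scope for $x$ to $\{c_x\}$, so $(\mathtt{Knowledge}$-$\mathtt{Ground})$ reduces each $K_y\alpha$ to $\alpha[c_x/x]$, again read off the same valuation.

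The remaining verification is that no $\bot$ can be derived from the set. The rules $(\mathtt{K}$-$\mathtt{Additivity})$ and $(\mathtt{K}$-$\mathtt{Elimination})$ are exactly what let one move knowledge in and out under these fixed-scope contexts; concretely, from a hypothetical derivation of $\bot$ from the finite set one substitutes $c$ for $y$ and $c_x$ for $x$ throughout and transfers it, via these rules, into a derivation of $\neg\lr{K_x}y\equiv c$, contradicting $\lr{K_x}y\equiv c\in\Delta_0$. The second set is handled symmetrically by interchanging $x$ and $y$ (and $\T_x$ for $\T_y$, $c_x$ for $c_y$). The main obstacle is precisely this step: confirming that the \emph{stipulated} epistemic formula $K_yx\equiv c_x$, which need not lie in $\Delta_0$, can be imposed consistently on a world where $y$ has been relocated to $c$; everything else in the set is either inherited from $\Delta_0$ or placed at a position $x$ already deemed possible.

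\textbf{Item $(2)$.} Each $\Delta\in\Sigma$ is an \emph{MCS}, so $(\mathtt{At}$-$\mathtt{Some}$-$\mathtt{Where})$ gives some $z\equiv c\in\Delta$, whence $z$ has at least one value $[c]$. If $z\equiv c_1,\,z\equiv c_2\in\Delta$, then $(\mathtt{A2})$ and $(\mathtt{A3})$ force $c_1\equiv c_2\in\Delta$, i.e.\ $[c_1]=[c_2]$; hence $\Delta(z)$ is a single well-defined class for each $z\in\{x,y\}$.

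\textbf{Item $(3)$.} Every member of $\Sigma$ satisfies $\Delta_0=^{Cons}\Delta$, so $\Delta_1=^{Cons}\Delta_2$. From $\Delta_1(x)=\Delta_2(x)$ and $\Delta_1(y)=\Delta_2(y)$ we get $\Delta_1=_{\{x,y\}}\Delta_2$, and since every term lies in $Cons\cup\{x,y\}$, Fact \ref{fact:term-equivalence} makes $\Delta_1$ and $\Delta_2$ agree on all of $\LB$. It then remains to match the epistemic formulas, which I would do by the three-way case split on the defining clauses of $\Sigma$. Using $(\mathtt{T})$ to extract $x\equiv c_x$ from $K_yx\equiv c_x$ and $y\equiv c_y$ from $K_xy\equiv c_y$, the value-pairs attached to the three clauses are $(c_x,c_y)$, $(c_x,\,\text{a class}\neq c_y)$, and $(\text{a class}\neq c_x,\,c_y)$, which are pairwise distinct; so equal values place $\Delta_1$ and $\Delta_2$ in the \emph{same} clause. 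Within that clause the relevant scopes ($K_xy=\T_y$ and $K_yx\equiv c_x$, or their duals) coincide, so by $(\mathtt{Knowledge}$-$\mathtt{Ground})$ each $K_z\alpha$ reduces to an $\LB$-formula already decided identically in both sets, while each $K_zt$ is pinned down by $(\mathtt{De}$-$\mathtt{Re}$-$\mathtt{Knowledge})$. Hence $\Delta_1$ and $\Delta_2$ agree on every dynamic-operator-free formula and, being \emph{MCS}s, coincide.
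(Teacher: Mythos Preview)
Your proposal is correct and lands on essentially the same proof as the paper. For item $(1)$ the paper goes straight to the syntactic contradiction you describe at the end: assume inconsistency, rewrite as $K_xy=\T_y\land K_yx\equiv c_x\vdash \neg c\equiv c_y\to(\bigwedge\Phi\to\neg y\equiv c)$, apply $(\mathtt{K}$-$\mathtt{Elimination})$ to strip $K_y$ off $K_yx\equiv c_x$, then $(\mathtt{K}$-$\mathtt{Additivity})$ and $(\mathtt{K})$ to push everything under $K_x$, and finally use $K_xx\equiv c_x,\,K_x\neg c\equiv c_y,\,K_x\bigwedge\Phi\in\Delta_0$ to extract $K_x\neg y\equiv c\in\Delta_0$, contradicting $c\in\T_y$. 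Your valuation construction is a detour you do not actually need (and cannot cash in without the completeness you are proving); the paper omits it entirely. For items $(2)$ and $(3)$ your arguments coincide with the paper's: the same three-way case split on the defining clauses of $\Sigma$, the same use of Fact~\ref{fact:term-equivalence} for the $\LB$-part, and the same reduction of $K_z\alpha$ via $(\mathtt{Knowledge}$-$\mathtt{Ground})$ once the scopes are pinned down.
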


\begin{proof}
    The second item is easy to see. We merely prove the first and the third.

     \vspace{1mm}

    (1) For the first item, it suffices to consider the first part, and the second part is similar. Let $c\in \T_y$. Suppose for reductio that the set is not consistent with $\Gamma$. Then,
    \begin{center}
        $K_xy=\T_y\land K_yx\equiv c_x\vdash \neg c\equiv c_y\to (\bigwedge\Phi\to\neg y\equiv c)$, \quad for some finite $\Phi\subseteq\Gamma$
    \end{center}
    Applying the rule $(\mathtt{K}$-$\mathtt{Elimination})$ can give us the following:
      \begin{center}
     $\vdash K_xy=\T_y\to  (x\equiv c_x\to (\neg c\equiv c_y\to (\bigwedge\Phi\to\neg y\equiv c)))$.
    \end{center}  
By $(\mathtt{K}$-$\mathtt{Additivity})$,  
     $\vdash K_xy=\T_y\to  K_x(x\equiv c_x\to (\neg c\equiv c_y\to (\bigwedge\Phi\to\neg y\equiv c)))$. Then, $\vdash K_xy=\T_y\to  (K_x x\equiv c_x\to (K_x\neg c\equiv c_y\to (K_x\bigwedge\Phi\to K_x\neg y\equiv c)))$. However, since $K_xy=\T_y\land  K_x x\equiv c_x \land K_x\neg c\equiv c_y\land K_x\bigwedge\Phi \in\Delta_0$, we have $K_x\neg y\equiv c$, which contradicts $c\in\T_y$.

     \vspace{1mm}

     (2) We now move to the third item. By the construction, it is easy to see that one of $\Delta_1$ and $\Delta_2$ is $\Delta_0$ iff $\Delta_1=\Delta_2=\Delta_0$. In what follows, we consider  the case that $\Delta_1 \not= \Delta_0$ and $\Delta_2 \not= \Delta_0$. Then, by construction, exactly one of $\Delta_1(x)\not=[c_x]$ and $\Delta_1(y)\not=[c_y]$ is the case, but they cannot hold at the same time. It suffices to consider  the case that $\Delta_1(x)=[c_x]$ and $\Delta_1(y)\not=[c_y]$, and the other case is analogous.

     By assumption, $\Delta_1=_{Var}\Delta_2$, and by construction, $K_xy=\T_y\in\Delta_1\cap \Delta_2$.  By the former and  Fact \ref{fact:term-equivalence}, the $\LB$-parts of $\Delta_1$ and $\Delta_2$  are the same.  Also, the arguments for the formulas   $K_xt$ and $K_yt$ are simple. We now move to considering  $K_x\alpha$ and $K_y\alpha$.
     
     Now suppose that $K_x\alpha\in\Delta_1$. Then, $\bigwedge_{t\in \T_y}\alpha[c_x/x][t/y]\in \Delta_1$. Again, by Fact \ref{fact:term-equivalence}, $\bigwedge_{t\in \T_y}\alpha[c_x/x][t/y]\in \Delta_2$, which together with $K_xy=\T_y, x\equiv c_x\in\Delta_2$ can give us $K_x\alpha\in\Delta_2$. The converse is similar.

     Next, suppose that $K_y\alpha\in\Delta_1$. Let $c'_y\in\T_y$ with $c'_y\equiv y\in\Delta_1$. So, by assumption, $c'_y\equiv y\in\Delta_2$. It is easy to see that $\alpha[c_x/x][c'_y/y]\in\Delta_1$. By Fact \ref{fact:term-equivalence}, $\alpha[c_x/x][c'_y/y]\in\Delta_2$.  With this, we know that $K_y\alpha\in\Delta_2$. Again, the converse is similar.
\end{proof}

\begin{theorem}
 For any $\Delta_0\in MCS$, the induced $M^{\Delta_0}$ is a $k$-sight model.
\end{theorem}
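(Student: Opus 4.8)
The plan is to unfold the definition of a $k$-sight model directly on the canonical structure and discharge its single requirement using the $(k\text{-}\mathtt{sight})$ axiom together with the definition of $\sim_z$. Fix $z\in Var$ and $\Delta_1,\Delta_2\in\Sigma$ with $\Delta_1\sim_z\Delta_2$, and suppose $\Delta_1(z')\in\mathbb{D}^k(\Delta_1(z))$ for some $z'\in Var$ (possibly $z'=z$); I must show $\Delta_1(z')=\Delta_2(z')$. Write $z\equiv a,\ z'\equiv b\in\Delta_1$, so that $\Delta_1(z)=[a]$ and $\Delta_1(z')=[b]$, and recall that $\Delta_1\sim_z\Delta_2$ unfolds to $K_z\Delta_1=K_z\Delta_2$.

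The crux is a bridging claim: \emph{if $[b]=\Delta_1(z')\in\mathbb{D}^k(\Delta_1(z))=\mathbb{D}^k([a])$ in the canonical graph $(\bD,\bI(R))$, then $\mathsf{D}^k z z'\in\Delta_1$.} To prove it, I unpack $[b]\in\mathbb{D}^k([a])$ as the existence of a path $[a]=[d_0],[d_1],\dots,[d_m]=[b]$ with $m\le k$ in the symmetric closure of $\bI(R)$; by the definition of $\bI(R)$ and the fact that every $\Delta\in\Sigma$ satisfies $\Delta=^{Cons}\Delta_0$, each step yields $R d_i d_{i+1}\in\Delta_1$ or $R d_{i+1} d_i\in\Delta_1$, while $d_0\equiv a,\ d_m\equiv b\in\Delta_1$. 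Starting from $\mathsf{D}^0 z d_0 = z\equiv d_0\in\Delta_1$ (which holds since $z\equiv a\equiv d_0$), I push along the path by induction on $j$: the inductive clause of $\mathsf{D}$ contains the disjunct $\mathsf{D}^j z d_j\land(R d_j d_{j+1}\lor R d_{j+1} d_j)$, so maximal consistency gives $\mathsf{D}^{j+1} z d_{j+1}\in\Delta_1$. Monotonicity of $\mathsf{D}$ (the provable implication $\mathsf{D}^m\to\mathsf{D}^k$ for $m\le k$, immediate from the definition) upgrades this to $\mathsf{D}^k z d_m\in\Delta_1$, and a final use of $(\mathtt{A4})$ with $z'\equiv d_m\in\Delta_1$ rewrites $d_m$ to $z'$, delivering $\mathsf{D}^k z z'\in\Delta_1$. (For $z'=z$ the degenerate length-$0$ path together with $(\mathtt{A1})$ already gives $\mathsf{D}^k z z\in\Delta_1$.)

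With the bridging claim in hand the remainder is mechanical. The axiom $(k\text{-}\mathtt{sight})$, namely $\mathsf{D}^k z z'\to K_z z'$, together with modus ponens gives $K_z z'\in\Delta_1$. Since $z'\equiv b\in\Delta_1$, the axiom $(\mathtt{De}\text{-}\mathtt{Re}\text{-}\mathtt{Knowledge})$ turns this into $K_z(z'\equiv b)\in\Delta_1$, i.e.\ $z'\equiv b\in K_z\Delta_1$. Now $\Delta_1\sim_z\Delta_2$ transports the fact: $z'\equiv b\in K_z\Delta_1=K_z\Delta_2$, so $K_z(z'\equiv b)\in\Delta_2$, and the factivity axiom $(\mathtt{T})$ finally yields $z'\equiv b\in\Delta_2$. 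Hence $\Delta_2(z')=[b]=\Delta_1(z')$, which is exactly what the $k$-sight condition demands.

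The main obstacle is precisely the bridging claim, since it is the only place where one must reconcile the \emph{semantic} distance $\mathbb{D}^k$ computed in the canonical graph with the \emph{syntactic} formula $\mathsf{D}^k z z'$ living in the $MCS$; everything else is a routine deployment of the knowledge axioms and the definition of $\sim_z$. Within that claim the delicate bookkeeping is the substitution step: because the intermediate terms in the expansion of $\mathsf{D}^k$ range over all of $Var\cup Cons$, one should check that $(\mathsf{D}^k z d_m)[z'/d_m]$ is provably equivalent to $\mathsf{D}^k z z'$ before invoking $(\mathtt{A4})$. Alternatively, one can sidestep this by first establishing the $\LB$-fragment of the truth lemma for $M^{\Delta_0}$ and combining it with the purely semantic, easily verified correctness of $\mathsf{D}$ as an encoding of $\mathbb{D}^k$, which holds in every model.
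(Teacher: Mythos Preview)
Your argument for the $k$-sight condition is essentially the paper's: obtain $\mathsf{D}^k z z'\in\Delta_1$ from the semantic distance, apply $(k\text{-}\mathtt{sight})$ and then $(\mathtt{De}\text{-}\mathtt{Re}\text{-}\mathtt{Knowledge})$ to get $K_z(z'\equiv b)\in\Delta_1$, transport along $\sim_z$, and finish with $(\mathtt{T})$. You are in fact more careful than the paper on the bridging step---the paper simply writes ``By the definition of $\mathsf{D}^k$, we have $\mathsf{D}^k xz$'' without spelling out the path argument or the substitution issue you flag.

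One omission: the theorem asserts that $M^{\Delta_0}$ is a $k$-sight \emph{model}, not merely that the $k$-sight clause holds, and the paper accordingly also verifies (i) that the components are well-defined (independence of the choice of representatives $[c]$, and that $\Sigma$ is a genuine set of assignments via the auxiliary fact about canonical assignment sets), (ii) seriality of $\bI(R)$ from the axiom $(\mathtt{Seriality})$, and (iii) that each $\sim_z$ is an equivalence relation. These are routine, but your opening phrase ``its single requirement'' suggests you may have overlooked that Definition~\ref{def:k-sight-predicate-model} presupposes Definition~\ref{def:model-predicate}; a sentence dispatching these points would make the proof complete.
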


\begin{proof}
(1) We first show that the components are well-defined. It is simple to check that the definition itself does not depend on the choice of representatives of equivalence classes $[c]$. Also, as indicated by Fact \ref{fact:canonical-assignment-set}, $\Sigma$ is a well-defined set of assignments.

\vspace{1mm}

(2) By the axiom $(\mathtt{Seriality})$, the relation $\bI(R)$ is serial.

\vspace{1mm}

(3) It is straightforward to see that $\sim_x$ and $\sim_y$ are equivalence relations. 

\vspace{1mm}

(4) It remains to show that players have the $k$-sight ability. Assume that $\Delta_1\sim_x\Delta_2$, and we show that for any $z\in\{x,y\}$, if $\Delta_1(z)\in \mathbb{D}^{k}(\Delta_1(x))$, then $\Delta_1(z)=\Delta_2(z)$.

    \vspace{1mm}

By the definition of $\mathsf{D}^k$, we have $\mathsf{D}^kxz$. Then, by the axiom $(k$-$\mathtt{sight})$, it holds that $K_xz$. Also, by the axiom $(\mathtt{At}$-$\mathtt{Some}$-$\mathtt{Where})$, using the fact that $\Delta_1$ is a $MCS$, there is some $c\in Cons$ s.t. $z\equiv c\in\Delta_1$.  Now, using the axiom $(\mathtt{De}$-$\mathtt{Re}$-$\mathtt{Knowledge})$, we can obtain $K_xz\equiv c\in \Delta_1$. Using the axiom $(\mathtt{T})$, we have $z\equiv c\in\Delta_1$. Also, since $\Delta_1\sim_x\Delta_2$, it follows from $K_xz\equiv c\in \Delta_1$ that $z\equiv c\in\Delta_2$. So, $\Delta_1(z)=\Delta_2(z)=[c]$.   
\end{proof}

Now, we show the following {\em Existence Lemma}:

 \vspace{1mm}

\begin{lemma}\label{lemma:existence-lemma-Kphi}
 Let $\Delta_0$ be a $MCS$ such that
$K_xy=\T_y,\; K_yx=\T_x,\; x\equiv c_x,\; y\equiv c_y\in\Delta_0$,    
 where  $\T_y\cup \T_x\cup\{c_x,c_y\}\subseteq Cons$. Also, let $M^{\Delta_0}$ be the induced canonical model of $\Delta_0$. For any $z\in\{x,y\}$ and any $\Delta\in MCS$ from $M^{\Delta_0}$, when  $K_z\alpha\not\in\Delta$, there is some  $\Delta'$ from $M^{\Delta_0}$ s.t. $\Delta\sim_z\Delta'$ and $\alpha\not\in\Delta$. 
\end{lemma}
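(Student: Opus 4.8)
The plan is to prove this as the \emph{Existence Lemma} underlying the forthcoming Truth Lemma, establishing the contrapositive-flavoured statement directly: given $K_z\alpha\notin\Delta$ (so $\alpha\in\LB$ and, as $\Delta$ is an $MCS$, $\neg K_z\alpha\in\Delta$), I will \emph{construct} a witness $\Delta'\in\Sigma$ with $\Delta\sim_z\Delta'$ and $\neg\alpha\in\Delta'$ (reading the conclusion, as intended, as $\alpha\notin\Delta'$). I take $z=x$ throughout, the case $z=y$ being symmetric, and I repeatedly use that every member of $\Sigma$ satisfies $\Delta'=^{Cons}\Delta_0$, so any two worlds of $\Sigma$ decide identically all $\LB$-formulas whose terms are constants.

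First I reduce the knowledge condition to a quantifier-free one via $(\mathtt{Knowledge}\text{-}\mathtt{Ground})$. Let $\T\subseteq Cons$ be $x$'s uncertainty set at $\Delta$, i.e.\ the set with $K_xy=\T\in\Delta$, and let $c$ be the constant with $x\equiv c\in\Delta$ (unique by Fact \ref{fact:canonical-assignment-set}(2)). Since $K_xy=\T\in\Delta$, the axiom yields $K_x\alpha\leftrightarrow\bigwedge_{d\in\T}\alpha[d/y]\in\Delta$; from $K_x\alpha\notin\Delta$ I extract some $d^\ast\in\T$ with $\alpha[d^\ast/y]\notin\Delta$. Using $x\equiv c\in\Delta$ and the substitution axiom $(\mathtt{A4})$ I also replace $x$ by $c$, obtaining the \emph{constant} formula $\beta:=\alpha[c/x][d^\ast/y]$ with $\neg\beta\in\Delta$. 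Since $\beta$ involves only constants and every $\Delta'\in\Sigma$ is $=^{Cons}\Delta$, the formula $\neg\beta$ lies in any witness I build; and at any world with $x\equiv c$ and $y\equiv d^\ast$, $(\mathtt{A4})$ gives $\alpha\leftrightarrow\beta$, so $\neg\alpha$ will hold there.

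Next I locate the witness inside the \emph{restricted} set $\Sigma$, which forces a case split along the star-shaped structure of the model. When $\Delta$ lies in the $\sim_x$-spoke centred on $\Delta_0$ (so $c=c_x$ and $\T=\T_y$), the witness is $\Delta_0$ itself if $d^\ast\equiv c_y$, and otherwise the $y$-variant with $y\equiv d^\ast$, whose membership in $\Sigma$ is exactly what Fact \ref{fact:canonical-assignment-set}(1) supplies (its defining set $\{K_xy=\T_y,\ y\equiv d^\ast,\ \neg d^\ast\equiv c_y,\ K_yx\equiv c_x\}$ is consistent with the constant part of $\Delta_0$). When instead $\Delta$ is an $x$-variant, $x$ already knows $y$'s value: from $K_xy\equiv c_y\in\Delta$ one derives $K_xy=[c_y]$ using $\mathtt{Nec}$ (Fact \ref{fact:provable-theorem}(2)), $(\mathtt{K})$ and equality reasoning, so $\T=[c_y]$ and $d^\ast\equiv c_y$, whence $\Delta'=\Delta$ works.

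Finally I verify $\Delta\sim_x\Delta'$, i.e.\ $K_x\Delta=K_x\Delta'$. Each chosen $\Delta'$ shares with $\Delta$ the value $x\equiv c$ and the uncertainty $K_xy=\T$, and is $=^{Cons}\Delta$; applying $(\mathtt{Knowledge}\text{-}\mathtt{Ground})$ to an arbitrary $K_x\gamma$ and then $(\mathtt{A4})$ to substitute $x\mapsto c$ converts membership of $K_x\gamma$ into a statement about constant formulas, which $\Delta$ and $\Delta'$ settle identically, while the value-knowledge atoms $K_xt$ are fixed by Fact \ref{fact:provable-theorem} ($K_xc$ and $K_xx$ are theorems, and $K_xy$ is determined by the common $\T$). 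I expect the main obstacle to be precisely this last bookkeeping together with the membership check of the third step: unlike a textbook canonical model one cannot simply Lindenbaum-extend $\{\gamma:K_x\gamma\in\Delta\}\cup\{\neg\alpha\}$, since the witness must land in the narrowly carved, star-shaped $\Sigma$, and Fact \ref{fact:canonical-assignment-set}(1) is the load-bearing result that keeps the needed worlds available.
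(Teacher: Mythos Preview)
Your proof is correct and follows essentially the same approach as the paper's: both split on whether $\Delta(x)=[c_x]$ (equivalently, whether $\Delta$ is an $x$-variant), use $(\mathtt{Knowledge}\text{-}\mathtt{Ground})$ together with $x\equiv c_x$ to extract a witnessing constant $d^\ast\in\T_y$ in the non-$x$-variant case, and then take the witness to be $\Delta_0$ or the appropriate $y$-variant supplied by Fact~\ref{fact:canonical-assignment-set}(1), with $\sim_x$ verified via $(\mathtt{Knowledge}\text{-}\mathtt{Ground})$ and $=^{Cons}$. Your treatment is in fact somewhat more explicit than the paper's in the $x$-variant case (you spell out why $K_xy\equiv c_y$ forces $\T=[c_y]$ and hence $\Delta'=\Delta$) and in the final $\sim_x$ bookkeeping, and you correctly read the intended conclusion as $\alpha\notin\Delta'$.
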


\begin{proof}
W.l.o.g., let $z:=x$.  When $\Delta(x)\not=[c_x]$, by construction, $K_xy\equiv c_y\in\Delta$. Now, $K_x\alpha\not\in \Delta$ gives us $\neg\alpha\in \Delta$, so a desired $\Delta'$ is $\Delta$ itself. Let us assume that $\Delta(x)=[c_x]$. Now, it follows from $K_z\alpha\not\in\Delta$ that there is $c^0_y\in\T_y$ such that $\neg \alpha[c_x/x][c^0_y/y]\in\Delta$.

If $c^0_y\equiv c_y\in\Delta$, then $\Delta=\Delta_0$ (cf. the proof for the item $(3)$ of Fact \ref{fact:canonical-assignment-set}).  Now, $\Delta_0$ itself is such a $\Delta'$. Let us now assume that $\neg c^0_y\equiv c_y\in\Delta$. We consider  $\Delta_1\in MCS$  such that $\Delta'=^{Cons}\Delta_0$ and $\{K_xy=\T_y, K_yx\equiv c_x,  y\equiv c^0_y, \neg c^0_y\equiv c_y\}\subseteq \Delta'$. We have see from the item $(1)$ of Fact \ref{fact:canonical-assignment-set} that $\Sigma$ does contain such a $\Delta_1$. It is simple to see that $\neg\alpha\in\Delta_1$. Now it remains to show that $\Delta\sim_x\Delta_1$, which is guaranteed by the fact $\Delta_1=^{Cons}\Delta_0$, $K_xy=\T_y\in\Delta\cap \Delta_1$ and the axiom $(\mathtt{Knowledge}$-$\mathtt{Ground})$.   
\end{proof}

As a consequence, we have the following:

 \vspace{1mm}

\begin{lemma}\label{lemma:existence-lemma-Kxt}
Let $\Delta_0\in MCS$ and $M^{\Delta_0}$ be its induced canonical model. For $\{z,z'\}=\{x,y\}$ and any $\Delta\in MCS$ from $M^{\Delta_0}$, when $\neg K_zz'\in\Delta$, there is a $\Delta'$ from $M^{\Delta_0}$ such that $\Delta'\sim_z\Delta$ and for some $c\in Cons$, $c\equiv z'\in\Delta$ and $c\not\equiv z'\in\Delta'$.
\end{lemma}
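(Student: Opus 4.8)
The plan is to reduce this knowing-\emph{value} statement to the knowing-\emph{proposition} Existence Lemma (Lemma~\ref{lemma:existence-lemma-Kphi}) already in hand, using the axiom $(\mathtt{De}$-$\mathtt{Re}$-$\mathtt{Knowledge})$ as the bridge. The subtlety is that $\neg K_zz'$ is not itself of the shape $\neg K_z\alpha$ with $\alpha\in\LB$, so Lemma~\ref{lemma:existence-lemma-Kphi} cannot be invoked directly; but once we pin down the actual value of $z'$ at $\Delta$, ignorance of that value becomes ignorance of an honest $\LB$-proposition, to which the previous lemma does apply.

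Concretely, I would first locate the constant. Since $z'\in Var$ and $\Delta$ is an $MCS$, the axiom $(\mathtt{At}$-$\mathtt{Some}$-$\mathtt{Where})$ yields some $c\in Cons$ with $z'\equiv c\in\Delta$; by $(\mathtt{A2})$ we also get $c\equiv z'\in\Delta$, which is the constant asserted in the statement. I would then convert \emph{de re} ignorance into \emph{de dicto} ignorance: instantiating $(\mathtt{De}$-$\mathtt{Re}$-$\mathtt{Knowledge})$ at $t:=z'$ gives $z'\equiv c\to(K_zz'\leftrightarrow K_z(z'\equiv c))$, and since both $z'\equiv c\in\Delta$ and $\neg K_zz'\in\Delta$, maximal consistency forces $K_z(z'\equiv c)\notin\Delta$.

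Now the key observation is that $z'\equiv c$ is a genuine $\LB$-formula (an atomic equality between two terms), so Lemma~\ref{lemma:existence-lemma-Kphi} applies with $\alpha:=z'\equiv c$. It supplies a $\Delta'$ from $M^{\Delta_0}$ with $\Delta\sim_z\Delta'$ and $z'\equiv c\notin\Delta'$. As $\Delta'$ is an $MCS$, this means $\neg(z'\equiv c)\in\Delta'$, i.e.\ $c\not\equiv z'\in\Delta'$ (again via $(\mathtt{A2})$), while we already have $c\equiv z'\in\Delta$. Note that $\Delta\sim_z\Delta'$ already guarantees $\Delta'\in\Sigma$, so $\Delta'$ is indeed ``from $M^{\Delta_0}$'' as required, and the $c$ produced at the outset is exactly the witness named in the conclusion.

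I do not anticipate a real obstacle: the whole argument is essentially a one-step corollary. Its only delicate points are (i) verifying that $z'\equiv c$ lies in $\LB$ so that Lemma~\ref{lemma:existence-lemma-Kphi} is legitimately invoked, and (ii) passing from $z'\equiv c\notin\Delta'$ to $\neg(z'\equiv c)\in\Delta'$, which uses nothing beyond the maximality of $\Delta'$. One should also read the conclusion of Lemma~\ref{lemma:existence-lemma-Kphi} in its intended form (with the witness condition holding in the new set $\Delta'$ rather than in $\Delta$); it is that reading which makes the reduction go through.
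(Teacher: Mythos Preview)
Your proof is correct and follows essentially the same route as the paper: reduce the knowing-value claim to a knowing-proposition claim and invoke Lemma~\ref{lemma:existence-lemma-Kphi}. The paper compresses this into a single line via the equivalence $K_zz'\leftrightarrow\bigwedge_{c\in Cons}(\lr{K_z}z'\equiv c\to K_zz'\equiv c)$, whereas you spell out the choice of $c$ explicitly through $(\mathtt{At}$-$\mathtt{Some}$-$\mathtt{Where})$ and $(\mathtt{De}$-$\mathtt{Re}$-$\mathtt{Knowledge})$; the content is the same.
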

 
\begin{proof}
Since $K_zz'\leftrightarrow\bigwedge_{c\in Cons}(\lr{K_z}z'\equiv c\to  K_zz'\equiv c)$,  it holds by Lemma \ref{lemma:existence-lemma-Kphi}.
\end{proof}

Next we proceed to show the crucial {\em Truth Lemma}:

 \vspace{1mm}

 \begin{lemma}\label{lemma-truth-lemma}
    Let $\Delta_0\in MCS$ and $M^{\Delta_0}$ be its induced canonical model. For any $\varphi\in\L^{-}$ and $\Delta$ from $M^{\Delta_0}$,  
        $M^{\Delta_0},\Delta\models\varphi$ iff $\varphi\in \Delta$.
 \end{lemma}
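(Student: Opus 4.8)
The plan is to prove the Truth Lemma by a straightforward induction on the structure of $\varphi\in\L^{-}$, with the cases dictated by the grammar of Definition \ref{def:language}: the atomic formulas $\pt$ and $t_1\equiv t_2$, the knowledge-of-value formulas $K_zt$, the Boolean combinations $\neg\varphi$ and $\varphi\land\varphi$, and the knowledge formulas $K_z\alpha$ with $\alpha\in\LB$. The Boolean cases are entirely routine, following from the induction hypothesis together with the maximal consistency of $\Delta$ (for $\neg$, from $\varphi\notin\Delta$ iff $\neg\varphi\in\Delta$; for $\land$, from closure under conjunction). The substantive work lies in the atomic base cases and in the two flavours of knowledge operator, and the heavy lifting for the latter has already been packaged into the two Existence Lemmas.

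For the atomic cases I would first reduce terms to constants. Since every $\Delta\in\Sigma$ satisfies $\Delta=^{Cons}\Delta_0$ by construction, and since by $(\mathtt{At}$-$\mathtt{Some}$-$\mathtt{Where})$ each variable $z$ occurring in the formula has a constant value $c$ with $z\equiv c\in\Delta$ (unique by Fact \ref{fact:canonical-assignment-set}(2)), I can invoke $(\mathtt{A4})$ to replace each such $z$ by its value $c$ inside $\pt$ (resp.\ inside $t_1\equiv t_2$) without altering membership in $\Delta$. The resulting constant-only atomic formula lies in $\Delta$ iff it lies in $\Delta_0$ (by $\Delta=^{Cons}\Delta_0$), which by the definition of $\bI$ holds iff the corresponding tuple of equivalence classes $([c_1],\dots,[c_n])$ belongs to $\bI(P)$ (resp.\ iff $[c_1]=[c_2]$). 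Matching this against the semantic clause, where $t^{(\bI,\Delta)}$ equals $[c]$ precisely for the constant value $c$ of $t$, closes the atomic cases.

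For $K_zt$ I would split on the shape of $t$. If $t$ is a constant or is $z$ itself, then by Fact \ref{fact:provable-theorem}(1),(3) the formula is a theorem and hence in every $\Delta$, while semantically the value of $t$ is constant across each $\sim_z$-class, so both sides hold trivially. The remaining case $t=z'$ with $\{z,z'\}=\{x,y\}$ is the interesting one. For the direction $K_zz'\in\Delta\Rightarrow M^{\Delta_0},\Delta\models K_zz'$, I fix the constant $c$ with $z'\equiv c\in\Delta$, derive $K_z(z'\equiv c)\in\Delta$ from $(\mathtt{De}$-$\mathtt{Re}$-$\mathtt{Knowledge})$, and note that any $\Delta'\sim_z\Delta$ then contains $z'\equiv c$ (via the definition of $\sim_z$ and $(\mathtt{T})$), so $z'$ takes the same value $[c]$ throughout the class. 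The converse is exactly the contrapositive supplied by the Existence Lemma \ref{lemma:existence-lemma-Kxt}, which produces a $\sim_z$-related $\Delta'$ assigning $z'$ a different value whenever $\neg K_zz'\in\Delta$.

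Finally, the case $K_z\alpha$ with $\alpha\in\LB$ uses the definition $\Delta_1\sim_z\Delta_2$ iff $K_z\Delta_1=K_z\Delta_2$. For the direction $\Leftarrow$, if $K_z\alpha\in\Delta$ then $\alpha\in K_z\Delta=K_z\Delta'$ for every $\Delta'\sim_z\Delta$, whence $K_z\alpha\in\Delta'$ and, by $(\mathtt{T})$, $\alpha\in\Delta'$; the induction hypothesis gives $M^{\Delta_0},\Delta'\models\alpha$, so $M^{\Delta_0},\Delta\models K_z\alpha$. For $\Rightarrow$ I again argue contrapositively: if $K_z\alpha\notin\Delta$, the Existence Lemma \ref{lemma:existence-lemma-Kphi} yields $\Delta'\sim_z\Delta$ with $\alpha\notin\Delta'$, and the induction hypothesis refutes $\alpha$ at $\Delta'$. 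The main obstacle I anticipate is the bookkeeping in the atomic step, namely ensuring that the substitution of each variable by its unique constant value interacts correctly with $(\mathtt{A4})$ and the $=^{Cons}$ coherence of $\Sigma$; once that is handled, the knowledge cases reduce to standard canonical-model reasoning delegated to the two Existence Lemmas.
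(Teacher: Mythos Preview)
Your proposal is correct and follows essentially the same approach as the paper: an induction on $\varphi\in\L^{-}$ with the atomic cases handled by reducing terms to constants via $(\mathtt{At}$-$\mathtt{Some}$-$\mathtt{Where})$ and $(\mathtt{A4})$ together with $\Delta=^{Cons}\Delta_0$, the trivial sub-cases of $K_zt$ dispatched by Fact~\ref{fact:provable-theorem}, and the two substantive knowledge cases delegated to $(\mathtt{De}$-$\mathtt{Re}$-$\mathtt{Knowledge})$ plus the Existence Lemmas~\ref{lemma:existence-lemma-Kphi} and~\ref{lemma:existence-lemma-Kxt}. The only cosmetic difference is the order in which you apply $(\mathtt{A4})$ and the $=^{Cons}$-transfer in the atomic step, which is immaterial.
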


\begin{proof}
It goes by induction on formulas. The cases for Boolean connectives $\neg,\land$ are straightforward by  induction hypothesis, and we consider  others. 

\vspace{1mm}

(1) Formula $\varphi$ is  $P(t_1,\dots,t_n)$. Then, we have the following  equivalences:
\begin{center}
    \begin{tabular}{r@{\quad}c@{\quad}l}
      $M^{\Delta_0},\Delta\models \varphi$   & iff   &  $(t_1^{(\bI,\Delta)},\dots, t_n^{(\bI,\Delta)})\in \bI(P)$\\
         &  iff &  $([c_1],\dots,[c_n])\in\bI(P)$\\
       &  iff & $P(c_1,\dots,c_n)\in\Delta_0$\\
         &  iff & $P(c_1,\dots,c_n)\in\Delta$  \\
         &iff &$P(t_1,\dots,t_n)\in\Delta$
    \end{tabular}
\end{center}
In the second equivalence, those $c_i$ are constants, and when $t_i$ is a constant, $c_i$ can be $t_i$ itself, and when $t_i$ is a variable, we put $c_i:=c$ for some $c\in Cons$ s.t. $c\equiv t_i\in\Delta$ (due to  $(\mathtt{At}$-$\mathtt{Some}$-$\mathtt{Where})$, such a constant $c$ always exists). By Fact \ref{fact:term-equivalence} and $\Delta=^{Cons} \Delta_0$ (the latter follows from $\Delta\in\Sigma$), the fourth equivalence holds. The last one holds directly (if all those $t_i$ are constants) or holds by $(\mathtt{A4})$ (if some of those $t_i$  are variables).

\vspace{1mm}

(2) Formula $\varphi$ is $t_1\equiv t_2$. The reasoning is similar to the case above.

\vspace{1mm}

(3)  Formula $\varphi$ is $K_zt$. When $t$ is a constant or $z$, we  have $K_zt\in\Delta$ (Fact \ref{fact:provable-theorem}), and meanwhile, for the semantic aspect, it follows from the construction of canonical models that $M^{\Delta_0},\Delta\models K_zt$. We now move to the case that $t$ is the other variable $z'$. 

By Lemma \ref{lemma:existence-lemma-Kxt}, when $M^{\Delta_0},\Delta\models K_zt$,  $K_zt\in\Delta$.  For the other direction,  assume that $K_zt\in\Delta$,   $\Delta\sim_z\Delta'$,  $\Delta(z')=[c_1]$ and $\Delta'(z')=[c_2]$. We will show $[c_1]=[c_2]$, for which it suffices to prove that $c_1\equiv c_2\in\Delta'$. Now,  $c_1\equiv z'\in\Delta$ and $c_2\equiv z'\in\Delta'$. Given  $c_1\equiv z'\in\Delta$ and $K_zt\in\Delta$, using $(\mathtt{De}$-$\mathtt{Re}$-$\mathtt{Knowledge})$ we obtain $K_zz'\equiv c_1\in\Delta$. Since $\Delta\sim_z\Delta'$, it holds that $z'\equiv c_1\in\Delta'$. So, $c_1\equiv c_2\in\Delta'$.

\vspace{1mm}

(4) Finally, we consider the case that $\varphi$ is $K_z\alpha$.  We proceed as follows:
\begin{center}
    \begin{tabular}{r@{\quad}c@{\quad}l}
$M^{\Delta_0},\Delta\models K_z\alpha$   & iff  & for all
$\Delta'\in\Sigma$, if $\Delta\sim_z\Delta'$, then $M^{\Delta_0},\Delta'\models \alpha$  \\
& iff  &  for all $\Delta'\in\Sigma$, if $\Delta\sim_z\Delta'$, then $\alpha\in\Delta'$   \\
& iff  & $K_z\alpha\in\Delta$   
    \end{tabular}
\end{center}
The second equivalence holds by induction hypothesis. One direction of the last equivalence holds by Lemma \ref{lemma:existence-lemma-Kphi}, and the converse   holds by $K_z\alpha\in\Delta$ and $\Delta\sim_z\Delta'$.  
\end{proof}

 Finally we can show the following strong completeness result for  ${\bf ELCR}^{-}$:

 \vspace{1mm}

\begin{theorem}\label{theorem:completeness-static}
   For any set $\Gamma\subseteq\L^{-}$, if $\Gamma$ is consistent, then it is satisfiable. As a consequence, the static part is compact. 
\end{theorem}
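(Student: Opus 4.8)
The plan is to run a standard Henkin-style canonical model argument, since nearly all of the required machinery is already in place: the canonical model construction (Definition \ref{def:canonical-models}), the verification that the induced $M^{\Delta_0}$ is a $k$-sight model, and above all the Truth Lemma (Lemma \ref{lemma-truth-lemma}). The only genuinely new ingredient is a Lindenbaum-style extension lemma, together with a check that the maximal consistent set it produces carries exactly the data required by Definition \ref{def:canonical-models}.

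First I would establish the Lindenbaum lemma: every consistent $\Gamma\subseteq\L^{-}$ extends to a maximal consistent set $\Delta_0$. Since inconsistency is witnessed by a finite subset and the provability relation $\vdash$ is finitary, the usual stepwise construction applies — enumerate all formulas of $\L^{-}$ (equivalently, all formulas without dynamic operators, which is precisely what the notion of $MCS$ quantifies over) and at each stage add either $\varphi_i$ or $\neg\varphi_i$, whichever preserves consistency. The special rules $(\mathtt{Nec})$, $(\mathtt{K}\text{-}\mathtt{Additivity})$ and $(\mathtt{K}\text{-}\mathtt{Elimination})$ carry side conditions but do not interfere here, as all that is used is that the collection of consistent sets is closed under the finitary relation $\vdash$.

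Next I would verify that any such $\Delta_0$ meets the hypotheses of Definition \ref{def:canonical-models}. By $(\mathtt{At}\text{-}\mathtt{Some}\text{-}\mathtt{Where})$ and maximality there are constants $c_x,c_y$ with $x\equiv c_x,\,y\equiv c_y\in\Delta_0$. Setting $\T_y:=\{c\in Cons\mid \lr{K_x}y\equiv c\in\Delta_0\}$ and $\T_x:=\{c\in Cons\mid \lr{K_y}x\equiv c\in\Delta_0\}$, maximality gives, for each $c$, exactly one of $\lr{K_x}y\equiv c$ and $K_x\neg y\equiv c$ in $\Delta_0$ (these being negations of one another); hence every conjunct of $K_xy=\T_y$ lies in $\Delta_0$, and since $Cons$ is finite the conjunction $K_xy=\T_y$ itself is in $\Delta_0$, and likewise $K_yx=\T_x\in\Delta_0$. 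Thus the canonical model $M^{\Delta_0}$ is defined, is a $k$-sight model by the preceding theorem, and $\Delta_0\in\Sigma$ by construction.

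With these pieces assembled the conclusion is immediate: for every $\varphi\in\Gamma\subseteq\Delta_0$ the Truth Lemma yields $M^{\Delta_0},\Delta_0\models\varphi$, so $M^{\Delta_0},\Delta_0$ satisfies all of $\Gamma$ and $\Gamma$ is satisfiable. Compactness then follows in the usual way: if every finite subset of a set $\Gamma$ is satisfiable, then by soundness (Fact \ref{fact:soundness-static}) every finite subset is consistent, whence $\Gamma$ is consistent by the finitary definition of inconsistency, and therefore satisfiable by what has just been shown. The only real work sits in the third step — confirming that the abstractly obtained $MCS$ actually carries the tuple $(\T_y,\T_x,c_x,c_y)$ that Definition \ref{def:canonical-models} demands — but this reduces entirely to $(\mathtt{At}\text{-}\mathtt{Some}\text{-}\mathtt{Where})$ and maximality, so no serious obstacle remains.
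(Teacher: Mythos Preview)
Your proposal is correct and follows essentially the same route as the paper: extend $\Gamma$ to a maximal consistent set $\Delta_0$ via a Lindenbaum argument, form the canonical model $M^{\Delta_0}$, and invoke the Truth Lemma at the point $\Delta_0\in\Sigma$. You are more explicit than the paper in checking that $\Delta_0$ actually carries the data $(c_x,c_y,\T_x,\T_y)$ required by Definition~\ref{def:canonical-models} and in spelling out the compactness step, but these are elaborations the paper leaves implicit rather than a different approach.
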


\begin{proof}
Given a consistent set $\Gamma\subseteq\L^{-}$, by a  Lindenbaum-style argument \cite{modallogic}, we can  extend it  to some $\Delta_0\in MCS$. Then, there is a canonical model $M^{\Delta_0}$ induced by $\Delta_0$. By Lemma \ref{lemma-truth-lemma}, for any $\Delta$ from $M^{\Delta_0}$  and any $\varphi\in\L^-$,  
        $M^{\Delta_0},\Delta\models\varphi$ iff $\varphi\in \Delta$. Notice that $\Delta_0$ is also an assignment in  $M^{\Delta_0}$. Since $\Gamma\subseteq \Delta_0$,  $M^{\Delta_0},\Delta_0\models\Gamma$, as desired. 
\end{proof}

Moreover, we can show the following:

 \vspace{1mm}

\begin{corollary}\label{coro:decidability-static}
 $\ELCR^{-}$ is decidable.
\end{corollary}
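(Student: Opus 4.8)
The plan is to derive decidability from the completeness result (Theorem \ref{theorem:completeness-static}) together with a \emph{bounded model property} that is essentially built into the semantics. The crucial observation is that the vocabulary $Voc$ is fixed and finite; in particular $Cons$ is finite, and by Definition \ref{def:model-predicate} every value $s\in\bD$ must be named by some constant, so $|\bD|\le |Cons|$. Consequently $\Sigma\subseteq\bD^{Var}=\bD^2$ is also bounded in size, each $\bI(P)$ is a subset of the finite set $\bD^m$, and each $\sim_z$ is an equivalence relation on the finite set $\Sigma$. Since the truth of a formula is invariant under isomorphism, we may take $\bD$ to range over subsets of a fixed set of size $|Cons|$. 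It follows that, up to isomorphism, there are only \emph{finitely many} $k$-sight models over $Voc$.

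First I would make this finiteness effective by fixing a canonical presentation: let $\bD$ range over subsets of $\{1,\dots,|Cons|\}$ and enumerate all tuples $(\bD,\bI,\Sigma,\sim)$ over such $\bD$. This is a finite list, and the $k$-sight constraint of Definition \ref{def:k-sight-predicate-model} is a finite conjunction of conditions on the sets $\mathbb{D}^k(\cdot)$, hence effectively checkable, so one can effectively extract the sublist consisting of genuine $k$-sight models.

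Next I would describe the decision procedure for satisfiability of a given $\varphi\in\L^{-}$: for each $k$-sight model $M$ in this finite list and each situation $\sigma$ in its (finite) set $\Sigma$, run the model-checking procedure to test whether $M,\sigma\models\varphi$. Model checking for $\ELCR^{-}$ is decidable---indeed $\mathsf{P}$-complete, as shown in \cite{elcr}---so each test terminates, and there are only finitely many tests. By the completeness theorem (Theorem \ref{theorem:completeness-static}) together with soundness (Fact \ref{fact:soundness-static}), $\varphi$ is satisfiable iff it is consistent, which holds iff $\varphi$ is true at some situation of some $k$-sight model; since the procedure inspects all of them, it answers \emph{yes} exactly when $\varphi$ is satisfiable. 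Validity of $\varphi$ is then decided dually, by testing satisfiability of $\neg\varphi$.

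The only real obstacle is justifying the bounded model property, i.e.\ that restricting attention to models of size at most $|Cons|$ loses no satisfiable formulas. This is precisely what the naming condition in Definition \ref{def:model-predicate} secures, and it is in fact already witnessed by the completeness proof itself, whose canonical model $M^{\Delta_0}$ has domain $\bD=\{[c]\mid c\in Cons\}$ of size at most $|Cons|$ and $\Sigma\subseteq\bD^2$. An alternative, more abstract route avoids the explicit enumeration altogether: the calculus ${\bf ELCR}^{-}$ is recursively enumerable, so the set of valid formulas is r.e.; the bounded model property makes the set of satisfiable formulas (and hence the set of invalid formulas) r.e.\ as well; and an r.e.\ set whose complement is r.e.\ is decidable.
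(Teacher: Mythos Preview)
Your argument is correct, and your main route is actually more elementary than the paper's. The paper observes that the canonical model $M^{\Delta_0}$ is finite and then appeals to the classical combination \emph{finite model property $+$ finite axiomatizability $\Rightarrow$ decidability}; your alternative at the end is exactly this argument. Your primary route, by contrast, exploits the stronger \emph{bounded} model property that is hard-wired into Definition~\ref{def:model-predicate} (every value is named by some $c\in Cons$, so $|\bD|\le|Cons|$), whence up to isomorphism there are only finitely many pointed $k$-sight models and one can simply model-check $\varphi$ against all of them. This buys you a decision procedure that does not rely on the calculus at all; for that reason your invocation of soundness and completeness in the main argument is superfluous---satisfiability is by definition truth at some pointed $k$-sight model, so the detour through consistency adds nothing. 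One small point you should make explicit: if $Pred$ is not assumed finite, ``finitely many models'' is literally false, but the fix is immediate since only the finitely many predicate symbols occurring in $\varphi$ (plus $R$) affect its truth, so one enumerates finite structures over that finite sub-vocabulary.
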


\begin{proof}
 Notice that in $M^{\Delta_0}=({\bf D}, {\bf I}, \Sigma,\sim)$, both  ${\bD}$ and $\Sigma$ are finite, so a static formula that is satisfiable can be satisfied by a finite model. Moreover, given that  the fragment is finitely axiomatizable, we can obtain its decidability immediately. 
\end{proof}

\section{Axiomatization  of \texorpdfstring{$\ELCR$}{}}\label{sec:axiomatization-whole-logic}

So far, we have shown a complete calculus for the static base $\ELCR^{-}$, and this section is devoted to providing a complete calculus for $\ELCR$. To do so, it is enough to find an effective way to eliminate the occurrences of dynamic operators, like the techniques of {\em recursion axioms} developed for  $\mathsf{DEL}$.

First, we consider the following formulas that can reduce $\mathcal{L}_{\mathsf{BD}}$ to $\LB$:
\begin{align*}
  [z]\alpha \leftrightarrow &   \bigwedge_{\T\subseteq Cons}(Rz=\T\to \bigwedge_{c\in \T}\alpha[c/z]), \quad \textit{given $\alpha\in \LB$} \tag{$\mathtt{R1}$}\\
         [z]\neg \varphi  \leftrightarrow &   \neg [z]\varphi \tag{$\mathtt{R2}$}\\
      [z](\varphi\land\psi)  \leftrightarrow &     [z]\varphi\land[z]\psi \tag{$\mathtt{R3}$}
\end{align*}
Notice that there is always a set $\T\subseteq Cons$ making $Rz=\T$ true, and when $\alpha$ does not contain any occurrence of $z$, $(\mathtt{R1})$ amounts to  $[z]\alpha\leftrightarrow \alpha$.\footnote{Recall that we have $(\mathtt{Seriality})$ as an axiom of the static part, so it cannot be the case that $[z]\bot$.} Moreover,  different from the ordinary format of recursion axioms for $\mathsf{DEL}$, the $\alpha$ in $(\mathtt{R1})$ need not be an atom.

Although the syntax of $\L_{\mathsf{BD}}$ allows nested occurrences of dynamic operators, we can always start with the elimination of some innermost occurrence of a dynamic operator, as the case for  $\mathsf{DEL}$.

\vspace{1mm}

Next, we move to dealing with $[z]K_{z'}t$:
\begin{align*}
 [z]K_{z'}t \leftrightarrow &  \top, \quad \textit{where} \; t\in Cons\cup\{z'\} \tag{$\mathtt{R4}$}\\ 
 [z]K_zz'  \leftrightarrow &   K_zz'\lor     \\
  &\bigwedge_{\T,\T_1\subseteq Cons}(K_z z'=\T \land Rz=\T_1 \to \tag{$\mathtt{R5}$}\\
     & \bigwedge_{a\in \T_1}(\mathsf{D}^kaz' \lor  \bigwedge_{b\in \T}(b\not\equiv z'\to \mathsf{D}^k ab))), \quad  \textit{where}\; \{z,z'\}=\{x,y\}\\
 [z]K_{z'}z  \leftrightarrow & \bigwedge_{\T,\T_1\subseteq Cons} (Rz=\T\land K_{z'} z= \T_1\to ((\bigwedge_{t\in \T}\mathsf{D}^kz't)\lor\\
 &  \bigwedge_{t_1,t_2\in \T_1,\;t'_1,t'_2\in Cons} (Rt_1t'_1\land Rt_2t'_2\land \tag{$\mathtt{R6}$}\\
 & \neg\mathsf{D}^kz't'_1\land \neg\mathsf{D}^kz't'_2\to t'_1\equiv t'_2))),\quad  \textit{given that}\; \{z,z'\}=\{x,y\} 
\end{align*}

\noindent The formula $(\mathtt{R4})$  suggests that for any $z,z'\in\{x,y\}$, $[z]K_{z'}Cons\cup\{z'\}$ is always the case. 
 By $(\mathtt{R5})$, for different $z,z'$,  $[z]K_zz'$ holds if, and only if, one of the following holds:

 \vspace{1mm}

\begin{itemize}
    \item[$(i)$] $z$ has already known the position of $z'$ before the movement (i.e., $K_zz'$).

\vspace{1mm}
    
    \item[$(ii)$]  Given the possible positions $\T$ of $z'$ considered by $z$ and the set $\T_1$ of $\bR$-successors of $z$, for any possible new position $a$ of $z$, {\em either}  $z$ can see directly where $z'$ is {\em or} all other vertices different from the position of $z'$ can be observed directly by $z$.
\end{itemize}

 \vspace{1mm}

\noindent The last $(\mathtt{R6})$ means that for different $z$ and $z'$, when we assume that $Rz=\T$ and $K_{z'}z=\T_1$, $[z]K_{z'}z$ is the case if, and only if, one of the following is the case:

 \vspace{1mm}

\begin{itemize}
    \item[$(i)$] All those of $\T$ are in the sight of $z'$.

    \vspace{1mm}
    
    \item[$(ii)$] When $t_1,t_2\in \T_1$  have successors that are not in the sight of $z'$, all those successors not in the sight of $z$ are the same.
\end{itemize}

 \vspace{1mm}

Now  we proceed to tackle $[z]K_{z'}\alpha$.
In what follows, for any $\alpha\in\LB$ and $z\in\{x,y\}$, we use $\alpha(z)$ to highlight that $z$ does occur in $\alpha$. The details are as follows:
\begin{align*}
 [z] K_{z}\alpha \leftrightarrow& [z]\alpha, \quad \textit{given that $\alpha\in \LB$ does not contain the other variable} \tag{$\mathtt{R7}$}\\
 [z] K_{z'}\alpha \leftrightarrow&  \alpha, \quad \textit{given $\{z,z'\}=\{x,y\}$ and $\alpha\in \LB$ does not contain $z$}  \tag{$\mathtt{R8}$}\\
[z]K_{z}\alpha(z')\leftrightarrow &  \bigwedge_{\T,\T_1\subseteq Cons}(Rz=\T \land K_zz'=\T_1\to ((K_zz'\land \bigwedge_{c\in \T}\alpha[c/z])\lor \\
  & (\neg K_zz'\land \bigwedge_{t\in \T}((\mathsf{D}^ktz'\land \alpha[t/z])\lor \tag{$\mathtt{R9}$}\\ 
  &(\neg \mathsf{D}^ktz'\land \bigwedge_{c\in \T_1}(\neg \mathsf{D}^ktc\to  \alpha[t/z][c/z']))))),\\
  & \textit{given $\{z,z'\}=\{x,y\}$ and $\alpha\in \LB$} \\
[z]K_{z'}\alpha(z)\leftrightarrow & \bigwedge_{\T,\T_1\subseteq Cons}(Rz=\T\land K_{z'}z=\T_1 \to  (\bigwedge_{t\in \T}\alpha[t/z]) \land  ((\bigwedge_{t\in \T}\mathsf{D}^kz't)\lor \\
 &\bigwedge_{t_1\in \T_1,\;t_2\in Cons}(Rt_1t_2\land \neg \mathsf{D}^kz't_2\to \alpha[t_2/z]))), \tag{$\mathtt{R10}$}\\
 &\textit{given $\{z,z'\}=\{x,y\}$ and $\alpha\in \LB$}  
\end{align*}

\noindent  Among these, the  formulas $(\mathtt{R7})$ and $(\mathtt{R8})$ are simple, and principles $(\mathtt{R9})$ and $(\mathtt{R10})$ are their  complements, respectively, for the more complicated settings: one can check that the latter two amount to the former ones when $\alpha$ does not contain the corresponding variables. For the formula $(\mathtt{R9})$, we assume that $Rz=\T$ and $K_zz'=\T_1$, and the formula expresses that $[z]K_z\alpha(z')$ iff one of the following is the case:

 \vspace{1mm}

\begin{itemize}
\item[$(i)$] $z$ already knew the position of $z'$ {\em before the movement}, and {\em after any movement of $z$}, $\alpha$ is the case.

\vspace{1mm}

\item[$(ii)$]  $z$ did not know the position of $z'$ {\em before the movement}, but for any $t$ where $z$ can move to, {\em either} $z'$ can be observed from $t$ and $\alpha$ is the case, {\em or} $z'$ cannot be observed from $t$ but any unobservable possibility in $\T$ from $t$ makes  $\alpha$ true.  
\end{itemize}

 \vspace{1mm}

\noindent Finally, for the principle $(\mathtt{R10})$, we assume that $Rz=\T$ and $K_{z'}z=\T_1$, and it states that $[z]K_{z'}\alpha(z)$ is true iff $\bigwedge_{t\in \T}\alpha[t/z]$ and one of the following holds:

 \vspace{1mm}

\begin{itemize}
    \item[$(i)$] $z$ can only move into the observable range of $z'$.

    \vspace{1mm}
    
    \item[$(ii)$] For any possibility $t_1\in \T_1$ (considered by $z'$ before the movement),  any of its unobservable $\bR$-successor from $z'$ makes $\alpha$ true.
\end{itemize}

 \vspace{1mm}

We write ${\bf ELCR}$ for {\em the resulting calculus obtained by adding $(\mathtt{R1})$-$(\mathtt{R10})$ to the proof system ${\bf ELCR}^{-}$}, and generalize the usages of $\vdash$ to the setting of ${\bf ELCR}$. We show that ${\bf ELCR}$ is both  sound and complete.

 \vspace{1mm}

\begin{theorem}\label{theorem:soundness}
    The calculus ${\bf ELCR}$ is sound for $\ELCR$.
\end{theorem}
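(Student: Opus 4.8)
The plan is to prove soundness in the usual way, by checking that each axiom of ${\bf ELCR}$ is valid on the class of $k$-sight models and that each rule preserves validity. Because ${\bf ELCR}$ is obtained from ${\bf ELCR}^-$ by adding only the recursion schemata $(\mathtt{R1})$--$(\mathtt{R10})$, and every axiom and rule inherited from ${\bf ELCR}^-$ speaks only about static formulas---whose truth at a pointed model is unaffected by the presence of the dynamic operators in the enlarged language---their validity and validity-preservation carry over verbatim from Fact \ref{fact:soundness-static} (with Fact \ref{fact:trans-sym} underlying the two $K$-rules and $(\mathtt{MP})$ preserving validity for trivial reasons). So the entire burden is to verify the ten recursion equivalences, which I would organize from the easiest to the most delicate.

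$(\mathtt{R1})$ is immediate: it is exactly the object-language transcription of Fact \ref{fact:Boolean-connection}, whose statement already equates $[z]\alpha$ with the indicated conjunction over the successor sets $\mathsf{R}^z(\sigma_1)$. The Boolean reduction axioms $(\mathtt{R2})$ and $(\mathtt{R3})$ are read off the truth clause of $[z]$ in Definition \ref{def:updates}, using that (as for $\LB$ in Fact \ref{fact:Boolean-locality}) the truth of an $\mathcal{L}_{\mathsf{BD}}$-formula at a point does not depend on the updated components $\Sigma'$ and $\sim'$. The degenerate knowledge axiom $(\mathtt{R4})$ holds because the update keeps us inside the $k$-sight models, where each player always knows its own position and where constants are rigid, so $[z]K_{z'}t$ is valid for $t\in Cons\cup\{z'\}$. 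Likewise $(\mathtt{R7})$ and $(\mathtt{R8})$ are the special cases of the general reductions in which $\alpha$ omits the relevant variable, and reduce to $(\mathtt{R1})$ together with locality.

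The real content sits in $(\mathtt{R5})$, $(\mathtt{R6})$, $(\mathtt{R9})$ and $(\mathtt{R10})$, which encode how a single move reshapes the uncertainty range and the indistinguishability relation. For each I would fix $\sigma_1$, pass to an arbitrary $\sigma_2\in\mathsf{R}^z(\sigma_1)$, and split on the two clauses of Definition \ref{def:updates}: clause $(a)$, when the players see each other afterwards, collapses $\Sigma'$ to $\{\sigma_2\}$ and makes the relevant knowledge claim trivially true; clause $(b)$ sets $\Sigma'=\{\sigma'\in\mathsf{R}^z(\Sigma|\sigma_1)\mid \sigma'(x)\notin\mathbb{D}^k(\sigma'(y))\}$ with $\sim'$ given by clause $(c)$. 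Evaluating the outer knowledge operator in the updated model then means quantifying over those elements of $\Sigma'$ that agree with $\sigma_2$ on the value of the knowing agent; translating the conditions ``$\sigma'$ is a seen or unseen $\bR$-successor of a previously possible situation'' back into the language yields precisely the nested disjunctions over $\T$ and $\T_1$ on the right-hand sides, with $\mathsf{D}^k$ capturing the sight test (Fact \ref{fact:validities}) and the innermost Boolean evaluation supplied by Fact \ref{fact:Boolean-connection} and $(\mathtt{Knowledge}$-$\mathtt{Ground})$.

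The main obstacle will be matching $(\mathtt{R9})$ and $(\mathtt{R10})$ exactly, since there three things must be tracked at once: which update clause applies; the fact that $\Sigma'$ is built from $\mathsf{R}^z(\Sigma|\sigma_1)$ rather than $\mathsf{R}^z(\Sigma)$, which is what prevents a mover from forgetting positions it already knew (Remark \ref{remark:update}, Fact \ref{fact:movement-theother}); and the asymmetry between $[z]K_z\alpha(z')$, where the knower is the mover, and $[z]K_{z'}\alpha(z)$, where the knower is the opponent, which forces genuinely different case analyses. I would therefore carry out $(\mathtt{R5})$ in full detail as a template, verify $(\mathtt{R6})$, $(\mathtt{R9})$ and $(\mathtt{R10})$ by the same recipe, and relegate the remaining bookkeeping to the appendix.
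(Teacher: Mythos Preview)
Your proposal is correct and follows essentially the same route as the paper's own proof: both reduce to the static soundness of ${\bf ELCR}^-$ (Fact \ref{fact:soundness-static}) and then verify the recursion axioms $(\mathtt{R1})$--$(\mathtt{R10})$, treating $(\mathtt{R2})$--$(\mathtt{R4})$ and $(\mathtt{R7})$ as routine and concentrating the work on the knowledge-interaction schemata. The only cosmetic difference is in which hard cases are spelled out---the paper gives full details for $(\mathtt{R1})$, $(\mathtt{R5})$, $(\mathtt{R8})$ and $(\mathtt{R9})$ and declares $(\mathtt{R6})$, $(\mathtt{R10})$ analogous, whereas you propose $(\mathtt{R5})$ as the template---but the underlying case analysis (splitting on clauses $(a)$/$(b)$ of Definition \ref{def:updates}, invoking Fact \ref{fact:Boolean-connection} and Fact \ref{fact:movement-theother}) is identical.
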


\begin{proof}
    See Appendix \ref{sec:appendix-proof-for-soundness}.
\end{proof}

 \vspace{1mm}

\begin{theorem}\label{theorem:completeness}
   For any set $\Gamma\subseteq\L$, if $\Gamma$ is consistent, then it is satisfiable. As a consequence, $\ELCR$ is also compact. 
\end{theorem}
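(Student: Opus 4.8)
\emph{Strategy.}\; The plan is to reduce completeness of the full logic to the already-established completeness of its static base (Theorem~\ref{theorem:completeness-static}) by eliminating dynamic operators, exactly in the spirit of recursion axioms for $\mathsf{DEL}$. The heart of the argument is a \emph{reduction lemma}: for every $\varphi\in\L$ there is a static $\varphi^{\flat}\in\L^{-}$ with $\vdash\varphi\leftrightarrow\varphi^{\flat}$ in ${\bf ELCR}$. Granting this, I would argue as follows. Given a consistent $\Gamma\subseteq\L$, put $\Gamma^{\flat}=\{\varphi^{\flat}\mid\varphi\in\Gamma\}\subseteq\L^{-}$. If $\Gamma^{\flat}$ were ${\bf ELCR}^{-}$-inconsistent, some finite conjunction of its members would prove $\bot$; since ${\bf ELCR}$ extends ${\bf ELCR}^{-}$ and $\vdash\varphi_i\leftrightarrow\varphi_i^{\flat}$, the matching conjunction of members of $\Gamma$ would also prove $\bot$, contradicting consistency of $\Gamma$. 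Hence $\Gamma^{\flat}$ is ${\bf ELCR}^{-}$-consistent, so by Theorem~\ref{theorem:completeness-static} it is satisfied at some $\sigma$ in a $k$-sight model $M$. By soundness (Theorem~\ref{theorem:soundness}) every biconditional $\varphi\leftrightarrow\varphi^{\flat}$ is valid, whence $M,\sigma\models\Gamma$ and $\Gamma$ is satisfiable. Compactness then follows in the usual way: consistency is finitary, so if every finite subset of a set is satisfiable it is (by soundness) consistent, hence satisfiable by the strong completeness just proved.

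\emph{Proving the reduction lemma.}\; I would establish it by repeatedly eliminating an \emph{innermost} dynamic operator, i.e.\ an occurrence $[z]\theta$ whose argument $\theta$ already lies in $\L^{-}$. Using $(\mathtt{R2})$ and $(\mathtt{R3})$ one commutes $[z]$ past negation and conjunction, pushing it inward until its scope is one of the atomic shapes handled by $(\mathtt{R1})$ (for $\theta\in\LB$) or by $(\mathtt{R4})$--$(\mathtt{R10})$ (for $\theta$ of the form $K_{z'}t$ or $K_{z'}\alpha$); each such axiom rewrites $[z]\theta$ into a genuinely static formula. Two ingredients make this work: first, a well-founded \emph{complexity measure} on $\L$ that strictly decreases under every left-to-right application of $(\mathtt{R1})$--$(\mathtt{R10})$, guaranteeing termination; second, a \emph{replacement-of-provable-equivalents} principle allowing a reduced subformula to be substituted back in place inside an arbitrary context.

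\emph{Main obstacle.}\; The delicate points are precisely these two ingredients. The measure cannot be raw formula length, because the right-hand sides of $(\mathtt{R1})$, $(\mathtt{R9})$ and $(\mathtt{R10})$ are typically longer than the left-hand sides and introduce substitution instances such as $\alpha[c/z]$ and $\alpha[t/z][c/z']$ of the Boolean kernel $\alpha$; the usual $\mathsf{DEL}$ recursion measures must be adapted to weigh each dynamic operator by the modal depth of its scope, using crucially that these substitutions land inside $\LB$ and so create no new dynamic operators. The genuinely subtle issue, however, is replacement \emph{under the knowledge operators}. Since necessitation and the $\mathtt{K}$-axiom are available only for $\LB$-arguments, one cannot invoke a generic congruence rule to replace a dynamic subformula sitting directly under a $K_z$ (as in $K_z[z']\theta$, a legitimate $\L$-formula with no surrounding dynamic operator) by its static reduct. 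I expect this to be the crux: rather than assuming congruence for $K_z$, one must \emph{bootstrap} the needed equivalences, proving $\vdash K_z\psi\leftrightarrow K_z\psi^{\flat}$ directly by induction on $\psi\in\L_{\mathsf{BD}}$, leaning on the fact that $\psi^{\flat}\in\LB$ together with the Boolean-level behaviour of $K_z$ and the reduction axioms, and only afterwards lifting replacement to all contexts. For the $[z']$-modality itself congruence is cheaper, since $(\mathtt{R2})$ and $(\mathtt{R3})$ already render $[z']$ self-dual and conjunction-distributing, so in-context replacement follows by routine normal-modal reasoning once necessitation for $[z']$ is in hand.
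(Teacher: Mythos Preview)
Your approach is exactly the paper's: reduce every $\varphi\in\L$ to a provably equivalent static formula via $(\mathtt{R1})$--$(\mathtt{R10})$, transfer consistency to $\L^{-}$, invoke Theorem~\ref{theorem:completeness-static}, and close with soundness. The paper's proof is in fact much terser than yours---it simply asserts the reduction lemma in one sentence and does not discuss termination or the replacement-of-equivalents issue under $K_z$ that you flag---so the subtlety you worry about is real, but the paper does not engage with it either.
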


\begin{proof}
For any $\varphi\in \Gamma$, by  $(\mathtt{R1})$-$(\mathtt{R10})$, there is always a static $\psi\in\L^{-}$ such that $\vdash \varphi\leftrightarrow \psi$. Based on this, there is a set $\Gamma'\subseteq \L^{-}$ that is provably equivalent to $\Gamma$.  $\Gamma$ is consistent, so is $\Gamma'$. Then, by  Theorem \ref{theorem:completeness-static}, $\Gamma'$ is satisfiable. By the soundness (Theorem \ref{theorem:soundness}), $\Gamma$ is satisfiable, as desired. 
\end{proof}

Finally, the arguments in the proof above suggest the following:

 \vspace{1mm}

\begin{corollary}\label{coro:decidability-whole-logic}
$\ELCR$ is decidable.
\end{corollary}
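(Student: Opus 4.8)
The plan is to reduce the decision problem for $\ELCR$ to that of its static fragment $\ELCR^-$, whose decidability is already established in Corollary \ref{coro:decidability-static}. The reduction principles $(\mathtt{R1})$--$(\mathtt{R10})$, together with replacement of provably equivalent subformulas, induce a translation $t\colon\L\to\L^-$; by the soundness of ${\bf ELCR}$ (Theorem \ref{theorem:soundness}) each rewriting step preserves validity, so $\models\varphi\leftrightarrow t(\varphi)$ with $t(\varphi)\in\L^-$. Thus, provided $t$ is effectively computable, one decides satisfiability (equivalently validity) of an arbitrary $\varphi\in\L$ by computing $t(\varphi)$ and then invoking the decision procedure for $\ELCR^-$. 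The entire content of the corollary therefore lies in showing that the rewriting procedure terminates.

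First I would fix the rewriting strategy. Given $\varphi\in\L$ containing a dynamic operator, I select an \emph{innermost} occurrence $[z]\chi$, so that $\chi$ is dynamic-free. According to the shape of $\chi$, I apply exactly one principle: $(\mathtt{R2})$ and $(\mathtt{R3})$ push $[z]$ through $\neg$ and $\land$; $(\mathtt{R1})$ eliminates $[z]$ when $\chi\in\LB$; $(\mathtt{R4})$--$(\mathtt{R6})$ eliminate it when $\chi$ is $K_{z'}t$; and $(\mathtt{R7})$--$(\mathtt{R10})$ eliminate it when $\chi$ is $K_{z'}\alpha$ with $\alpha\in\LB$. In every case the right-hand side of the chosen principle is again dynamic-free, and replacing $[z]\chi$ by it inside $\varphi$ either removes a dynamic operator outright or drives it strictly inward toward the atoms and knowledge atoms.

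The crux --- and the step I expect to be the main obstacle --- is a well-founded complexity measure witnessing termination. I would define $c\colon\L\to\mathbb{N}$ inductively, strictly monotone in each argument, say $c(\text{atom})=1$, $c(\neg\varphi)=1+c(\varphi)$, $c(\varphi\land\psi)=1+c(\varphi)+c(\psi)$, $c(K_z\varphi)=1+c(\varphi)$, and crucially $c([z]\varphi)=f\cdot c(\varphi)$ for a constant $f$ fixed below. Strict monotonicity guarantees that replacing any subformula by one of smaller $c$-value strictly lowers $c$ of the whole, so it suffices to check that each principle decreases $c$. For $(\mathtt{R2})$ and $(\mathtt{R3})$ this reduces to $f>1$, since, e.g., $c([z](\psi_1\land\psi_2))=f\,(1+c(\psi_1)+c(\psi_2))$ exceeds $c([z]\psi_1\land[z]\psi_2)=1+f\,c(\psi_1)+f\,c(\psi_2)$ precisely when $f>1$. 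The delicate cases are $(\mathtt{R1})$ and $(\mathtt{R4})$--$(\mathtt{R10})$, whose right-hand sides blow the formula up by a conjunction ranging over the subsets $\T\subseteq Cons$: the size increases, but only by a factor $N$ bounded by a fixed function of the \emph{finite, fixed} vocabulary (roughly $2^{|Cons|}\cdot|Cons|$). Since the left-hand side contributes $f\cdot c(\chi)$ while the right-hand side contributes at most $N\cdot(\mathrm{const}+c(\chi))$, choosing $f$ larger than this vocabulary-bounded $N$ makes the left-hand side dominate for every $\chi$. I would verify the analogous inequalities for $(\mathtt{R4})$--$(\mathtt{R10})$ with the same $N$, enlarging $f$ if necessary.

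With such an $f$ fixed, every rewriting step strictly decreases $c\in\mathbb{N}$, so the procedure halts after finitely many steps at a formula $t(\varphi)\in\L^-$; this shows $t$ is computable. Composing $t$ with the decision procedure for $\ELCR^-$ from Corollary \ref{coro:decidability-static} yields a decision procedure for $\ELCR$, establishing the claim. The only genuinely technical work is the per-principle inequality check for the measure; everything else is bookkeeping, and the finiteness of $Cons$ is exactly what renders the blow-up in $(\mathtt{R1})$ and $(\mathtt{R4})$--$(\mathtt{R10})$ harmless.
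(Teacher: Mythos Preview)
Your proposal is correct and follows essentially the same approach as the paper: reduce every $\L$-formula to an equivalent $\L^-$-formula via the reduction principles $(\mathtt{R1})$--$(\mathtt{R10})$ and then invoke Corollary~\ref{coro:decidability-static}. The paper's own proof is a two-line appeal to the completeness argument (Theorem~\ref{theorem:completeness}) and leaves termination of the rewriting implicit; your complexity-measure argument makes this step explicit, which is a welcome addition rather than a different route.
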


\begin{proof}
By the proof for the completeness of {\bf ELCR}, any formula of $\L$ is equivalent to a static formula of $\L^{-}$. Now it follows from Corollary \ref{coro:decidability-static} that  $\ELCR$ is also decidable.
\end{proof}

So, different from the undecidable proposal in \cite{LHS-journal} for the perfect information game of Hide and Seek, we now have a decidable framework for the imperfect information version of the game.\footnote{As stated in \cite{LHS-journal}, the culprit of the undecidability of that logic  is the equality. Our language also contains the symbol, and one reason for the  decidability of $\ELCR$ is that we confine ourselves to the finite setting. But it is important to notice that generalizing $\ELCR$ into infinite case does not necessarily lead us to an undecidable framework. For more discussion on when the equality is dangerous, see \cite{hlhs-axiomatization}. } 

\section{Related approaches} \label{sec:related}

There are many other logical milestones  in the realm of knowledge dynamics, especially the paradigm of  $\mathsf{DEL}$. 
 A $\mathsf{DEL}$ approach to Cops and Robbers  is discussed in \cite{johan-del-game}, and we will formalize  its key points 
and  apply the ideas 
to the same Example \ref{ex1}. This would form a visual comparison between the two  frameworks,  indicating the succinctness of our proposal.   In addition, this section also offers an  overview on the recent logical   frameworks developed for games played on graphs and examines logics addressing various dependencies that are technically relevant to our work. 

\subsection{\texorpdfstring{$\mathsf{DEL}$}{}-approach to the game}\label{sec:del}

 While exploring $\ELCR$,  
one may wonder  whether 
some appropriate modifications to the techniques developed for $\mathsf{DEL}$,  e.g., {\em product update} \cite{BMS},  may apply to the game under consideration. As stated  above, some  such ideas are sketched in \cite{johan-del-game},  and  we now  formalize  them  with some necessary modifications. 
In this part, to avoid digressing too far, we will omit many basics of $\mathsf{DEL}$ and  discuss the key points  in a concise manner. 

  \vspace{1mm}

  Our main focus is on the product updates of {\em epistemic models} and {\em event models}. The former are exactly our $k$-sight models, and for the latter, the game graphs themselves serve as an important parameter, but with a different reading now: edges  are now moves from a vertex to another.
 \vspace{1mm}

 \begin{definition}\label{def:event-model}
     Let $M=(\bD,\bI,\Sigma,\sim)$ be a $k$-sight model and $\sigma\in\Sigma$. There are {\em event model $E^{x}_{\sigma}$ for $x$ associated to $\sigma$} and {\em event model $E^{y}_{\sigma}$ for $y$ associated to $\sigma$}, which are analogous. We just define the former, and skip that  of the latter. Details are as follows:

 \vspace{1mm}

     \begin{center}
         $E_{\sigma}^{x}=(\bI(R),\approx, \{pre_{(s,t)}\mid (s,t)\in\bI(R)\}, \{post_{(s,t)}\mid (s,t)\in\bI(R)\})$, \noindent where
     \end{center}

 \vspace{1mm}

\begin{itemize}
    \item[$\bullet$] For each $(s,t)\in\bI(R)$, its {\em pre-condition} $pre_{(s,t)}$ is that $x$ is at $s$, and its {\em post-condition} $post_{(s,t)}$ is that $x$ is at $t$, which together mean the change of the position of $x$ from $s$ to $t$ caused by the action. 

\vspace{1mm}
    
  \item[$\bullet$]  The {\em indistinguishability  relation $\approx_x$  for $x$} is the {\em identity relation} on the pairs $\bI(R)$, since $x$ always knows which movement she is taking.

\vspace{1mm}

 \item[$\bullet$] The case for the {\em indistinguishability  relation $\approx_y$  for $y$} is more complicated, since we need to take the $k$-sight ability into account, which determines whether or not $y$ knows the action of $x$:

\vspace{1mm}
 
    \begin{itemize}
 \item[$\bullet$] For those $(s,t)\in\bI(R)$  with $s,t\in \mathbb{D}^k(\sigma(y))$,   $\approx_y$ is the identity relation.

 \item[$\bullet$] For those $(s,t)\in\bI(R)$  with $s\in \mathbb{D}^k(\sigma(y))$ and  $t\not\in \mathbb{D}^k(\sigma(y))$,

\begin{center}
 $(s,t)\approx_y(s',t')$\quad iff \quad  $s=s'$  and $t'\not\in  \mathbb{D}^k(\sigma(y))$.
\end{center}

 \item[$\bullet$] For those $(s,t)$ with $s\not\in \mathbb{D}^k(\sigma(y))$ and  $t\in \mathbb{D}^k(\sigma(y))$,

        \begin{center}
 $(s,t)\approx_y(s',t')$\quad iff \quad  $t=t'$  and $s'\not\in  \mathbb{D}^k(\sigma(y))$.
\end{center}

 \item[$\bullet$] For those  $(s,t)$ with $s\not\in \mathbb{D}^k(\sigma(y))$ and  $t\not\in \mathbb{D}^k(\sigma(y))$,

\begin{center}
 $(s,t) \approx_y (s',t')$\quad iff \quad  $s,t,s',t'\not\in  \mathbb{D}^k\sigma(y))$.
\end{center}
    \end{itemize}
\end{itemize}  
 \end{definition}

  \vspace{1mm}

 \begin{definition}\label{def:product}
     Let  $M=(\bD,\bI,\Sigma,\sim)$ be a $k$-sight model, $\sigma\in\Sigma$ and $E_{\sigma}^{x}$ be the event model associated to $\sigma$. The {\em product update} $(M,\sigma)\times E^{x}_{\sigma}$ is a new $k$-sight model $(\bD,\bI,\Sigma',\sim')$ defined as follows:

 \vspace{1mm}

\begin{itemize}
\item[$\bullet$] $\Sigma'=\{(\sigma_1,x,(s,t))\mid \sigma_1(x)=s,\; \sigma_1\in\Sigma \}$, where $(\sigma_1,x,(s,t))$ is a new situation such that 
          $(\sigma_1,x,(s,t))(x)=t$ and $(\sigma_1,x,(s,t))(y)=\sigma_1(y)$.

\vspace{1mm}

       \item[$\bullet$] For any variable $z\in Var$ and situations $(\sigma_1, x, (s_1,t_1)), (\sigma_2, x, (s_2,t_2))\in\Sigma'$,   $(\sigma_1, x, (s_1,t_1))\sim'_z (\sigma_2, x, (s_2,t_2))$  if, and only if,

\vspace{1mm}
       
       \begin{itemize}
           \item[$(i)$] $\sigma_1\sim_z \sigma_2$, $(s_1,t_1)\approx_z (s_2,t_2)$.

\vspace{1mm}
           
\item[$(ii)$] If $\sigma_1(x)\in\mathbb{D}^k(\sigma_1(y))$, then $\sigma_1=\sigma_2$, otherwise $\sigma_2(x)\not\in\mathbb{D}^k(\sigma_2(y))$.
 \end{itemize} 
\end{itemize}
 \end{definition}

 \vspace{1mm}

In the definition above, the clause $(i)$ is the usual way to obtain the indistinguishability relations in product updates, and the clause $(ii)$ is an additional requirement that  ensures that the resulting models are still $k$-sight models.\footnote{As the case in \cite{johan-del-game}, one can also use two types of event models: one for movements, which can be obtained by removing the clause $(ii)$; and the other for ``inspection'', which can make the players know whether or not they are in the sight of each other. Then the desired outcome can be obtained by the product of the original epistemic model, the event model for movements and the event model for inspection (in this order). But here we mix the two types of event models for convenience.} 
Now we can formally re-analyze Example \ref{ex1} with the $\mathsf{DEL}$-approach specified above. Details are given in Figure \ref{fig:example-del}.  We add the game structure below to remind the readers.

  \vspace{1mm}
  
\begin{center}
\begin{tikzpicture}
\node(0)[circle,draw,inner sep=0pt,minimum size=5mm] at (0,0)[label=above:$\X$] {$0$};
\node(1)[circle,draw,inner sep=0pt,minimum size=5mm] at (1.5,.7) {$1$};
\node(2)[circle,draw,inner sep=0pt,minimum size=5mm] at (3,.7) {$2$};
\node(3) [circle,draw,inner sep=0pt,minimum size=5mm] at (4.5,0){$3$};
\node(4) [circle,draw,inner sep=0pt,minimum size=5mm] at (3,-.7)[label=right:$\Y$]{$4$};
\node(5) [circle,draw,inner sep=0pt,minimum size=5mm] at (1.5,-.7){$5$};
 
\draw[->] (0) to  (1);
\draw[->] (1) to  (2);
\draw[->](2) to  (3);
\draw[->](3) to  (4);
\draw[->](4) to  (5);
\draw[->](5) to  (0);
\draw[<->](2) to  (4);
\draw[->](3) to  [in=330, out=30,looseness=4] (3);
\end{tikzpicture}
\end{center}

  \vspace{1mm}

As mentioned in Figure \ref{fig:example-del}, each of the four layers is an epistemic model.
Notice that there is a striking analogy between the $\ELCR$-based analyses and the discussion here: in each layer with the class of situations $\Sigma$ and the actual situation $(s_1,s_2)$, the {\em part} $\Sigma|(s_1,s_2)$ corresponds to {\em complete}
situations in the $\ELCR$ setting. 
Although  the resulting sets of situations based on the $\ELCR$-updates  may increase (or decrease) in size, every  such situation  
makes sense for the analyses of the game.  We do not need to pay attention to the irrelevant situations.  Following this viewpoint we claim that $\ELCR$ is a more succinct proposal. 



\begin{figure} 
\begin{center}
 
     \begin{tikzpicture}[global scale = .9]
\node({04})[] at (0,0) {$\underline{(0,4)}$};
\node({03})[] at (3,0) {$(0,3)$};
\node({02})[] at (6,0) {$(0,2)$};
\node({14})[] at (-3,0) {$(1,4)$};

   \draw[-,dotted](04) to node [below] {$x$}  (03);
      \draw[-,dotted](03) to node [below] {$x$}  (02);
         \draw[-,dotted](04) to node [below] {$y$}  (14);

\node({a14})[] at (0,-1.5) {$\underline{(1,4)}$};
\node({a13})[] at (3,-1.5) {$(1,3)$};
\node({a12})[] at (6,-1.5) {$(1,2)$};
\node({a24})[] at (-3,-1.5) {$(2,4)$};

   \draw[->](04) to (a14);
 \draw[->](03) to (a13);
\draw[->] (02) to (a12);
  \draw[->] (14) to (a24);

     \draw[-,dotted](a14) to node [below] {$x$}  (a13);

\node({b15})[] at (1.5,-3) {$\underline{(1,5)}$};
\node({b12})[] at (0,-3) {$(1,2)$};
\node({b14})[] at (3,-3) {$(1,4)$};
\node({b13})[] at (6,-3) {$(1,3)$};
\node({b22})[] at (-3,-3) {$(2,2)$};
\node({b25})[] at (-1.5,-3) {$(2,5)$}; 

 \draw[->](a14) to (b15);
 \draw[->](a14) to (b12);
 \draw[->](a13) to (b13);
 \draw[->](a13) to (b14);
 \draw[->](a12) to (b13);
 \draw[->](a24) to (b22);
 \draw[->](a24) to (b25);

 \draw[-,dotted](b15) to node [below] {$x$}  (b14);
  \draw[-,dotted](b14) to node [below] {$x$}  (b13);

\node({c23})[] at (7.5,-4.5) {$(2,3)$};
\node({c24})[] at (4.5,-4.5) {$(2,4)$};
  \node({c25})[] at (3,-4.5) {$\underline{(2,5)}$};
\node({c22})[] at (1.5,-4.5) {$(2,2)$};

\node({c35})[] at (0,-4.5) {$(3,5)$}; 

\node({c45})[] at (-1.5,-4.5) {$(4,5)$};

\node({c32})[] at (-3,-4.5) {$(3,2)$}; 

\node({c42})[] at (-4.5,-4.5) {$(4,2)$};

 \draw[->](b22) to (c42);
  \draw[->](b22) to (c32);
   \draw[->](b25) to (c45);
  \draw[->](b25) to (c35);
   \draw[->](b12) to (c22);
    \draw[->](b15) to (c25);
     \draw[->](b14) to (c24);
      \draw[->](b13) to (c23);
\end{tikzpicture}
 
\end{center}
    \caption{Example analyses: A $\mathsf{DEL}$-approach. We use the dotted links labeled with $x$ and $y$ to represent the indistinguishability relations of the two players (we omit the self-loops and transitive links). Also, there are four layers  connected with solid arrows. Each of the  layers is an epistemic model, and for simplicity, we just draw the corresponding classes of situations and the indistinguishability relations, and highlight the actual situations with underlines: the first layer is the original epistemic model $M$, the second one is $M_1=(M,(0,4))\times E_{(0,1)}^x$, the third is  $M_2=(M_1,(1,4))\times E_{(4,5)}^y$, and the last layer is  $(M_2,(1,5))\times E_{(1,2)}^x$. Finally, the solid arrows indicate how the classes of situations in the latter stages come from the previous stages.}
    \label{fig:example-del}
\end{figure}
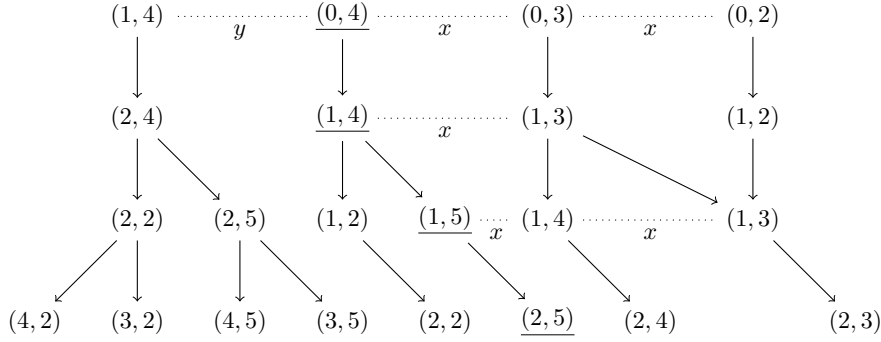

 \subsection{Other perspectives}\label{sec:related-works}
 
 As stated  earlier, this work is an extension of \cite{elcr}.
 The game and its various variants  have been well-studied in the field of computer science, with algorithmic and combinatorial perspectives (see e.g., \cite{cop-robber-book,cop-robber}).  Complementary to these approaches, subsequent studies from  the logical aspects have been  conducted in recent years for the perfect information version of the game, also termed as Hide and Seek in the literature. The first logical  proposal in this direction is  defined in \cite{graphgame}, which constitutes a broad program that promotes the
study of graph game design in tandem with matching new modal logics. Afterwards, this is studied extensively in \cite{lhs,LHS-journal,Chenqian2023}, involving its expressive power at the levels of models and frames, computational behavior, axiomatization and its connections with other relevant paradigms like {\em product logics with the diagonal constant} (e.g., \cite{kikot-non-finitely-axiomatizable, hybrid-K-delta-K}).  More recently, the original logic for the game has been extended with formulas from {\em hybrid logic} (e.g., \cite{hybrid-language}) in \cite{hlhs-axiomatization}, which are important to establish a complete Hilbert-style proof system for the resulting logic. In addition, based on the approach of substitutions \cite{Johan-fixedpoint2021}, \cite{substitution}  develops a logical proposal that is crucial to capture the winning positions of players in  a natural infinite setting.

 This work and the series of logical studies on Hide and Seek belong to the broader exploration on the interaction between graph game and logic, a trend starting from {\em sabotage games} \cite{original-sg,sabotage,sabotagelori} and its matching {\em sabotage modal logic}. Different from Cops and Robbers in which the game graphs are fixed, in each round of a sabotage game, a blocker tries to stop the other player from moving to a given goal region by removing a link from the graph. Many variants of sabotage games have been studied from a logical perspective \cite{Rohde-thesis,learning,dazhu-jolli}.\footnote{There are many further works on the sabotage modal logic. In \cite{hybrid-sml}, the logic is axiomatized in a broader setting with hybrid formulas, and \cite{lileibisimulation} offers an upper bound of the complexity to determine the notion of bisimulation for the logic. Also, in addition to the sabotage modal logic that contains an operator to delete links, there is a class of {\em relation-changing logics} that contain operators to swap and add links \cite{changeoperator4,changeoperator2,Penghao2023}.}  Also, \cite{linkdeletion} studies the graph games with  a particular policy of link deletion  that is performed  under certain conditions that can be expressed explicitly in a given language.  Distinct from link modifications, \cite{poisonlogic,poisonargumentation} develop logics to capture the so-called {\em poison games} \cite{poison-game}, in which a player can poison a node, to make it unavailable to the opponent. The logics of poison games are further explored in \cite{poison-penghao}, involving their axiomatization and computational behavior.  In addition, \cite{Declan} studies a dynamic logic of local fact changes that captures a class of graph games in which properties of vertices might be affected by other vertices. All these  studies on graph games are about the settings in which players have perfect information. Based on a $\DEL$-approach,
\cite{johan-del-game} analyzes some imperfect information versions, involving  the sabotage games and the game of Hide and Seek. 
For more on this topic, we refer to \cite{graphgame} for a broad research program, \cite{Dazhu-thesis} for extensive references to modal logics for graph games, and \cite{graph-game-book} for the latest developments of this area.

Finally, the design of the static fragment of our language is  
 inspired  
by the works on different sorts of dependencies: {\em the logic of epistemic dependency} 
\cite{knowing-value}  and {\em the logic of functional dependency}
\cite{lfd}.  The latter is about the dependencies between variables, and  contains formulas of the form $D_xy$, expressing that fixing the value of variable $x$ would determine the value of variable $y$. More  relevantly,  
the logical language of the former contains formulas $K_ax$ expressing that the agent $a$ knows the value of $x$, where $a$ is an index for agents and $x$ is a variable of the object language.  In contrast, our work introduces formulas of the form $K_xy$, meaning $x$ knows the location of $y$, with both $x$ and $y$ as variables. 
Moreover,
\cite{knowing-value}  explores the dynamic scenarios induced by  public announcement operators that are  central to 
$\mathsf{DEL}$, 
our focus is on agent movements, essential to the game of Cops and Robbers.
Combining these two approaches to dynamics -- public announcements and movement-based updates -- would certainly be an interesting direction for further study.

\section{Conclusion and future work}\label{sec:conclusion}

\noindent{\it Summary.}\; To study the game of Cops and Robbers  with uncertainty among players,  this paper provided a formal framework $\ELCR$ to capture 
players' reasoning about knowledge and actions. As illustrated, many validities of the logic characterized  natural assumptions on the game.  Axiomatization and decidability  of the static version $\ELCR^{-}$ and the full dynamic logic $\ELCR$ were explored.  From a broader perspective, we showed that the ideas underlying the design of $\ELCR$ can be easily adjusted to fit other important variants of the game and provided a formal connection between $\ELCR$ and the corresponding setting with simultaneous moves of players. 
In addition, we  formalized the ideas in the literature that advocate using the $\mathsf{DEL}$-approach  for studying the game.  In process, we set the stage for proving a succinctness result with respect to 
our update mechanism.   

 \vspace{1mm}

\noindent{\it Future work.}\; Several further directions have been identified in the article. In addition, there are other promising directions to explore. For example, 
there is extensive literature on the complexity of different versions of the Cops and Robbers game \cite{cop-robber-book}, and we intend to do the same for the one introduced in our work and its different variants, and in process, design 
efficient algorithms to construct winning strategies of players.
On the logic side, an immediate  extension would be to provide a complete Hilbert-style proof system for the logic designed for the simultaneous movements.  This could be obtained by adapting the recursion axioms provided in Section \ref{sec:axiomatization-whole-logic}.  Also, although Section \ref{sec:del} illustrates the succinctness of $\ELCR$, it remains to be examined how much more succinct it is than the $\DEL$-approach. 
Another important direction is to study other logical properties of $\ELCR$, including its expressiveness and frame correspondence.  Moreover, it is crucial to study the setting involving higher-order knowledge,  
especially the cases that players have limited abilities to reason about each other's knowledge 
\cite{diversity}.
Finally,  from the games perspective, graph games with imperfect information warrant a more detailed study, which could be facilitated by usage of logic tools developed in this work.  

 \vspace{3mm}

\noindent{\bf{Acknowledgements.}}\; 
This research was inspired by a question from Alexandru Baltag when the logic of the hide and seek game was presented at the January 2022 workshop `Exploring Baltag's Universe'. We thank Alexandru Baltag, Johan van Benthem, Davide Grossi, and Katsuhiko Sano for their valuable feedback, as well as the editors of the LORI special issue and the anonymous referees for their helpful comments.
Dazhu Li is supported by the National Social Science Foundation of China [22CZX063]. Sujata Ghosh acknowledges financial support from the Department of Science and Technology, Government of India (Ref. No. DST/CSRI/2018/202, CSRI). 
Fenrong Liu is supported by the Tsinghua University Initiative Scientific Research Program.

\bibliographystyle{plain}
\DeclareRobustCommand{\VAN}[3]{#3}
\bibliography{sn-bibliography}

\begin{thebibliography}{10}

\bibitem{changeoperator2}
C.~Areces, R.~Fervari, G.~Hoffmann, and M.~Martel.
\newblock Satisfiability for relation-changing logics.
\newblock {\em Journal of Logic and Computation}, 28:1443--1470, 2018.

\bibitem{sabotagelori}
G.~Aucher, J.~{van}~Benthem, and D.~Grossi.
\newblock Sabotage modal logic: Some model and proof theoretic aspects.
\newblock In W.~van~der Hoek, W.~Holliday, and W.~Wang, editors, {\em Proceedings of {LORI} 2015}, volume 9394 of {\em LNCS}, pages 1--13, Heidelberg, 2015. Springer.

\bibitem{sabotage}
G.~Aucher, J.~{van}~Benthem, and D.~Grossi.
\newblock Modal logics of sabotage revisited.
\newblock {\em Journal of Logic and Computation}, 28:269--303, 2018.

\bibitem{knowing-value}
A.~Baltag.
\newblock To know is to know the value of a variable.
\newblock In L.~Beklemishev, S.~Demri, and A.~M\'{a}t\'e, editors, {\em Advances in Modal Logic 11}, page 135–155, London, 2016. College Publications.

\bibitem{diffusion}
A.~Baltag, Z.~Christoff, R.~K. Rendsvig, and S.~Smets.
\newblock Dynamic epistemic logics of diffusion and prediction in social networks.
\newblock {\em Studia Logica}, 107:489--531, 2019.

\bibitem{dazhu-jolli}
A.~Baltag, D.~Li, and M.~Y. Pedersen.
\newblock A modal logic for supervised learning.
\newblock {\em Journal of Logic, Language and Information}, 31:213--234, 2022.

\bibitem{BMS}
A.~Baltag, L.~S. Moss, and S.~Solecki.
\newblock The logic of public announcements, common knowledge, and private suspicions.
\newblock In I.~Gilboa, editor, {\em Proceedings of TARK 1998}, pages 43--56, 1998.

\bibitem{lfd}
A.~Baltag and J.~van Benthem.
\newblock A simple logic of functional dependence.
\newblock {\em Journal of Philosophical Logic}, 50:939--1005, 2021.

\bibitem{original-sg}
J.~{\VAN{Benthem}{Van}{van}}~Benthem.
\newblock An essay on sabotage and obstruction.
\newblock In D.~Hutter and W.~Stephan, editors, {\em Mechanizing Mathematical Reasoning}, volume 2605 of {\em LNCS}, pages 268--276. Springer, Heidelberg, 2005.

\bibitem{DEL-johan-book}
J.~{\VAN{Benthem}{Van}{van}}~Benthem.
\newblock {\em Logical Dynamics of Information and Interaction}.
\newblock Cambridge University Press, Cambridge, 2011.

\bibitem{Johan-fixedpoint2021}
J.~{\VAN{Benthem}{Van}{van}}~Benthem.
\newblock An abstract look at the fixed-point theorem for provability logic.
\newblock In N.~Bezhanishvili, R.~Iemhoff, and F.~Yang, editors, {\em Dick de Jongh on Intuitionistic and Provability Logics}, volume~28 of {\em Outstanding Contributions to Logic}, pages 75--88. Springer, Cham, 2024.

\bibitem{johan-del-game}
J.~{\VAN{Benthem}{Van}{van}}~Benthem.
\newblock Dynamic-epistemic logic for games.
\newblock In J.~{van}~Benthem and F.~Liu, editors, {\em Graph Games and Logic Design: Recent Developments and Future Directions \em{(To Appear)}}. Springer, Cham, 2025.

\bibitem{hybrid-sml}
J.~{\VAN{Benthem}{Van}{van}}~Benthem, L.~Li, C.~Shi, and H.~Yin.
\newblock Hybrid sabotage modal logic.
\newblock {\em Journal of Logic and Computation}, 33:1216–1242, 2023.

\bibitem{graphgame}
J.~{\VAN{Benthem}{Van}{van}}~Benthem and F.~Liu.
\newblock Graph games and logic design.
\newblock In F.~Liu, H.~Ono, and J.~Yu, editors, {\em Knowledge, Proof and Dynamics}, Logic in {{Asia}}: {{Studia Logica Library}}, pages 125--146, Singapore, 2020. Springer.

\bibitem{graph-game-book}
J.~{\VAN{Benthem}{Van}{van}}~Benthem and F.~Liu, editors.
\newblock {\em Graph Games and Logic Design: Recent Developments and Future Directions \em{(To Appear)}}.
\newblock Springer, Cham, 2025.

\bibitem{modallogic}
P.~Blackburn, M.~de~Rijke, and Y.~Venema.
\newblock {\em Modal Logic}.
\newblock Cambridge University Press, Cambridge, UK, 2001.

\bibitem{hybrid-language}
P.~Blackburn and J.~Seligman.
\newblock Hybrid languages.
\newblock {\em Journal of Logic, Language and Information}, 4:251--272, 1995.

\bibitem{poisonlogic}
F.~Z. Blando, K.~Mierzewski, and C.~Areces.
\newblock The modal logics of the poison game.
\newblock In F.~Liu, H.~Ono, and J.~Yu, editors, {\em Knowledge, Proof and Dynamics}, Logic in {{Asia}}: {{Studia Logica Library}}, pages 3--23, Singapore, 2020. Springer.

\bibitem{cop-robber-book}
A.~Bonato and R.~J. Nowakowski.
\newblock {\em The Game of Cops and Robbers on Graphs}, volume~61 of {\em The Student Mathematical Library}.
\newblock AMS, Providence, 2011.

\bibitem{Chenqian2023}
Q.~Chen and D.~Li.
\newblock Logic of the hide and seek game: Characterization, axiomatization, decidability.
\newblock In N.~Gierasimczuk and F.~R. Vel\'{a}zquez-Quesada, editors, {\em Proceedings of DaL\'{l} 2023}, volume 14401 of {\em LNCS}, pages 20--34, Cham, 2024. Springer.

\bibitem{DEL-book}
H.~{\VAN{Ditmarsch}{Van}{van}}~Ditmarsch, W.~{van} {der}~Hoek, and B.~Kooi.
\newblock {\em Dynamic Epistemic Logic}.
\newblock Springer, Dordrecht, 2008.

\bibitem{Penghao2023}
P.~Du and Q.~Chen.
\newblock Axiomatization of hybrid logic of link variations.
\newblock In N.~Gierasimczuk and F.~R. Vel\'{a}zquez-Quesada, editors, {\em Proceedings of DaL\'{l} 2023}, volume 14401 of {\em LNCS}, pages 35--51, 2024.

\bibitem{poison-penghao}
P.~Du, F.~Liu, and D.~Li.
\newblock Modal logics for the poison game: Axiomatization and undecidability.
\newblock Manuscript, 2025.

\bibitem{poison-game}
P.~Duchet and H.~Meyniel.
\newblock Kernels in directed graphs: A poison game.
\newblock {\em Discrete Mathematics}, 115:273--276, 1993.

\bibitem{reaoning-about-knowledge}
R.~Fagin, J.~Halpern, Y.~Moses, and M.~Vardi.
\newblock {\em Reasoning about Knowledge}.
\newblock The MIT Press, Cambridge, 1995.

\bibitem{changeoperator4}
R.~Fervari.
\newblock {\em Relation-Changing Modal Logics}.
\newblock PhD thesis, Universidad Nacional de C\'ordoba, Argentina, 2014.

\bibitem{lileibisimulation}
S.~Ghosh, S.~Gupta, and L.~Li.
\newblock Bisimulation in model-changing modal logics: An algorithmic study.
\newblock {\em Journal of Logic and Computation}, 34:399–427, 2024.

\bibitem{learning}
N.~Gierasimczuk, L.~Kurzen, and F.~Vel{\'a}zquez-Quesada.
\newblock Learning and teaching as a game: A sabotage approach.
\newblock In X.~He, J.~Horty, and E.~Pacuit, editors, {\em Proceedings of {LORI} 2009}, volume 5834 of {\em LNCS}, pages 119--132, Heidelberg, 2009. Springer.

\bibitem{poisonargumentation}
D.~Grossi and S.~Rey.
\newblock Credulous acceptability, poison games and modal logic.
\newblock In N.~Agmon, M.~E. Taylor, E.~Elkind, and M.~Veloso, editors, {\em Proceedings of {AAMAS} 2019}, page 1994–1996, 2019.

\bibitem{Davide-short-sight}
D.~Grossi and P.~Turrini.
\newblock Short sight in extensive games.
\newblock In {\em AAMAS '12: Proceedings of the 11th International Conference on Autonomous Agents and Multiagent Systems - Volume 2}, page 805–812, 2012.

\bibitem{kikot-non-finitely-axiomatizable}
S.~P. Kikot.
\newblock Axiomatization of modal logic squares with distinguished diagonal.
\newblock {\em Mathematical Notes}, 88:238--250, 2010.

\bibitem{linkdeletion}
D.~Li.
\newblock Losing connection: The modal logic of definable link deletion.
\newblock {\em Journal of Logic and Computation}, 30:715--743, 2020.

\bibitem{Dazhu-thesis}
D.~Li.
\newblock {\em Formal Threads in the Social Fabric: Studies in the Logical Dynamics of Multi-Agent Interaction}.
\newblock PhD thesis, Department of Philosophy, Tsinghua University and ILLC, University of Amsterdam, 2021.

\bibitem{elcr}
D.~Li, S.~Ghosh, and F.~Liu.
\newblock Knowing is winning: An epistemic approach to the hide and seek game.
\newblock In J.~{van}~Benthem and F.~Liu, editors, {\em Graph Games and Logic Design: Recent Developments and Future Directions \em{(To Appear)}}. Springer, Cham, 2025.

\bibitem{lhs}
D.~Li, S.~Ghosh, F.~Liu, and Y.~Tu.
\newblock On the subtle nature of a simple logic of the hide and seek game.
\newblock In A.~Silva, R.~Wassermann, and R.~de~Queiroz, editors, {\em Proceedings of WoLLIC 2021}, volume 13038 of {\em LNCS}, pages 201--218, Cham, 2021. Springer.

\bibitem{LHS-journal}
D.~Li, S.~Ghosh, F.~Liu, and Y.~Tu.
\newblock A simple logic of the hide and seek game.
\newblock {\em Studia Logica}, 111:821--853, 2023.

\bibitem{hybrid-K-delta-K}
D.~Li and K.~Sano.
\newblock Finite axiomatization for the hybrid product logics with diagonal constant.
\newblock Manuscript, 2025.

\bibitem{LLS}
C.~Liu, F.~Liu, and K.~Su.
\newblock A dynamic-logical characterization of solutions in sight-limited extensive games.
\newblock In Q.~Chen, P.~Torroni, S.~Villata, J.~Hsu, and A.~Omicini, editors, {\em PRIMA 2015: Principles and Practice of Multi-Agent Systems}, volume 9387 of {\em LNCS}, pages 467--480, 2015.

\bibitem{diversity}
F.~Liu.
\newblock Diversity of agents and their interaction.
\newblock {\em Journal of Logic, Language and Information}, 18:23--53, 2009.

\bibitem{cop-robber}
R.~Nowakowski and P.~Winkler.
\newblock Vertex-to-vertex pursuit in a graph.
\newblock {\em Discrete Math}, 13:235--239, 1983.

\bibitem{pal}
J.~A. Plaza.
\newblock Logics of public communications.
\newblock {\em Synthese}, 158:165--179, 2007.

\bibitem{gamesinformation}
E.~Rasmusen.
\newblock {\em Games and Information: An Introduction to Game Theory}.
\newblock Wiley-Blackwell, 4 edition, 2006.

\bibitem{Rohde-thesis}
P.~Rohde.
\newblock {\em On Games and Logics over Dynamically Changing Structure}.
\newblock PhD thesis, RWTH Aachen University, 2005.

\bibitem{hlhs-axiomatization}
K.~Sano, F.~Liu, and D.~Li.
\newblock Hybrid logic of the hide and seek game.
\newblock To appear in {\em Studia Logica}, 2024.

\bibitem{Declan}
D.~Thompson.
\newblock Local fact change logic.
\newblock In F.~Liu, H.~Ono, and J.~Yu, editors, {\em Knowledge, Proof and Dynamics}, Logic in {{Asia}}: {{Studia Logica Library}}, pages 73--96, Singapore, 2020. Springer.

\bibitem{substitution}
Y.~Tu, S.~Ghosh, F.~Liu, and D.~Li.
\newblock A modal approach towards substitutions.
\newblock Manuscript, 2025.

\bibitem{dependence-logic-book}
J.~V{\"a}{\"a}n{\"a}nen.
\newblock {\em Dependence Logic: A New Approach to Independence Friendly Logic}.
\newblock Cambridge University Press, Cambridge, 2007.

\end{thebibliography}

\begin{appendices}

\section{Proof for Fact \ref{fact:soundness-static}}\label{sec:appendix-proof-for-soundness-static}

\begin{proof}


We will show the validity of axiom  $(\mathtt{Knowledge}$-$\mathtt{Ground})$, and prove that both $(\mathtt{K}$-$\mathtt{Additivity})$ and $(\mathtt{K}$-$\mathtt{Elimination})$  preserve validity. The others are left as exercises to the reader. Let $M=(\bD,\bI,\Sigma,\sim)$ be a $k$-sight model and $\sigma\in\Sigma$.

 \vspace{1mm}


 



 (1) First, we show $(\mathtt{Knowledge}$-$\mathtt{Ground})$ is valid. W.l.o.g., let $z:=x$ and $z':=y$. Assume that $M,\sigma\models K_xy=\T$, where $\T\subseteq Cons$. The proof for  the case that $\alpha$ does not contain any occurrence of $y$ is easy. Assume that the formula does contain $y$. 

 \vspace{1mm}

 (1.1) Assume that $M,\sigma\not\models \bigwedge_{c\in \T}\alpha[c/y]$. Then,  $M,\sigma\not\models \alpha[c/y]$ for some $c\in \T$. Since $c\in \T$, it follows from $M,\sigma\models K_xy=\T$ that there is some $\sigma'\in\Sigma$ s.t. $\sigma\sim_x\sigma'$ and $\sigma'(y)=\bI(c)$. By $M,\sigma\not\models \alpha[c/y]$,  $M,\sigma'\not\models \alpha$, which then gives us $M,\sigma\not\models K_x\alpha$. 

 \vspace{1mm}

 (1.2) Suppose that $M,\sigma\not\models K_x\alpha$. Then, there is a situation $\sigma'\in\Sigma$ such that $\sigma\sim_x\sigma'$ and $M,\sigma'\not\models\alpha$. By the axiom $(\mathtt{At}$-$\mathtt{Some}$-$\mathtt{Where})$, there is some $c'\in Cons$ such that $M,\sigma'\models c_1\equiv y$. Clearly, $c_1\in \T$. Since $M,\sigma'\not\models\alpha$, it holds that $M,\sigma'\not\models\alpha[c'/y]$. Recall that $\sigma\sim_x\sigma'$, $\sigma(x)=\sigma'(x)$ (due to the $0\le k$-sight ability), so  $M,\sigma\not\models\alpha[c'/y]$, which then shows that  $M,\sigma\not\models\bigwedge_{c\in \T}\alpha[c/y]$, as needed.

\vspace{1mm}

(2) We move to $(\mathtt{K}$-$\mathtt{Additivity})$. Assume that $\varphi\to\alpha$ is valid on any $k$-sight models based on $(\bD,\bI)$.  Let $M=(\bD,\bI,\Sigma,\sim)$ be a $k$-sight model and $\sigma,\sigma'\in\Sigma$ with $\sigma\sim_z\sigma'$ and $M,\sigma\models\varphi$. By Fact \ref{fact:trans-sym}, $M,\sigma'\models\varphi$. So,  $M,\sigma'\models\alpha$, which indicates  $M,\sigma\models K_z\alpha$. 

\vspace{1mm}

 (3) Finally, we prove that $(\mathtt{K}$-$\mathtt{Elimination})$ preserves the validity of formulas. Assume that $\varphi\to (K_z\alpha\to \beta)$ is valid on any $k$-sight models based on $(\bD,\bI)$. W.l.o.g., let $z:=x$ and $z':=y$.  Also, suppose that for some  $M=(\bD,\bI,\Sigma,\sim)$ and $\sigma\in\Sigma$, it holds that $M,\sigma\models\varphi\land \alpha\land\neg\beta$. Now, define $M'=(\bD,\bI,\Sigma,\sim')$  such that for any $\sigma_1,\sigma_2\in\Sigma$, $\sigma_1\sim'_x\sigma_2$ iff $\sigma_1=\sigma_2$, and for the other variable $y$, $\sigma_1\sim'_{y}\sigma_2$ iff  $\sigma_1\sim_{y}\sigma_2$. Now, by the form of $\varphi$, it is easy to see that $M',\sigma\models\varphi$. Also, by Fact \ref{fact:Boolean-locality}, $M',\sigma\models \alpha\land\neg\beta$. So, $M',\sigma\models K_x\alpha\land\neg\beta$, a contradiction.
\end{proof}

\section{Proof for Theorem \ref{theorem:soundness}}\label{sec:appendix-proof-for-soundness}


\begin{proof}
    Based on Fact \ref{fact:soundness-static}, it suffices to show the validity of the new axioms $(\mathtt{R1})$-$(\mathtt{R10})$. Let $M=(\bD,\bI,\Sigma,\sim)$ be a $k$-sight model and $\sigma_1\in\Sigma$. The validity of $(\mathtt{R2})$, $(\mathtt{R3})$, $(\mathtt{R4})$ and $(\mathtt{R7})$ is easy to see. In what follows, we consider   $(\mathtt{R1})$, $(\mathtt{R5})$, $(\mathtt{R8})$ and $(\mathtt{R10})$, and the key ideas of the proofs for $(\mathtt{R6})$ and $(\mathtt{R9})$ are similar to those of $(\mathtt{R10})$ and $(\mathtt{R5})$. 

\vspace{1mm}

(1) We consider $(\mathtt{R1})$. We have the following:
\begin{center}
    \begin{tabular}{rcl}
     $M,\sigma_1\models [z]\alpha$    &\quad   iff   &\quad    For all $\sigma_2\in \mathsf{R}^{z}(\sigma_1)$, $(\bD, \bI, \Sigma', \sim'), \sigma_2\models \alpha$  \\
&\quad   iff   &\quad   $M,\sigma_1\models \bigwedge_{\T\subseteq Cons}(Rz=\T\to \bigwedge_{c\in \T}\alpha[c/z])$
    \end{tabular}
\end{center}
The last equivalence holds by Fact \ref{fact:Boolean-connection}, as needed.

\vspace{1mm}





 (2) We consider  the formula $(\mathtt{R5})$. W.l.o.g., let $z:=x$ and $z':=y$. Assume that $\T,\T_1\subseteq Cons$ such that $M,\sigma_1\models K_xy=\T \land Rx=\T_1$.

 First, let  $M,\sigma_1\models [x]K_xy$. Then,  for all $\sigma_2\in\mathsf{R}^{x}(\sigma_1)$, $M',\sigma_2\models K_xy$. Suppose towards a contradiction that the following is not the case:
\begin{center}
   $  K_xy\lor   
 \bigwedge_{a\in \T_1}(\mathsf{D}^kay \lor  \bigwedge_{b\in \T}(b\not\equiv y\to \mathsf{D}^k ab))$
\end{center}
\noindent  Since  $M,\sigma_1\models\neg K_xy$, the $\T$ cannot be a set of constants that refer to the same vertex. Also, there are  constants $a\in \T_1$ and $b\in \T$ s.t. $M,\sigma_1\models \neg \mathsf{D}^kay \land b\not\equiv y\land  \neg \mathsf{D}^kab$. Let $\sigma:=\sigma_1[y:=b]$. Since $b\in \T$, we have $\sigma\in \Sigma$ and $\sigma\sim_x\sigma_1$, and so $\sigma\in \Sigma|\sigma_1$. Moreover, let $M,\sigma_1\models y\equiv c$ for some $c\in Cons$. Such a constant always exists (the axiom  $(\mathtt{At}$-$\mathtt{Some}$-$\mathtt{Where})$), and due to the fact $\sim_x$ is an equivalence relation, $c\in \T$. Assume that $x$ moves to $a$ and we write $\sigma_2$ for the new situation (i.e., $\sigma_2:=\sigma_1[x:=a]$). Consider $\sigma':=\sigma[x:=a]$. Obviously, $\sigma'\in \mathsf{R}^{x}(\Sigma|\sigma_1)$, and it follows from $M,\sigma_1\models \neg \mathsf{D}^kay$ that $\sigma'(y)\not\in\mathbb{D}^{k}(\sigma_2)$. So, $\sigma'\in\Sigma'$. Now, $M',\sigma\models \lr{K_x}y\equiv b\land y\equiv c\land b\not\equiv c$, a contradiction.

\vspace{1mm}

For the converse, assume that $M,\sigma_1\not\models [x]K_xy$, i.e., $M',\sigma_2\models \neg K_xy$ for some  $\sigma_2\in\mathsf{R}^{x}(\sigma_1)$. Then there is a $\sigma'\in\Sigma'$ s.t. $\sigma_2(x)=\sigma'(x)$ and $\sigma_2(y)\not=\sigma'(y)$. We consider the previous stage $\sigma:=\sigma'[x:=\sigma_3(x)]$ of $\sigma'$, where $\sigma_3\in \Sigma|\sigma_1$ (so $\sigma$ is $\sigma_3$).  Now,  at least one of $\sigma\sim_x\sigma_1$ and $\sigma\sim_y\sigma_1$ must be the case. Notice that $\sigma(y)=\sigma'(y)$ and $\sigma_1(y)=\sigma_2(y)$. Since $\sigma_2(y)\not=\sigma'(y)$, we have $\sigma\not\sim_y\sigma_1$ (due to the $0\le k$-sight ability), which then gives us 
 $\sigma\sim_x\sigma_1$ and so $M,\sigma_1\not\models K_xy$. Recall the assumption that $M,\sigma_1\models K_x y=\T\land Rx=\T_1$. Moreover, assume that $a$ is a constant with  $M',\sigma_2\models x\equiv a$. By $(\mathtt{At}$-$\mathtt{Some}$-$\mathtt{Where})$, such a constant always exists, and so $a\in A$. Also, let $b$ be a constant such that $M,\sigma\models y\equiv b$. Then, we can see that $b\in \T$. Now, we have $M,\sigma_1\models \neg \mathsf{D}^kay \land b\not\equiv y\land \neg \mathsf{D}^kab$, as needed.
 
 \vspace{1mm}

(3) We  move to $(\mathtt{R8})$. Note that for any $\sigma_2\in\mathsf{R}^{z}(\sigma_1)$, $\sigma_1(z')=\sigma_2(z')$, where $z'$ is the variable distinct from $z$. By Fact \ref{fact:Boolean-locality},   $M,\sigma_1\models \alpha$ iff $M',\sigma_2\models \alpha$. Since $\alpha$ does not contain $z'$, $M',\sigma_2\models \alpha$ iff $M',\sigma_2\models K_z\alpha$. Now,   $M,\sigma_1\models[z]K_{z'}\alpha$ iff $M,\sigma_1\models \alpha$.

\vspace{1mm}

(4) Let us  consider $(\mathtt{R9})$. Let $z:=x$ and $z':=y$. Also, let $\T,\T_1\subseteq Cons$ such that  $M,\sigma_1\models Rx=\T \land K_xy=\T_1$.

\vspace{1mm}

(4.1) For the direction from left to right,  assume that $M,\sigma_1\models [x]K_x\alpha(y)$, i.e., for any  $\sigma_2\in\mathsf{R}^{x}(\sigma_1)$, $M',\sigma_2\models K_x\alpha(y)$. Suppose  that the following is not the case:
\begin{align*}
 & (K_xy\land \bigwedge_{c\in \T}\alpha[c/x])\lor  \\
&(\neg K_xy\land \bigwedge_{t\in \T}((\mathsf{D}^kty\land \alpha[t/x])\lor  
  (\neg \mathsf{D}^kty\land \bigwedge_{c\in \T_1}(\neg \mathsf{D}^ktc\to  \alpha[t/x][c/y]))))   
\end{align*}

 (4.1.1) We first assume that $M,\sigma_1\models K_xy$. Then, by the first disjunct, there is some $c\in \T$ such that $M,\sigma_1\not\models \alpha[c/x]$.  But by Fact \ref{fact:Boolean-connection}, we arrive at a contradiction.
 
 \vspace{1mm}

 (4.1.2) Assume that $M,\sigma_1\not\models K_xy$. Then, there is a constant $c\in \T$ such that exactly one of the following is the case:
\begin{itemize}
    \item[$(a)$] $\mathsf{D}^kcy$ and $\neg \alpha[c/x]$
    \item[$(b)$] $\neg \mathsf{D}^kcy$, and there is some $c_1\in \T_1$ such that $\neg \mathsf{D}^kcc_1$ and $\neg\alpha[c/x][c_1/y]$.
\end{itemize}
We fix an assignment $\sigma':=\sigma_1[x:=\bI(c)]$.
When $(a)$ is the case,  by definition,  we have $\Sigma'=\{\sigma'\}$ in the pointed model $M',\sigma'$. Then, it follows from $\neg \alpha[c/x]$ that $M',\sigma'\not\models \alpha$, which  gives us $M',\sigma'\not\models K_x\alpha$, a contradiction. Let us assume that $(b)$ is the case. Since there is some $c_1\in \T_1$ with $\neg \mathsf{D}^kcc_1$, for the assignment $\sigma_3:=\sigma_1[y:=\bI(c_1)]$ and $\sigma'_3:=\sigma_3[x:=\bI(c)]$, we have $\sigma_3\sim_x\sigma_1$ and $\sigma'_3 \in\Sigma'$. It holds that $\sigma'\sim'_x\sigma'_3$. Also, $M',\sigma'_3\models \neg\alpha[c/x][c_1/y]$, and so $M',\sigma'\not\models K_x\alpha[c/x]$, which  gives us $M',\sigma'\not\models K_x\alpha$.

\vspace{1mm}

(4.2) We consider the other direction. Suppose that $M,\sigma_1\not\models [x]K_x\alpha(y)$, i.e., there is some  $\sigma_2\in\mathsf{R}^{x}(\sigma_1)$ with $M',\sigma_2\not\models K_x\alpha(y)$. We discuss different cases.

\vspace{1mm}

(4.2.1) We assume that $M,\sigma_1\models K_xy$. Then,  $M',\sigma_2\models K_xy$ (Fact \ref{fact:movement-theother}). Given this, it follows from $M',\sigma_2\not\models K_x\alpha(y)$ that $M',\sigma_2\not\models \alpha(y)$. By Fact \ref{fact:Boolean-connection}, $M,\sigma_1\not\models \bigwedge_{c\in \T}\alpha[c/x]$.

\vspace{1mm}

(4.2.2) Assume $M,\sigma_1\not\models K_xy$. Then, there is an assignment $\sigma\in\Sigma$ s.t. $\sigma_1\sim_x\sigma$ and $\sigma_1(y)\not=\sigma(y)$. Let $\bI(c_1)=\sigma(y)$  and $\bI(c)=\sigma_2(x)$. Clearly, $c_1\in \T_1$ and $c\in \T$. There are two different cases. We first suppose that $\sigma_2(x)\in\mathbb{D}^{k}(\bI(c_1))$. Then, the $\Sigma'$ in $M'$ is  $\{\sigma_2\}$, and the fact $M',\sigma_2\not\models K_x\alpha(y)$ indicates $M',\sigma_2\not\models \alpha$, and so,  $M',\sigma_2\not\models \alpha[c/x]$.  Next,   assume $\sigma_2(x)\not\in\mathbb{D}^{k}(\bI(c_1))$. Since $M',\sigma_2\not\models K_x\alpha(y)$, there is $\sigma'_3\in\Sigma'$ s.t. $\sigma_2\sim_x\sigma'_3$ and $M',\sigma'_3\not\models \alpha(y)$. Let $c_2,c_3\in Cons$ s.t. $M',\sigma'_3 \models c_2\equiv y\land c_3\equiv x$. Clearly, $c_2\in \T_1$ and $c_3\in \T$. It follows from $\sigma'_3\in\Sigma'$ and $\sigma_2(x)\not\in\mathbb{D}^{k}(\bI(c_1))$ that $\sigma'_3(x)\not\in\mathbb{D}^{k}(\sigma'_3(y))$, i.e., $M',\sigma'_3\models \neg \mathsf{D}^{k}c_3c_2$. Since $M',\sigma'_3\not\models \alpha(y)$,  $M',\sigma'_3\not\models \alpha[c_3/x][c_2/y]$. 

\vspace{1mm}

Putting (4.2.1) and (4.2.2) together, we can get the negation of the right part of the equivalence $(\mathtt{R9})$, as needed. 
\end{proof}

\end{appendices}

\end{document}